\documentclass{elsarticle}
\usepackage{a4wide,amsmath,amssymb,graphics,epsfig}
 \allowdisplaybreaks[2]
\newcommand{\hatA}[0]{\hat{A}}
\newcommand{\mons}[0]{{merinthons\ }}
\newcommand{\monsp}[0]{{merinthons}}

\newcommand{\mon}[0]{{merinthon\ }}
\newcommand{\monp}[0]{{merinthon}}

\newcommand{\Mon}[0]{{Merinthon\ }}

\newcommand{\spa}[0]{\phantom{spa}}
\newcommand{\tr}[0]{\textnormal{tr}}

\newtheorem{theorem}{Theorem}[section]
\newtheorem{lemma}[theorem]{Lemma}

\newenvironment{proof}[1][Proof]{\begin{trivlist}
\item[\hskip \labelsep {\bfseries #1}]}{\qed\end{trivlist}}

\newenvironment{remark}[1][]{\begin{list}{\textbf{Remark}}{\setlength{\leftmargin}{1cm}}
\item  #1}{\end{list}}

\usepackage{grffile}
\begin{document}

\begin{frontmatter}
\title{The Static Quark Potential from the Gauge Independent Abelian Decomposition}

\author[a]{Nigel Cundy}
\author[b,c]{Y. M. Cho}
\author[a]{Weonjong Lee}
\author[a]{Jaehoon Leem}
\address[a]{    Lattice Gauge Theory Research Center, FPRD, and CTP, Department of Physics \&
    Astronomy,\\ Seoul National University, Seoul, 151-747, South Korea}
\address[b]{Administration Building 310-4, Konkuk University,
Seoul 143-701, Korea}
\address[c]{ Department of Physics \&
    Astronomy,\\ Seoul National University, Seoul, 151-747, South Korea}


\begin{abstract}
We investigate the relationship between colour confinement and the gauge independent Cho-Duan-Ge Abelian decomposition.
 The decomposition is defined in terms of a colour field $n$; the principle novelty of our study is that we have used a unique definition of this field in terms of the eigenvectors of the Wilson Loop. This allows us to establish an equivalence between the path ordered integral of the non-Abelian gauge fields with an integral over an Abelian restricted gauge field which is tractable both theoretically and numerically in lattice QCD. We circumvent path ordering without requiring an additional path integral. By using Stokes' theorem, we can compute the Wilson Loop in terms of a surface integral over a restricted field strength, and show that the restricted field strength may be dominated by certain structures, which occur when one of the quantities parametrising the colour field $n$ winds itself around a non-analyticity in the colour field. If they exist, these structures will lead to a area law scaling for the Wilson Loop and provide a mechanism for quark confinement. Unlike most studies of confinement using the Abelian decomposition, we do not rely on a dual-Meissner effect to create the inter-quark potential.

We search for these structures in quenched lattice QCD. We perform the Abelian decomposition, and compare the electric and magnetic fields with the patterns expected theoretically. We find that the restricted field strength is dominated by objects which may be  peaks a single lattice spacing in size or extended string-like lines of electromagnetic flux. The objects are not isolated monopoles, as they generate electric fields in addition to magnetic fields, and the fields are not spherically symmetric, but may be either caused by a monopole/anti-monopole condensate, some other types of topological objects or a combination of these. Removing these peaks removes the area law scaling of the string tension, suggesting that they are responsible for confinement.
\end{abstract}
\begin{keyword}
Quantum chromodynamics \sep Lattice gauge theory \sep Confinement of Quarks
\PACS  12.38.-t \sep 12.38.Aw \sep 11.15.Ha
\end{keyword}
\end{frontmatter}
\section{Introduction}
An enduring problem in QCD is to find the mechanism which causes quark confinement, which is known to be non-perturbative in its origin. Although several models have been proposed -- for example, center vortices~\cite{Vortices}, and a dual Meissner effect due to magnetic monopoles~\cite{Monopoles1,Monopoles2,Mandelstam:1976,thooft:1976} -- there has not yet been a convincing demonstration that any of them are correct.
In this initial study, we investigate the Cho-Duan-Ge (CDG) Abelian decomposition (sometimes referred to as the Cho-Faddeev-Niemi decomposition)~\cite{Cho:1980,Cho:1981,F-N:98,Shabanov:1999,Duan:1979}, and concentrate on one of the signals of confinement, a linear static quark potential. Unlike Dirac and 't Hooft (Maximum Abelian Gauge) monopoles, the CDG decomposition allows for monopole solutions while respecting the gauge symmetry and does not require a singular gauge field or an additional Higgs field. The decomposition is constructed from a colour field, $n$, which may be built from a SU($N_C$) matrix $\theta$, where SU($N_C$) is the gauge group of QCD. Each $n$ defines a different decomposition, and an important question is the best way of choosing this field.
Recent work~\cite{Kondo:2008su,Shibata:2009af,Kondo:2010pt,Kondo:2005eq} has demonstrated that the magnetic part of the field strength dominates the confining string, using a decomposition constructed from one particular choice of $\theta \in
SU(N_C)/U(N_C-1)$; however in this case only one
of the possible $N_C-1$ types of monopole is visible.

 Here we consider a different choice of $\theta \in SU(N_C)/(U(1))^{N_C-1}$. Our initial goal was to investigate whether monopoles apparent in this construction may also lead to confinement\footnote{A different choice of a $SU(N_C)/(U(1))^{N_C-1}$ Abelian decomposition was described in~\cite{Shibata:2007pi},  but without a discussion of the relationship to quark confinement.}. Concentrating on the Wilson Loop,
an observable used to measure the string tension, we show that the path ordering may be removed by diagonalising the gauge links along the Wilson Loop by a $SU(N_C)/(U(1))^{N_C-1}$ field $\theta$.
This, in principle, allows us to use Stokes' theorem to express the Wilson Loop in terms of a surface integral over an Abelian restricted gauge field strength tensor.
 We use the CDG decomposition, constructed from a colour field $n^j = \theta\lambda^j\theta^\dagger$ for a diagonal Gell-Mann matrix $\lambda^j$, to find a consistent construction of this restricted field across space-time, and not just for those gauge links along the Wilson loop where it was originally defined.
Our relationship for the string tension in terms of this restricted field is exact: we do not require any approximations or additional path integrals.
We search for topological structures in this field strength, and find that discontinuities in the colour field related to the winding of one of the parameters describing it around a particular location may lead to large `electric' and `magnetic' fields (deriving the terms from an obvious analogy to electromagnetism) in both the restricted field strength and the part of the field strength derived from the colour field; in particular the component of the field strength in the plane of the Wilson Loop may exhibit structures, decaying with a power of the distance in all directions, which would lead to an area law scaling of the Wilson Loop.
We label these structures \monsp. We also examine the other components of the electromagnetic field strength and find that if certain simplifying assumptions are reliable they will be dominated either by similar peaks or lines of electromagnetic flux.
We do not show that this winding occurs in practice, but if it does it provides an explanation for the confining potential. We also discuss how string breaking might arise in this picture at large distances.

The second part of this article checks whether these structures are present using SU(3) quenched lattice QCD. We have generated $16^332$ and $20^340$ gauge field configurations, and used them to test this theory. We have concentrated on gauge invariant observables to avoid certain ambiguities with the gauge fixing: the restricted field strength is gauge invariant, but our choice of $\theta$ field is not, making it harder to directly observe \mons in the underlying field. We perform the Abelian decomposition and calculate $\theta$, $n$, the field strength tensor and the portion of the field strength tensor due to the structures in the $\theta$ field. Our goal in this initial study is to show that the Electromagnetic field contains the expected structures, and that the \mons can account for the string tension. Our simulation is limited because although in principle we should recalculate $\theta$ and use a different Abelian decomposition for each Wilson Loop, we do not do so due to limited computer resources; nor have we
considered the deconfinement transition, different representations of the gauge group, flux tubes of the confining string, or the relationship between this picture of confinement and the center vortex or dual Meissner models. These would have to be considered in later works.

Preliminary results were presented in~\cite{Cundy:2012ee}, ~\cite{Cundy:2013pfa} and~\cite{Cundy:2013xsa}.
In section \ref{sec:2} we discuss the Abelian decomposition and its relation to the Wilson Loop and thus quark potential. We discuss the differentiability of the colour field in section \ref{sec:2b}, and then how confinement may arise in this picture in section \ref{sec:2c}. After a brief note on string breaking in section \ref{sec:3}, we present numerical evidence in section \ref{sec:4} and conclude in section \ref{sec:5}. There are four appendices outlining the details of some of the calculations and our numerical methods.

\section{The Abelian Decomposition and Stokes' theorem}\label{sec:2}

The Wilson Loop in an SU($N_C$) gauge theory is defined as
\begin{align}
W_L[C_s,A] = & \frac{1}{N_C} \tr \left(W[C_s,A]\right) & W[C_s,A] = \mathcal{P}[e^{-ig\oint_{C_s} dx_\mu A_\mu(x)}]\label{eq:1}
\end{align}
for a closed curve $C_s$ of length $L$ which starts and finishes at a position $s$, where $\mathcal{P}$ represents path ordering and the gauge field, $A_\mu$, can be written in terms of the Gell-Mann matrices, $\lambda^a$, as $\frac{1}{2}A_\mu^a \lambda^a$. We will use the summation convention that the superscripts $a,b,\ldots$ on a Gell-Mann matrix implies that it should be summed over all values of $a$, $\lambda^a A^a \equiv \sum_{a = 1}^{N_C^2-1} \lambda^aA^a$, while the indices $j,k,\ldots$ are restricted only to the diagonal Gell-Mann matrices, so, in the standard representation, $A^j\lambda^j \equiv \sum_{j = 3,8,\ldots, N_C^2 - 1}\lambda^jA^j$. We shall often leave the gauge field dependence of $W$ and $W_L$ implicit.

The Wilson Loop when $C_s$ is an $R\times T$ rectangle, with spatial extent $R$ and temporal extent $T$, can be used to measure the confining static quark potential, $V(R)$,~\cite{Wilson:1974}
\begin{gather}
V(R)=-\lim_{T\rightarrow\infty} \log(\langle W_L[C_s]\rangle)/T,
\end{gather}
where $\langle \ldots \rangle$ denotes the vacuum expectation value. It is expected, and observed in lattice simulations, that for intermediate distances the confining potential is linear in $R$, so $V(R) \sim \rho R + c$, where $\rho$ is the string tension, and $c$ is a constant. At very small distances, the potential is expected to be Coulomb, while in the presence of fermion loops at very large distances the string is broken and the potential becomes independent of $R$~\cite{Bali:2005fu}. The main focus of this work is on the intermediate regime, so we expect that the expectation value of the Wilson Loop will scale with the spatial and temporal extents of the Wilson Loop as
\begin{gather}
\langle W_L[C_s]\rangle \sim e^{-\rho R T}.\label{eq:al1}
\end{gather}
This is known as the area law scaling of the Wilson Loop. As discussed in \ref{app:B}, this is satisfied, if on each individual configuration, the Wilson Loop scales as
\begin{gather}
W_L[C_s] \sim e^{i F}, \label{eq:al2}
\end{gather}
where $F$ is randomly distributed from configuration to configuration (according to one of a certain set of distributions which includes those discussed in this work) with a mean value proportional to the area contained within the curve $C_s$. The eventual goal (and this work is intended as a step towards that goal) is to demonstrate from first principles that equation (\ref{eq:al2}) is satisfied in pure gauge QCD (and later full QCD), and thus that the quarks are linearly confined.

A difficulty with evaluating equation (\ref{eq:1}) is the path ordering. If the fields $A_\mu(x)$ at different $x$ and $\mu$ commuted with each other, then we could ignore the path ordering, and use Stokes' theorem to convert the line integral to a surface integral. The problem would then reduce to showing that there was some flux flowing through the surface so that the surface integral was proportional to the area (perhaps by counting lines of flux). However, $A_\mu$ are non-Abelian fields: we cannot immediately do this. Our approach is then first to construct an Abelian field $\hat{A}_\mu(x)$ so that $W_L[C_s,\hat{A}] = W_L[C_s,A]$, and the calculated string tension does not depend on which of the two fields we use. We may then remove the path ordering, apply Stokes' theorem to replace the line integral with a surface integral over some field strength, and then show that the surface integral is proportional to the area enclosed within the loop.

First we shall define what is meant by path ordering. We split $C_s$ into infinitesimal segments of length $\delta \sigma$, and define the gauge link as $U_\sigma \in SU(N_C) = \mathcal{P}[e^{-ig\int_{\sigma}^{\sigma + \delta \sigma} A_\sigma d\sigma}] \sim e^{-ig \delta \sigma A_\sigma}$. $0\le\sigma\le L$ represents the position along the curve and we write $A_\sigma \equiv A_{\mu(\sigma)}(x(\sigma))$. We have assumed and will require throughout this work that the gauge field, $A$, is differentiable. This limits us to only a certain subset of gauges, and once we have found a suitable gauge we are restricted to continuous gauge transformations, i.e. only those gauge transformations which can be built up from repeatedly applying infinitesimal gauge transformations, $A_\mu \rightarrow A_\mu + g^{-1}\partial_\mu \alpha + i [\alpha,A_\mu]$, where $\alpha \equiv \alpha^a \lambda^a$ and $\partial_\mu \alpha$ are both infinitesimal. We also neglect the effects of the corners of the Wilson Loop; the discontinuity in $A_\sigma$ at the corner can, for example, be avoided by using a rounded corner. The path ordered integral over gauge fields is defined as the ordered product of the gauge links around the curve in the limit $\delta\sigma \rightarrow 0$.

$W[C_s]$ can be written in this lattice representation as
\begin{gather}
W[C_s] = \lim_{\delta \sigma \rightarrow 0} \prod_{\sigma = 0,\delta \sigma,2\delta\sigma,\ldots}^{L-\delta\sigma} U_\sigma.
\end{gather}
We wish to now replace $U_\sigma$ by an equivalent Abelian field.

Previous work~\cite{DP,PhysRevD.62.074009,Kondo:2008su} has achieved this by inserting the identity operator, as parametrised below, between each pair of gauge links,
\begin{gather}
I_{DP} = \int d\theta\; \theta^\dagger e_{N_C} e^\dagger_{N_C} \theta ,\label{eq:kondoidentity}
\end{gather}
where $\theta \in SU(N)$, $e_{N_C}$ is a unit vector in colour space, and $d\theta$ is a suitable Haar measure for $\theta$. Proof that this is an identity operator is given in~\cite{Kondo:2008su}. This allows one to replace $U_\sigma$ with $\hat{U}_\sigma = e^{-i \delta \sigma e_{N_C}^\dagger( \theta A_\sigma \theta^\dagger + i \theta \partial_\sigma \theta^\dagger)e_{N_C}} + O(\delta\sigma^2)$. The exponent is Abelian, so there is no need for path ordering. In effect, this replaces the path ordering with a path integral over $\theta$. It can be shown that the exponent is related to the CDG Abelian decomposition, which allows for certain monopole solutions which reside in the field $\theta$, and these monopoles may lead to confinement via a dual Meissner effect.

We believe that this approach contains a number of difficulties which have to be resolved before it can be used practically:
\begin{enumerate}
\item It introduces an extra gauge degree of freedom to the system -- contained within the variable $\theta$;
\item It is necessary to evaluate an additional path integral, making calculations harder;
\item In gauge groups larger than SU(2), only a small subset of the possible CDG monopoles are visible to the restricted field (i.e. only the monopoles contained within the $n^{N_C^2-1}$ colour direction. $n$ will be defined in equation (\ref{eq:defeq1}));
\item There is no obvious relation between the physical gauge field and the unphysical $\theta$ field which contains the monopoles and leads to confinement. We expect topological features within the gauge field to contribute to confinement; and this is only possible if the monopoles in the $\theta$ field are related to the monopoles in the gauge field. However, given that we integrate over every possible $\theta$, it is difficult to see how the gauge field itself can affect the confining potential if both the CDG monopole picture and this model are correct;
\item It is possible to introduce a gauge transformation which leaves $\theta$ constant across space time, destroying any monopole structure;
\item The $\theta$ field at each location along the curve is independent (as each needs to be integrated separately), and therefore $e_{N_C}^\dagger( \theta A_\sigma \theta^\dagger + i \theta \partial_\sigma \theta^\dagger)e_{N_C}$ will not be a smooth function of the position for a general configuration of $\theta$. This makes the conversion $\lim_{\delta \sigma \rightarrow 0} \sum {\delta\sigma} \rightarrow \int d \sigma$ problematic -- this is required in the conversion from a product of gauge links to the exponential of a line integral of $A_\sigma$ around the Wilson Loop -- and there are also difficulties defining $\partial_\sigma \theta$. Some regularisation procedure would be required to ensure that $\theta$ is smooth. Normally, in a path integral in Euclidean space time, discontinuities if the fields are suppressed by the differential terms in the action, so we can rightly assume that the fields are differentiable. However, in this case $\theta$ does not contribute to the action, so there is no similar suppression of discontinuities (see ~\cite{PhysRevD.62.025019} for a more detailed criticism along these lines).
\end{enumerate}
In~\cite{Kondo:2008su}, many of these challenges are resolved by fixing the additional degrees of freedom by a procedure analogous to gauge fixing in QCD. This, in effect, selects one particular $\theta$ field and uses that for the subsequent analysis (an alternative way of expressing the approach of~\cite{Kondo:2008su} is to say that their procedure fixes various components of the gauge field $A_\mu$; but this breaks gauge invariance). However, this breaks the identity in equation (\ref{eq:kondoidentity}) and thus the relationship between the Wilson Loop of the restricted gauge field $\hat{U}$ and the Wilson Loop of the original gauge field $U$. In practice, to pursue a program along these lines would require sampling numerous different choices of $\theta$-fixing.

In view of these challenges, we propose that different approach should be taken when trying to express the Wilson Loop in terms of an Abelian field.

We introduce a field $\theta_\sigma \equiv \theta(x(\sigma))$, which, for the moment, we shall take to be an element of U($N_C$), at each location along $C_s$ and insert the identity operator $\theta_\sigma \theta_\sigma^\dagger$ between each of the gauge links. $\theta$ is chosen so that $\theta^\dagger_\sigma U_\sigma \theta_{\sigma + \delta\sigma}$ is diagonal. There are $L/\delta \sigma$ gauge links along the path, and we introduce $L/\delta \sigma$ $\theta$ fields, so there is no obvious reason why the system cannot be solved. In fact, it is easy to construct a solution: it is easy to show (see \ref{app:A.2}) that $\theta_s$ contains the eigenvectors of $W[C_s]$: $W[C_s]\theta_s = \theta_s e^{i\sum_{\lambda^j \text{ diagonal}}\alpha^j \lambda^j}$, for some real $\alpha^j$.

 As the phases of the eigenvectors are arbitrary, this definition only determines $\theta$ up to a $(U(1))^{N_C}$ transformation $\theta \rightarrow \theta \chi$. $\chi$ makes no difference to any physical observable, but for practical purposes it is useful to select the phases and ordering of the eigenvectors by some arbitrary \textit{fixing condition} to give a unique choice of $\theta \in SU(N_C)/(U(1))^{N_C-1}$. We define $SU(N_C)/(U(1))^{N_C-1}$  by considering the following parametrisation of a $U(N)$ matrix:
 \begin{gather}
  \left(\begin{array}{cccc}
                  \cos a_1&i\sin a_1 e^{ic_1}&0&\hdots\\
                  i \sin a_1 e^{-ic_1}&\cos a_1&0&\hdots\\
                  0&0&1&\hdots\\
                  \vdots&\vdots&\vdots&\ddots
                 \end{array}\right)  \left(\begin{array}{cccc}
                  \cos a_2&0&i\sin a_2 e^{ic_2}&\hdots\\
                  0&1&0&\hdots\\
                  i \sin a_2 e^{-ic_2}&0&\cos a_2&\hdots\\
                  \vdots&\vdots&\vdots&\ddots
                 \end{array}\right)\ldots e^{i\left(d_0+ \sum_{\lambda^j \text{diagonal}} d_j \lambda^j\right)}\nonumber
 \end{gather}
 $a_i$, $c_i$, $d_0$ and $d_j$ are real parameters, and there are $N_C(N_C-1)/2$ Givens matrices (i.e. one for each of the possible ways of embedding a $2\times 2$ matrix into a $N_C \times N_C$ matrix) parametrised by one particular $a$ and $c$. An $SU(N_C)/(U(1))^{N_C-1}$ matrix is parametrised in the same way, but without the final $(U(1))^{N_C}$ term (i.e. by setting $d_j$ and $d_0$ to some arbitrary fixed value, most conveniently $d_j = d_0= 0$).

Under a gauge transformation, $U_\sigma \rightarrow \Lambda_\sigma U_{\sigma}\Lambda_{\sigma + \delta\sigma}^\dagger$ for $\Lambda = e^{i\alpha^a \lambda^a}\in SU(N_C)$, $\theta \rightarrow \Lambda \theta \chi$,
where the $(U(1))^{N_C-1}$  factor $\chi$ depends on the fixing condition (in this case $\Lambda\theta \in SU(N)$ so there is no contribution to $\chi$ from $d_0$). This follows from the definition of $\theta$ as containing the eigenvectors of the Wilson Loop. The Wilson Loop transforms under a gauge transformation as $W[C_s,U] \rightarrow \Lambda_s W[C_s,U] \Lambda_s^\dagger$, so the operator which diagonalises it transforms according to $\theta_s \rightarrow \Lambda_s \theta_s$; although we also need to reselect the $U(1)^{N_C-1}$ factor so that the fixing condition remains satisfied. With $\theta^\dagger_\sigma U_\sigma \theta_{\sigma + \delta\sigma} = e^{i\sum_{\lambda^j \text{ diagonal}} \delta \sigma u^j \lambda^j}$ for real $u$,
\begin{gather}
\theta^\dagger_s W[C_s]\theta_s = e^{i \sum_{\lambda^j \text{ diagonal}} \lambda^j \oint_{C_s} d\sigma u^j_\sigma},\label{eq:evth}
\end{gather}
removing the non-Abelian structure and the path ordering without introducing an additional path integral.

Our goal is to apply Stokes' theorem to convert this line integral into a surface integral, and this requires extending the definition of $\theta$ and ${u}^j$ across the surface bounded by $C_s$. In practice, we construct these fields across all of space time. To generalise $\theta$, we construct nested curves, $C^i$, in the same plane as $C_s$ and then stack these rectangles on top of each other in the other dimensions, so that every location in Euclidean space-time is contained within one and only one curve. We then define $\theta$ so it diagonalises the gauge Links (and only the gauge links) which contribute to $W[C^i,U]$.

We cannot naively extend $u^j$ across all of space-time, because its definition requires that all the gauge links $U$ are diagonalised by $\theta$, not just those that contribute to the Wilson loop, and in general this cannot be satisfied. Instead, we replace the gauge links $U$ with a field $\hat{U}$, defined in a consistent way so that it is both diagonalised by $\theta$ across all of space time, and equal to $U$ along the path $C_s$.  The first of these conditions means that
\begin{align}
[\lambda^j,\theta_x^\dagger \hat{U}_{\mu,x}\theta_{x + \hat{\mu}\delta \sigma}] = &0, \\
\intertext{for each diagonal $\lambda^j$, which can be re-written in the form}
\hat{U}_{\mu,x} n^j_{x+\delta\sigma\hat{\mu} }\hat{U}^\dagger_{\mu,x} - n^j_x= &0&n_x^j \equiv &\theta_x \lambda^j \theta^\dagger_x.\label{eq:defeq1}
\end{align}
This condition is satisfied across all of space-time and for all directions $\mu$. Note that $n^j$ is independent of the choice of $\chi$. As we shall see later, the $\theta$-dependence of the restricted field strength $F_{\mu\nu}[\hatA]$ only appears within $n^j$, and objects contributing to the restricted field strength drive confinement. This is the justification of our earlier statement that the choice of $\chi$ does not affect the physical observable, which is the restricted field strength. To give this $\hat{U}$ field a physical meaning we need to relate it to the gauge field $U$, and we do so via a second field $\hat{X}$ defined according to
\begin{gather}
 \hat{X}_\mu = U_\mu(x) \hat{U}^\dagger_\mu.\label{eq:9star}
\end{gather}
 For later convenience (equation (\ref{eq:9})), we restrict $\hat{X}_\mu$ by imposing the condition
\begin{align}
\tr [n^j_x(\hat{X}^\dagger_{\mu,x} - \hat{X}_{\mu,x})] = &0
\label{eq:defeq2}.
\end{align}
 Under a gauge transformation $n$ transforms as $n_x \rightarrow \Lambda_x n_x \Lambda^\dagger_x$ (which follows from the transformation rule for $\theta$) and the requirement that equations (\ref{eq:defeq1}) and (\ref{eq:defeq2}) are satisfied in every gauge leads to the transformation rules $\hat{U}_\mu(x) \rightarrow \Lambda_x \hat{U}_{\mu,x} \Lambda^\dagger_{x+\hat{\mu}\delta\sigma}$ and $\hat{X}_{\mu,x} \rightarrow \Lambda_x \hat{X}_{\mu,x} \Lambda^\dagger_x$.
Equations (\ref{eq:defeq1}) and (\ref{eq:defeq2}) are the lattice versions of the defining equations of the gauge independent CDG decomposition~\cite{Cho:1980,Cho:1981,F-N:98,Shabanov:1999,Duan:1979}, which in the continuum is described by\footnote{The original discoverers of this decomposition prefer to call it a gauge independent decomposition rather than gauge invariant decomposition. In these earlier models, the choice of $\theta$ was left arbitrary. The procedure was to fix to some arbitrary gauge, and then select some $\theta$. Since the decomposition only depends on the choice of $\theta$, and proceeds regardless of which gauge was originally chosen, the decomposition was referred to as gauge independent to distinguish it from other approaches which required fixing to particular gauges, such as the decomposition based on the Maximum Abelian Gauge. Our work differs in philosophy from the original approach because we do not need to perform any gauge fixing to extract our observables, since our key quantities (such as the field strength and Wilson Loop) are gauge covariant and thus the corresponding observables are gauge invariant. However, our particular choice of $\theta$ is gauge dependent, and thus the quantities which we use to parametrise it can only be examined after gauge fixing to some arbitrary gauge. The authors of ~\cite{Cho:1980,Cho:1981,F-N:98,Shabanov:1999,Duan:1979} gauge fixed to an arbitrary gauge and then performed an Abelian decomposition; we decompose and then (if required, which isn't the case for any physical observable) gauge fix.}
\begin{align}
A_\mu = &\hat{A}_\mu + X_\mu\label{eq:11}\\
D_\mu[\hat{A}] n^j = &0\label{eq:contde1} \\
\tr(n^j X) =&0\label{eq:contde2}\\
D_\mu[\hat{A}] \alpha \equiv& \partial_\mu \alpha - i g [\hat{A}_\mu,\alpha]\label{eq:14}\\
\hat{A}_\mu =& \sum_j\left[\frac{1}{2}n^j\tr(n^j A_\mu) + \frac{i}{4g}  [n^j,\partial_\mu n^j]\right],\label{eq:hatA1}
\end{align}
with
\begin{align}
\hat{U} \sim & e^{-i \delta\sigma g \hat{A}}& \hat{X} \sim & e^{i \delta \sigma X}.
\end{align}
Equation (\ref{eq:11}) is the naive continuum limit of equation (\ref{eq:9star}); equations (\ref{eq:contde1}) and (\ref{eq:14}) are the naive continuum limit of equation (\ref{eq:defeq1}), and equation (\ref{eq:contde2}) is the continuum limit of equation (\ref{eq:defeq2}). Proof that equation (\ref{eq:hatA1}) solves the continuum decomposition has previously been given in (for example)~\cite{Cho:1980,Cho:1981,Kondo:2008su}, and is repeated in  \ref{app:A}.

The condition (\ref{eq:defeq2}) may then be interpreted as requiring that $\tr(n^j \hatA) = \tr (n^jA)$, the component of the gauge field $A$ parallel to $n$ is fully contained within $\hatA$. In the continuum, the solution for $\hat{A}$ and $X$ is unique, but on the lattice we found that there were sometimes several distinct solutions to equations (\ref{eq:defeq1}) and (\ref{eq:defeq2}). In this case, we choose the solution which had the largest value of $\tr( \hat{X})$, a condition which is both gauge invariant and satisfied along $C_s$ where $\hat{U} = U$ and thus $\hat{X}=1$.

The continuum defining equations ensure that the field strength $\hat{F}_{\mu\nu}$ associated with $\hat{A}$, defined by
\begin{gather}
[D_\mu[\hat{A}],D_\nu[\hatA]] \alpha = -ig [\hat{F}_{\mu\nu}[\hat{A}],\alpha]\label{eq:f7}
\end{gather}
for any field $\alpha$ in the adjoint representation of the gauge group, satisfies $\hat{F}_{\mu\nu}[\hatA] = \beta_{\mu\nu}^j n^j$ for some real scalars $\beta^j$. The proof of this follows by substituting each of the $n^j$ fields in turn in place of $\alpha$, using equation (\ref{eq:contde1}) to show that $[\hat{F},n^j] = 0$ for all the $n^j$, and noting that the only objects which commute with each of the $n^j$ are the other $n^j$ fields.

We express the restricted field as $\hat{U}_{\mu,x} \equiv \theta_{x}e^{i \lambda^j \delta\sigma\hat{u}^j_{\mu,x}} \theta^\dagger_{x+\hat{\mu}\delta\sigma}$ for real $\hat{u}$, and since $\hat{U} = U$ along the curve $C_s$, we see that $W[C_s,U] = W[C_s,\hat{U}]=\theta_s W[C_s,\theta^\dagger \hat{U}\theta]\theta^\dagger_s = \theta_s e^{i \lambda^j \oint_{C_s} \hat{u}^j_\sigma d\sigma}\theta^\dagger_s$.
 Applying Stokes' theorem to the Abelian field $\theta^\dagger_{x} \hat{U}_{\mu,x} \theta_{x+\hat{\mu}\delta\sigma}$ gives, if $\hat{u}$ is differentiable,
\begin{align}
\theta^\dagger_s W[C_s]\theta_s =& e^{i \lambda^j \int_{x \in \Sigma} d\Sigma_{\mu\nu} \hat{F}^j_{\mu\nu}},\label{eq:7b}\\
\hat{F}^j_{\mu\nu} =& \partial_\mu \hat{u}^j_\nu - \partial_\nu \hat{u}^j_\mu,\label{eq:7}
\end{align}
where $\hat{F}^j$ (like $\hat{u}$) is gauge invariant, $\Sigma$ the (planar) surface bound by the curve $C_s$, and $d\Sigma$ an element of area on that surface. Note that $\hat{u}$ does depend on the fixing condition, although $\hat{F}$ is independent of it.

 Equation (\ref{eq:7}) can be easily proved by considering for smooth $\theta$ and $\hat{u}$
\begin{gather}
\hat{U}_\sigma = e^{i \delta\sigma \hat{u}_\sigma^j n^j  + \delta \sigma \theta \partial_\sigma \theta^\dagger},
\end{gather}
which implies that
\begin{gather}
g\hat{A}_\sigma = - \hat{u}^j_\sigma n^j  + i \theta\partial_\sigma \theta^\dagger,\label{eq:hatA}
\end{gather}
and
\begin{align}
g\hat{F}_{\mu\nu}[\hatA] =& g\left(\partial_\mu \hat{A}_\nu - \partial_\nu\hatA_\mu - i [\hatA_\mu,\hatA_\nu]\right)\nonumber\\
=&-\hat{u}_\nu^j \partial_\mu n^j - n^j \partial_\mu \hat{u}^j_\nu + \hat{u}^j_\mu \partial_\nu n^j + n^j \partial_\nu \hat{u}^j_\mu - \hat{u}^j_\mu[n^j,\theta\partial_\nu \theta^\dagger] + \hat{u}^j_\nu[n^j,\theta\partial_\mu \theta^\dagger] \nonumber\\
&- i \theta\partial_\mu \theta^\dagger \theta \partial_\nu \theta^\dagger + i \theta\partial_\nu \theta^\dagger \theta \partial_\mu \theta^\dagger + i [\theta\partial_\mu \theta^\dagger,\theta \partial_\nu \theta^\dagger]\nonumber\\
= & n^j \partial_\nu \hat{u}^j_\mu - n^j \partial_\mu \hat{u}^j_\nu,\label{eq:f11}
\end{align}
with $\hat{F}_{\mu\nu}^j[\hatA] = \frac{1}{2} \tr(n^j \hat{F}_{\mu\nu}[\hatA])$.
We have used the result
\begin{gather}
\partial_\mu n^j = [n^j,\theta\partial_\mu \theta^\dagger],
\end{gather}
which follows from the definition $n^j = \theta \lambda^j \theta^\dagger$.

We can now consider the Wilson Line around an infinitesimal plaquette $p$, which for a smooth $\hat{u}$ field gives
\begin{gather}
W[p,\hat{U}]=\hat{U}_{x,\mu} \hat{U}_{x+\delta\sigma \hat{\mu},\nu} \hat{U}^\dagger_{x+\delta\sigma\hat{\nu},\mu} \hat{U}^\dagger_{x,\nu} = e^{i \hat{F}_{\mu\nu}},
\end{gather}
which leads to
\begin{gather}
W[p,\theta^\dagger \hat{U}\theta] = e^{i \lambda^j (\partial_\mu\hat{u}^j_\nu -\partial_\nu \hat{u}^j_\mu) },
\end{gather}
and building the integral over the surface bounded by $C_s$ from the product of integrals over these small plaquettes, using that the exponent is Abelian, gives equations (\ref{eq:7b}) and (\ref{eq:7}).

Proof that the definitions of $\hat{F}$ given in equations (\ref{eq:f7}) and (\ref{eq:f11}) are equivalent is given in  \ref{app:A}, along with a proof that the definition of $\hat{A}_{\sigma}$ in equation (\ref{eq:hatA}) is equivalent to that given in equation (\ref{eq:hatA1}).

Finally,  as alluded to earlier, we note that $\hat{F}_{\mu\nu}[\hatA] $ can be written in the forms
\begin{align}
\hat{F}_{\mu\nu}[\hatA] =& \frac{1}{2}n^j (\partial_\mu \tr n^j A_\nu - \partial_\nu \tr n^j A_\mu) - \frac{i}{2g} n^j\tr(n^j[\theta\partial_\mu \theta^\dagger,\theta\partial_\nu \theta^\dagger])\\
=&\frac{1}{2}n^j (\partial_\mu \tr n^j A_\nu - \partial_\nu \tr n^j A_\mu) + \frac{i}{8g} n^j\tr(n^j[\partial_\mu n^k,\partial_\nu n^k]),\label{eq:26}
\end{align}
as is proved in  \ref{app:A}, equations (\ref{eq:F1}) and (\ref{eq:F2}). These functions solely depend on $n$, and $\theta$ only indirectly through $n$, and since $n$ is independent of $\chi$ and $\hat{F}$ is the physical observable we want to study, we conclude that the choice of $\chi$ will not affect any of the physical observables we need.

Equation (\ref{eq:7b}) is only valid if $\hat{u}$ is differentiable. Equation (\ref{eq:7b}) is also similar to what we see in QED, which is, of course, not confining. In analogy to QED, we may expect the contribution of those portions of space time where $\hat{u}$ is continuous to have little contribution to the string tension.  However, we must also add to this equation the effects of discontinuities in $\hat{u}$. We do so by only extending the area integral over those areas where $\hat{u}$ is continuous, and add additional line integrals around the areas where it is discontinuous. The linear string tension will, at least in part, arise from these discontinuities. Since $\hat{u}^j$ is built from the gauge field, and we are working on a gauge where this is assumed to be differentiable, and $\theta$, we must therefore consider whether $\theta$ is differentiable.

\section{Differentiation of eigenvectors}\label{sec:2b}

\subsection{SU(2)}

We wish to investigate how $\theta$ varies over space, which requires evaluating the derivative of the eigenvectors of the Wilson Loop operator. Suppose that an operator $M$ has two eigenvectors (which is the case for SU(2)), $\psi_1$ and $\psi_2$, with eigenvalues $\lambda_1$ and $\lambda_2$ and after a small change in $M \rightarrow M + \delta M$, the eigenvectors change to $\psi'_1$ and $\psi'_2$, with eigenvalues $\lambda_1 + \delta \lambda_1$ and $\lambda_2 + \delta \lambda_2$. For normal $M$ (so that the eigenvectors are orthogonal, left and right eigenvectors the same, and $\delta M$ is (anti-)Hermitian or an anti-Hermitian matrix multiplied by $M$ if $M$ is unitary),  we may parametrise the new eigenvectors as
\begin{align}
\psi_1' = & \psi_1 \cos(\beta) + \psi_2 e^{i\gamma}\sin\beta\nonumber\\
 \psi_2' = & \psi_2 \cos(\beta) - \psi_1 e^{i\gamma}\sin\beta,
\end{align}
and the eigenvalue equations read
\begin{align}
M \psi_1 = & \lambda_1 \psi_1\nonumber\\
M \psi_2 = & \lambda_2 \psi_2\nonumber\\
(M+\delta M)(\psi_1 \cos(\beta) + \psi_2 e^{i\gamma}\sin\beta) = & (\lambda_1 + \delta \lambda_1)(\psi_1 \cos(\beta) + \psi_2 e^{i\gamma}\sin\beta)\nonumber\\
(M+\delta M)(\psi_2 \cos(\beta) - \psi_1 e^{i\gamma}\sin\beta) = & (\lambda_2 + \delta \lambda_2)(\psi_2 \cos(\beta) - \psi_1 e^{i\gamma}\sin\beta),
\end{align}
which, after applying $\psi_1^\dagger$ and $\psi_2^\dagger$ to these equations gives
\begin{align}
\delta \lambda_1 = &\psi_1^\dagger\delta M \psi_1 + \psi_1^\dagger \delta M \psi_2 e^{i\gamma} \tan \beta\nonumber\\
\delta \lambda_2 = &\psi_2^\dagger\delta M \psi_2 - \psi_2^\dagger \delta M \psi_1 e^{i\gamma} \tan \beta\nonumber\\
\tan 2\beta = & 2 \frac{\sqrt{\psi_2^\dagger \delta M \psi_1 \psi_1^\dagger \delta M \psi_2}}{\lambda_1 - \lambda_2 + \psi_1^\dagger \delta M \psi_1 - \psi_2^\dagger \delta M \psi_2}\nonumber\\
e^{i\gamma} = & \sqrt{\frac{\psi_2^\dagger \delta M \psi_1}{\psi_1^\dagger \delta M \psi_2}}.
\end{align}
At $|\lambda_1 - \lambda_2| \gg |\psi_1^\dagger \delta M \psi_1 - \psi_2^\dagger \delta M \psi_2|$, this agrees with the result from first order perturbation theory~\cite{Cundy:2007df}
\begin{gather}
\delta \psi_i = \psi_i'-\psi_i = (1-\psi_i\psi_i^\dagger) \frac{1}{M-\lambda} \delta M \psi_i.
\end{gather}
In this case, the change in the eigenvector is proportional to $\delta M$. However, if the eigenvalues are near-degenerate (i.e. the difference between them is of O($\|\delta M\|$)), then $\beta$ may be large even as $\delta M \rightarrow 0$, and the eigenvectors would evolve discontinuously.

As we will be interested in the spatial derivatives of the eigenvectors of the closed Wilson Loop, I will set $M$ to be the Wilson Loop operator $W[C_s^r]$, and consider the derivatives in two directions, one parallel to the Wilson Line, and one perpendicular to it in the plane of the Wilson Loop. Any small shift in location in an arbitrary direction can be constructed in terms of a shift parallel to the Wilson Loop followed by a shift perpendicular to the Wilson Loop. $r$ is an index denoting the size of the Wilson Loop; an increase in $r$ means that we move to the next of the nested Wilson Loops used to define $\theta$, so both the spatial and temporal extents of the curve $C_s^r$ would increase by a small amount $2\delta r$. For an $R\times T$ planar Wilson Loop with $T>R$ we have $r = R/2$. We investigate the change of the Wilson loop both when we have a small shift along the curve, $s \rightarrow s + \delta \sigma$, and a small shift perpendicular to the curve, $r \rightarrow r+\delta r$.

Firstly, there will clearly be discontinuities in $\theta$ when the gauge field is discontinuous, but as we require a continuous gauge field for the decomposition to be valid we neglect this case. We wish to consider if there can be discontinuities in the eigenvectors of the Wilson line even for a continuous gauge field.

In the direction parallel to the line, we have
\begin{gather}
W[C^{r}_{s+\delta \sigma}] = U^\dagger_s  W[C_s^r] U_s,
\end{gather}
and with $U_s =e^{-ig\delta\sigma A_s^r}$, we have for a smooth gauge field
\begin{gather}
\delta W[C_s^r] = -ig \delta\sigma (A_s^r W[C_s^r] - W[C_s^r] A_s^r).
\end{gather}
This gives (if the eigenvalues of the Wilson Loop are not degenerate)
\begin{align}
\tan2\beta^r_s =& \frac{2 \delta\sigma g \sqrt{-\psi_2^\dagger (A_s^r W[C_s^r] - W[C_s^r] A_s^r) \psi_1 \psi_1^\dagger(A_s^r W[C_s^r] - W[C_s^r] A_s^r)\psi_2}}{\lambda_1 - \lambda_2 - ig\delta\sigma\psi_1^\dagger (A_s^r W[C_s^r] - W[C_s^r] A_s^r) \psi_1 + \psi_2^\dagger (A_s^r W[C_s^r] - W[C_s^r] A_s^r) \psi_2 i g \delta\sigma}\nonumber\\
=&\frac{2 \delta\sigma g \sqrt{(\lambda_1 - \lambda_2)^2\psi_2^\dagger (A_s^r) \psi_1 \psi_1^\dagger A_s^r \psi_2}}{\lambda_1 - \lambda_2}\nonumber\\
= &  \text{sign}(\lambda_1 - \lambda_2)\delta\sigma g \sqrt{\tr [(T^1 + i T^2)\theta^\dagger A_s^r \theta]\tr [(T^1 - i T^2)\theta^\dagger A_s^r \theta]}.\nonumber\\
e^{2i\gamma^r_s}= & \frac{\tr [(T^1 + i T^2)\theta^\dagger A_s^r \theta]}{\tr [(T^1 - i T^2)\theta^\dagger A_s^r \theta]}
\end{align}
$\beta$ is proportional to $\delta\sigma$, so, baring changes in the phase, the eigenvectors evolve smoothly around the Wilson Line.

Considering changes perpendicular to the Wilson Line, we have
\begin{gather}
W[C^{r+\delta r}_{s}] = i g \delta\sigma  \delta_r\sum_{m}U^r_sU^r_{s+\delta\sigma} \ldots  \partial_r( U^r_{s+\delta\sigma}  ) (U^r_{s+m\delta\sigma})^\dagger \ldots (U_{s+\delta\sigma}^r)^\dagger (U_s^r)^\dagger W[C_s^r]+ W[C^r_s],
\end{gather}
which gives, since $\lambda_1\lambda_2 = 1$ for $W \in SU(2)$,
\begin{align}
\tan2\beta^{r} = & 2\frac{ \delta\sigma \delta r g \sqrt{\sum_{s',s''}e^{-2i \int_{s'}^{s''} ds''' u^3_{s'''}}\tr[(T^1 + i T^2)\theta^\dagger_{s'}\partial_r A_{s'} \theta_{s'} ]\tr[(T^1 - i T^2)\theta^\dagger_{s''}\partial_r A_{s''} \theta_{s''} ]}}{\lambda_1 - \lambda_2 + ig\delta\sigma\delta r\sum_{s'} \tr( T_3 {\theta_{s'}^r}^\dagger  \partial_r A_{s'}^r W_L^{s',r}\theta_{s'}^r )}\nonumber\\
e^{2i\gamma^{r,s}}= & \frac{\sum_{s'} \delta\sigma \delta r e^{-2i \int_{s}^{s'} ds''' u^3_{s'''}}\tr[(T^1 + i T^2)\theta^\dagger_{s'}\partial_r A_{s'} \theta_{s'} ]}{\sum_{s'} \delta\sigma \delta r e^{2i \int_{s}^{s'} ds''' u^3_{s'''}}\tr[(T^1 - i T^2)\theta^\dagger_{s'}\partial_r A_{s'} \theta_{s'} ]}\label{eq:thetardiff}
\end{align}
We see that, expect when the eigenvalues are near degenerate, $\beta$ will be proportional to $\delta r$ -- though not $\delta \sigma$ as $\sum_{s,s''}\delta\sigma^2\ldots \sim O(1)$. This means that the $\theta$ field is discontinuous when the Wilson Loop has degenerate eigenvalues -- and we shall investigate the consequences of this later. For now, we just consider the other situations with infinitesimal $\beta$, and ask if it is possible that there can be a large change in $\theta$ without a large change in $\beta$.  We obtain
\begin{align}
\theta^{r+\delta r}_s \rightarrow &\theta^r_s B\nonumber\\
B=& \left(\begin{array}{c c}\cos \beta^{r} & -\sin\beta^{r} e^{-i\gamma^{r,s}}\\
\sin\beta^{r} e^{i\gamma^{r,s}} &\cos\beta^{r} \end{array}\right)e^{i\delta \lambda^3},
\end{align}
where the phase $\delta$ is chosen so that whatever fixing condition we have chosen is satisfied. We denote $B^r_s$ as the appropriate transformation when performing a small translation in the direction of increasing $s$ and $B^{s,r}$ when translating in the direction of increasing $r$. In section \ref{sec:3}, we shall use these relations in our discussion of string breaking.

There is one further discontinuity as we evolve $\theta$ over $s$ and $r$. We may parametrise $\theta$ as
\begin{gather}
\theta = \left(\begin{array}{c c} \cos a & i \sin a e^{ic}\\ i \sin a e^{-ic}&\cos a\end{array}\right) \left(\begin{array}{cc} e^{id}&0\\0&e^{-id}\end{array}\right),\label{eq:thpar}
\end{gather}
with $0\le a \le \pi/2$, $c\in\mathbb{R}$ and $d \in \mathbb{R}$. Note that the parameters $a$, $c$ and $d$ are not gauge invariant, so this parametrisation is only defined after we have fixed to a suitable gauge.
The phase $d$ makes no physical contribution, and should be fixed by some suitable condition, but for completeness we still give it explicitly. We perform a small translation in either the $r$ or $s$ direction, and obtain $\theta'$ which is parametrised by $a'$, $c'$ and $d'$:
\begin{gather}
\theta' = \left(\begin{array}{c c}
\cos a \cos \beta e^{id} + i \sin a \sin\beta e^{i (c + \gamma - d)}& -\cos a \sin \beta e^{i(d-\gamma)} + i \sin a \cos \beta e^{i (c-d)}\\
\cos a \sin \beta e^{-i(d-\gamma)} + i \sin a \cos \beta e^{-i (c-d)}&\cos a \cos \beta e^{-id} - i \sin a \sin\beta e^{-i( c + \gamma - d)}\end{array}\right)e^{i\delta\lambda^3},
\end{gather}
which allows us to read off the new parameters $a'$, $c'$ and $d'$,
\begin{align}
\cos a' =& \sqrt{\cos^2 a \cos^2 \beta + \sin^2 a \sin^2 \beta - 2 \cos a \cos \beta \sin a \sin \beta \sin (c + \gamma - 2d)}\nonumber\\
\tan (d'-d-\delta) = & \frac{\sin a \sin \beta \cos(c+\gamma - 2d)}{\cos a \cos \beta - \sin a \sin \beta \sin(c+\gamma - 2d)}\nonumber\\
\tan (c'+d-d'+\delta) = & \frac{\cos a \sin \beta \cos(2d-\gamma) + \sin a \cos \beta \sin(c)}{-\cos a \sin \beta \sin(2d-\gamma) + \sin a \cos \beta \cos(c)}.\label{eq:cpc}
\end{align}
It is clear that when $\cos a$ and $\sin a \gg\beta $, we have $a' = a + O(\beta)$, $c' = c + O(\beta)$ and $d' = d + O(\beta)$, which means that $\theta$ is differentiable with respect to the spatial coordinates at least once.
However, if $a \sim \pi/2$ we have $d' = c + \gamma - d +\delta + (\nu+\frac{1}{2}) \pi$ and $c' = 2c - 2d +\gamma + (\nu + \frac{1}{2}) \pi$, where $\nu$ is an integer, and, at $ a \sim 0 $, $c' \sim 2d - \gamma + (\nu + \frac{1}{2}) \pi$. At both $a \sim 0$ (where $\theta$ is diagonal) and $a \sim \pi/2$ (where $\theta$ is off-diagonal), the parameters which describe $\theta$ are discontinuous even when $\beta$ is small.
$a$ itself, however, remains continuous and the eigenvalues of $\theta$ are dependent only on $a$ once we fix the phase $d$. $\delta$ can be chosen according to our rule for fixing the U(1) phase $d$.
 For example, if we enforce $d = d'= 0$, then $\delta = -\arctan\left( \frac{\sin a \sin \beta \cos(c+\gamma )}{\cos a \cos \beta - \sin a \sin \beta \sin(c+\gamma )} \right)$.

This discontinuity of $\theta$ may be in principle be at a single point, or there may lines in the surface spanned by $r$ and $s$ where these discontinuities occur (if they occur), and these lines will either be open, form closed loops or will be infinite in extent. If they are lines, then we require that $a'=a=\pi/2$ (for example), which means that $\theta$ will be unchanged as we make a small displacement in space in some arbitrary direction. If $\theta$ is unchanged, then the Wilson Loop must also be unchanged as we move across space time. However, baring a remarkable gauge field configuration, it is unlikely that this will occur for any more than an infinitesimal distance. We may as well then neglect the possibility that these discontinuities are in lines, and focus on the case that they are at isolated points. 

\subsection{SU(3)}
In SU(3), we introduce the following operators which form a complete basis of the Hermitian traceless generators of SU(3)
\begin{align}
\phi_1 = & \left(\begin{array}{c c c}
0&e^{ic_1}&0\\
e^{-ic_1}&0&0\\
0&0&0\end{array}\right)&\bar{\phi}_1 = & \left(\begin{array}{c c c}
0&ie^{ic_1}&0\\
-ie^{-ic_1}&0&0\\
0&0&0\end{array}\right)\nonumber\\
\phi_2 = & \left(\begin{array}{c c c}
0&0&e^{ic_2}\\
0&0&0\\
e^{-ic_2}&0&0\end{array}\right)&
\bar{\phi}_2 =& \left(\begin{array}{c c c}
0&0&ie^{ic_2}\\
0&0&0\\
-ie^{-ic_2}&0&0\end{array}\right)\nonumber\\
\phi_3 = & \left(\begin{array}{c c c}
0&0&0\\
0&0&e^{ic_3}\\
0&e^{-ic_3}&0\end{array}\right)&
\bar{\phi}_3 = & \left(\begin{array}{c c c}
0&0&0\\
0&0&ie^{ic_3}\\
0&-ie^{-ic_3}&0\end{array}\right)\nonumber\\
\lambda^3=&\left(\begin{array}{c c c}
1&0&0\\
0&-1&0\\
0&0&0\end{array}\right)&\lambda^8=&\frac{1}{\sqrt{3}}\left(\begin{array}{c c c}
1&0&0\\
0&1&0\\
0&0&-2\end{array}\right).\label{eq:39z}
\end{align}
Any SU(3) matrix, $\theta$, can then be parametrised in terms of the eight quantities $a_1$, $a_2$, $a_3$, $c_1$, $c_2$, $c_3$, $d_3$ and $d_8$ as
\begin{align}
\theta =& e^{ia_3\phi_3} e^{ia_2\phi_2} e^{ia_1 \phi_1} e^{id_3 \lambda^3} e^{id_8\lambda^8}.\label{eq:l1}
\end{align}
As in SU(2), these parameters are not gauge invariant.

Multiplying out the various components of $\theta$ gives
\begin{gather}
\theta = \left(\begin{array}{ccc}\theta_{11}&\theta_{12}&\theta_{13}\\
\theta_{21}&\theta_{22} &\theta_{23}\\
\theta_{31}&\theta_{32} &\theta_{33} \end{array} \right)
\left(\begin{array}{ccc}e^{id_3 + i\frac{d_8}{\sqrt{3} }}&0&0\\
0& e^{ i\frac{d_8}{\sqrt{3} }-id_3}&0\\
0&0&e^{-i\frac{2d_8}{\sqrt{3}}}\end{array}\right),
\end{gather}
with
\begin{align}
\theta_{11} = &\cos a_2\cos a_1\nonumber\\
\theta_{12} = & i\cos a_2 \sin a_1 e^{ic_1}\nonumber\\
\theta_{13} = & i \sin a_2 e^{ic_2}\nonumber\\
\theta_{21}=&-\cos a_1 \sin a_3 \sin a_2 e^{i(c_3 - c_2)} + i \cos a_3 \sin a_1 e^{-ic_1}\nonumber\\
\theta_{22}=&\cos a_3 \cos a_1 - i \sin a_1 \sin a_2 \sin a_3 e^{i (c_1 + c_3 - c_2)}\nonumber\\
\theta_{23}=&i\sin a_3 \cos a_2 e^{ic_3}\nonumber\\
\theta_{31}=&i \cos a_3 \cos a_1 \sin a_2 e^{-ic_2} - \sin a_3 \sin a_1 e^{-i(c_1 + c_3)}\nonumber\\
\theta_{32} = &i \sin a_3 \cos a_1 e^{-ic_3} - \cos a_3 \sin a_1 \sin a_2 e^{i(c_1 - c_2)}\nonumber\\
\theta_{33}=&\cos a_3 \cos a_2 \label{eq:thpar2}
\end{align}

Note that $d_3$ and $d_8$ can easily be constructed from the phase of the $11$ and $33$ components of $\theta$, $c_2$ and $a_2$ from the $13$ component once the phase has been removed, and $a_3$ and $c_3$ and $a_1$ and $c_1$ from the $23$ and $12$ components of $\theta$. We can again construct the differential of $\theta$ with respect to a coordinate by multiplying $\theta$ by a matrix $B$ constructed in the same parametrisation as $\theta$:
\begin{gather}
B = \left(\begin{array}{ccc}\beta_{11}&\beta_{12}&\beta_{13}\\\beta_{21}&\beta_{22}&\beta_{23}\\\beta_{31}&\beta_{32}&\beta_{33} \end{array} \right)
\left(\begin{array}{ccc}e^{i\delta_3 + i\frac{\delta_8}{\sqrt{3} }}&0&0\\
0& e^{ i\frac{\delta_8}{\sqrt{3} }-i\delta_3}&0\\
0&0&e^{-i\frac{2\delta_8}{\sqrt{3}}}\end{array}\right),
\end{gather}
with $\beta_{ij}$ taken from equation (\ref{eq:thpar2}) but with the substitution $\theta_{ij} \rightarrow \beta_{ij}$, $a_i \rightarrow \beta_i$, $c_i \rightarrow \gamma_i$.

Again, except in the neighbourhood of degenerate eigenvalues of the Wilson Loop, $\beta$ will be infinitesimal and we can neglect terms of O($\beta^2$) and higher. With the shifted $\theta' = \theta B$, we have
\begin{align}
\theta'_{11} =&\left( \cos a_1 \cos a_2 e^{i( \frac{d_8}{\sqrt{3}}+d_3)} - \beta_1 \cos a_2 \sin a_1 e^{i (c_1 - \gamma_1 +\frac{d_8}{\sqrt{3}}-d_3) } - \beta_2 \sin a_2e^{i (c_2-\gamma_2 - \frac{2d_8}{\sqrt{3}})}\right)e^{i\delta_3 +i \frac{\delta_8}{\sqrt{3}}}\nonumber\\
\theta'_{12} =& \left(i\cos a_1 \cos a_2 \beta_1 e^{i(\gamma_1+ \frac{d_8}{\sqrt{3}}+d_3)} +i  \cos a_2 \sin a_1 e^{i (c_1  +\frac{d_8}{\sqrt{3}}-d_3) } - \beta_3 \sin a_2e^{i (c_2-\gamma_3 - \frac{2d_8}{\sqrt{3}})}\right)e^{i \frac{\delta_8}{\sqrt{3}}-i\delta_3}\nonumber\\
\theta'_{13} =& \left(i\cos a_1 \cos a_2 \beta_2 e^{i(\gamma_2+ \frac{d_8}{\sqrt{3}}+d_3)} - \beta_3  \cos a_2 \sin a_1 e^{i (c_1 +\gamma_3 +\frac{d_8}{\sqrt{3}}-d_3) } +i  \sin a_2e^{i (c_2 - \frac{2d_8}{\sqrt{3}})}\right)e^{-2i\frac{\delta_8}{\sqrt{3}}}\nonumber\\
\theta'_{23} =& \bigg(i \beta_2 e^{i(\gamma_2 + \frac{d_8}{\sqrt{3}}+d_3)}(i \cos a_3 \sin a_1 e^{-ic_1} - \cos a_1 \sin a_3\sin a_2 e^{i(c_3 - c_2)}) +\nonumber\\
&\spa i\beta_3 e^{i(\gamma_3+\frac{d_8}{\sqrt{3}}-d_3)}(\cos a_3 \cos a_1 - i \sin a_1 \sin a_2 \sin a_3 e^{i(c_1 + c_3 - c_2)}) + \nonumber\\
&\spa i \sin a_3 \cos a_2 e^{i(c_3\frac{2d_8}{\sqrt{3}})}\bigg)e^{-2i\frac{\delta_8}{\sqrt{3}}}\nonumber\\
\theta'_{33} =& \bigg(i \beta_2 e^{i(\gamma_2 + \frac{d_8}{\sqrt{3}}+d_3)}(i \cos a_3 \cos a_1 \sin a_2 e^{-ic_2} -  \sin a_3\sin a_1 e^{-i(c_3 + c_1)}) +\nonumber\\
& \spa i \beta_3e^{i(\gamma_3 + \frac{d_8}{\sqrt{3}}-d_3)}(i\sin a_3 \cos a_1e^{-ic_3} - \sin a_1 \sin a_2 \cos a_3 e^{i(c_1 - c_2)}) + \nonumber\\
& \spa \cos a_3 \cos a_2 e^{-i\frac{2d_8}{\sqrt{3}}}\bigg)e^{-2i\frac{\delta_8}{\sqrt{3}}},
\end{align}
where, for the $d=0$ fixing condition, $\delta$ is tuned to ensure that $\theta'_{11}$ and $\theta'_{33}$ are real.
Once again, we see that $a_i$, $c_i$ and $d_i$ will be differentiable at least once where $\beta$ is continuous except where either $a_1 = \pi/2$, $a_2 = \pi/2$, $a_3 = \pi/2$, $a_1 = 0$, $a_2 = 0$, or $a_3 = 0$, where the parameters $c_i$ are discontinuous (possibly because the $\delta$ required to fix $d$ is not infinitesimal).

\section{Topological solutions}\label{sec:2c}
\subsection{SU(2)}
Now suppose that $\hat{u}^j$ contains a non-analyticity. We integrate the field around a loop $\tilde{C}$ parametrised by $\tilde{\sigma}$ surrounding the discontinuity, bounding the surface integral by an additional line integral
$
\oint_{\tilde{C}} d\tilde{\sigma} \hat{u}^j_{\tilde{\sigma}},
$
far enough away from the discontinuity that $\hat{u}^j_{\tilde{\sigma}}$ is analytic along $\tilde{C}$.
We define $\{\tilde{C}_n\}$ as the set of curves surrounding all these discontinuities, and $\tilde{\Sigma}$ the area bound within these curves.
We can write
\begin{gather}
e^{i \lambda^j\delta\tilde{\sigma} \hat{u}^j_{\mu,x}} = \theta^\dagger_x \hat{X}^\dagger_{\mu,x} \theta_x \theta^\dagger_x U_{\mu,x} \theta_{x + \delta\tilde{\sigma}},\label{eq:str1}
\end{gather}
and since $\hat{u}$ is continuous on $\tilde{C}$, after fixing the gauge we can expand $U = 1 -i\frac{1}{2}g\delta\tilde{\sigma} A^a \lambda^a$ and $\theta^\dagger_x \theta_{x+\delta\tilde{\sigma}} = 1 + \delta\tilde{\sigma} \theta^\dagger \partial_{\tilde\sigma} \theta$. We define $X_0 \equiv \frac{1}{2}\theta^\dagger(X + X^\dagger)\theta$.  For example, in SU(2) we may parametrise $\hat{X}$ as
\begin{gather}
\hat{X} = \left(\begin{array}{c c} \cos x& i \sin x e^{iy}\\ i \sin x e^{-iy} & \cos x \end{array}\right)\left(\begin{array}{c c}e^{iw}&0\\0&e^{-iw}\end{array}\right),
\end{gather}
then $\hat{X}+\hat{X}^\dagger = 2 \cos x \cos w I = 2X_0$, where $I$ is the identity operator (this expression will obviously not be exact in higher gauge groups). If everything is analytic along $\tilde{C}_s$, and we consider infinitesimal displacements, we expect $\hat{X}$ and $\hat{U}$ to be close to the identity operator, and $x$ and $w$ will both be O($\delta\sigma$). $X_0$ is thus $I + O(\delta \sigma^2)$.

This gives
\begin{multline}
i\delta\tilde{\sigma} \hat{u}^j_{\mu,x} = \frac{1}{\tr (\lambda^j)^2}\text{Im}\left(\;\tr \left[\lambda^j\theta^\dagger_x \hat{X}^\dagger_{\mu,x} \theta_x \theta_x^\dagger U_{\mu,x}\theta_{x+\delta\tilde{\sigma} \hat{\mu}}\right]\right)=\\
\frac{1}{2\tr (\lambda^j)^2}\tr[\lambda^j \theta^\dagger_x (\hat{X}^\dagger - \hat{X}) \theta_x - \frac{1}{2}i \lambda^j \delta \tilde{\sigma}\theta^\dagger_x(\hat{X}^\dagger + \hat{X})  gA^a_{\mu,x} \lambda^a\theta_x + \lambda^j \theta^\dagger_x(\hat{X}^\dagger + \hat{X}) \theta_x \delta \tilde{\sigma}\theta_x^\dagger\partial_{\tilde{\sigma}} \theta]\label{eq:9}.
\end{multline}

Using (\ref{eq:defeq2}) the first term in equation (\ref{eq:9}) gives zero, while the second term will not contribute to an integral around $\tilde{C}$ if $U$ and $X_0$ are continuous and the area of the loop is small enough. We therefore concentrate on the contribution from the final term. Equation (\ref{eq:7}) is then replaced by
\begin{gather}
\theta^\dagger_s W[C_s]\theta_s = e^{i  \lambda^j\left[ \int_{(x \in \Sigma) \cap (x\not{\in} \tilde{\Sigma})} d\Sigma_{\mu\nu} \hat{F}^j_{\mu\nu} + \sum_n\oint_{\tilde{C}_n} d\tilde{\sigma}\frac{1}{\tr (\lambda^j)^2} \tr [\lambda^j X_0  \theta^\dagger \partial_{\tilde{\sigma}}\theta]\right]},
\end{gather}
where $d\Sigma$ is an element of area.

In SU(2), after fixing to a suitable gauge (where the gauge field $U$ is smooth), we parametrise $\theta$ as in equation (\ref{eq:thpar}),
\begin{align}
\theta =&(\cos a + i \sin a \phi)e^{id_3\lambda^3}& \phi = & \left(\begin{array}{cc} 0&  e^{i c}\\ e^{-ic} &0 \end{array}\right)\nonumber\\
\bar{\phi} = &  \left(\begin{array}{cc} 0&  ie^{i c}\\ -ie^{-ic} &0 \end{array}\right)& \lambda^3 = & \left(\begin{array}{cc} 1& 0\\ 0 &- 1\end{array}\right),\label{eq:thdec}
\end{align}
with $c \in \mathbb{R}$, $0 \le a \le \pi/2$ and $d_3$ determined by the fixing condition. As this contains the eigenvectors of $W[C_s]$, it is differentiable except where $W[C_s]$ has degenerate eigenvalues and those locations where $a = 0$ or $a \approx \pi/2$, where $c$ is ill-defined.
The parameter $c$ may wind itself around those points where $a=\pi/2$ or $a=0$.
We may parametrise the plane of the Wilson Loop in polar coordinates $(r,\phi)$, with the origin at the location where $a = \pi/2$ or $a = 0$.
The continuity of $\theta$ away from $r\neq 0$ implies that at infinitesimal but non-zero $r$,  $c(r,\phi =0) = c(r,\phi = 2\pi) + 2\pi \nu_n$ for integer winding number $\nu_n$, and if $c$ is ill defined at $r = 0$ then we may find that $\nu_n \neq 0$.

It is straight-forward to calculate
\begin{gather}
\theta^\dagger\partial_\sigma \theta = e^{-id_3\lambda^3}\left[ i \partial_\sigma a \phi + i \lambda^3 \partial_\sigma d_3 + i \sin a \cos a \bar{\phi}\partial_\sigma c - i \sin^2 a \partial_\sigma c \lambda^3 \right]e^{id_3 \lambda^3}.\label{eq:tdtsu2}
\end{gather}
We integrate around a curve at fixed $a$, and we may choose a fixing condition which keeps $d$ constant. This leads to
\begin{align}
 \theta^\dagger_s W[C_s]\theta_s=& e^{i \lambda^3\left[\int_{(x \in \Sigma) \cap (x\not{\in} \tilde{\Sigma})} d\Sigma_{\mu\nu} \hat{F}^j_{\mu\nu}- \sum_n\oint_{\tilde{C}_n} d\tilde{\sigma} \sin^2 a\partial_\sigma c \frac{1}{2}\tr [X_{0n}]\right]} \nonumber\\
=& e^{i \lambda^3 \left[\int_{(x \in \Sigma) \cap (x\not{\in} \tilde{\Sigma})} d\Sigma_{\mu\nu} \hat{F}^j_{\mu\nu}-\sum_n 2\pi \sin^2 a\nu_n \lambda^3\right]},
\end{align}
where the last equality is only valid if $X_{0}$ is close to the identity operator, as we expect. In SU(2) (but not higher gauge groups, where the expression depends on more parameters) we discover that were we to use an infinitesimal loop around the discontinuity, then the area integral gives $ \theta^\dagger_s W[C_s]\theta_s \sim e^{2\pi i\nu\lambda^3}$ for integer $\nu$, which cannot contribute. However, if we integrate at a larger radius, where $a$ has fallen to some average background value $a_0$, we do see a contribution to the Wilson Loop from the discontinuity,  proportional to $2\pi \nu_n \sin^2 a_0$, and we may still find a large contribution to the string tension. The structure in the field is extended over a small region of space rather than being a $\delta$-function spike. This conclusion may also be reached by considering gauge invariance: while $\hat{F}$ is gauge invariant, $\theta$ is not and therefore the precise location of the discontinuity in $\theta$ will depend on the gauge (although it should not depend too strongly on the gauge). The peak in $\hat{F}$ cannot therefore be precisely at the discontinuity in $\theta$ in every gauge, so it cannot be a $\delta$-function peak around that centre. We shall discuss gauge invariance in more detail in section \ref{sec:4.4}.

We will refer these objects in the field strength as CDG \monsp. They are topological objects in the homotopy group $\pi_1$ (distinct from monopoles which have the homotopy group $\pi_2$, and instantons which are of the group $\pi_3$). They can be identified by a large value of $\hat{F}$, $a = \pi/2$ or $a = 0$, and a non-zero winding of the parameter $c$ around the \monp.

It is also important to note that the \mons at $a = 0$ and $a = \pi/2$ with the same winding contribute with opposite signs. If we compare the results for the integrals of the restricted potential around fixed curves at $a = \delta$ and $a = \pi/2 - \delta$, for some small $\delta$, we find that the term proportional to the winding number gives
\begin{align}
 \oint_{a = \delta} g\hat{A}_\mu dx_\mu = & \lambda_3 \sin^2 \delta 2\pi \nu + \ldots = \lambda_3 \delta^2 2\pi \nu + \ldots\nonumber\\
 \oint_{a = \pi/2 - \delta} g\hat{A}_\mu dx_\mu = & \lambda_3 \sin^2 (\pi/2 -\delta) 2\pi \nu + \ldots = \lambda_3 (1-\delta^2) 2\pi \nu + \ldots \nonumber\\
\end{align}
When this is multiplied by $i$ and exponentiated,  $e^{i2\pi \nu} = 1$ can't contribute to the Wilson Loop, so we can neglect this part of the expression and write,
\begin{gather}
 \oint_{a = \pi/2 - \delta} g\hat{A}_\mu dx_\mu =   -\lambda_3 \delta^2 2\pi \nu + \ldots
\end{gather}
Thus with the same winding number, the two topological objects contribute with opposite signs to the Wilson Loop. This is important as we consider how the \mons with positive and negative winding numbers can be distributed across the lattice (see figure \ref{fig:WindingIllustration}). Assuming that $c$ is continuous apart from at the \monsp,  then we must have a positive winding number object next to a negative winding number object. Thus two $a = \pi/2$ objects next to each other (with opposite winding numbers) will give contributions to the Wilson Loop which approximately cancel (ultimately leading to a total contribution proportional to the square root of the number of objects, i.e. a perimeter law scaling for the Wilson Loop), and the same can be said for two $a = 0$ objects neighbouring each other. However, if we have a $a = \pi/2$ \mon with positive winding number situated as a spatial neighbour to an $a = 0$ \mon with negative winding number, then the contributions of the two objects would be additive (ultimately leading to a total contribution proportional to the number of objects, i.e. an area law scaling).\footnote{An alternative, which we don't consider in this discussion, as that there is another type of topological object, such as a monopole, neighbouring the \monp.}

\begin{figure}
 \begin{center}
  \includegraphics[height = 3cm]{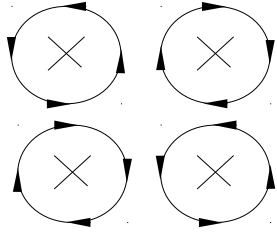}
 \end{center}
\caption{An Illustration of how topological objects can be distributed across a plane. The cross indicates the centre of the topological object. The circles and arrows indicate the winding of the parameter $c$ around each topological object. In this simplified picture, it is clear that if $c$ is continuous outside the topological object, the arrows of neighbouring objects must point in the same direction at the point of contact between them. In this simple picture, each object of positive winding number (clockwise arrows) must have objects with negative winding number neighbouring to it. While this can be avoided by having the direction of flow of $c$ doubling back on itself, such a scenario would require that the flow should be three times as large on another part of the circle, meaning that the problem is not eliminated but just transferred elsewhere.}\label{fig:WindingIllustration}
\end{figure}

The total Wilson Loop will be the product of a boundary term and contributions from all the \mons contained within the planar Wilson Loop. As we can expect the number of \mons to be proportional to the area of the loop, this leads to an area law scaling for the Wilson Loop and a linear string tension.

\subsection{SU3}
The SU(3) system is considerably more cumbersome than the SU(2) theory. We start by constructing the operators $n^3 = \theta \lambda^3 \theta^\dagger$, $n^8 = \theta \lambda^8 \theta^{\dagger}$ and $\theta^\dagger \partial_\mu \theta$, where we parametrise $\theta = e^{ia_3\phi_3}e^{ia_2\phi_2}e^{ia_1\phi_1}e^{id_3 \lambda^3 + i d_8 \lambda^8}$.
We find that
\begin{align}
\theta^\dagger \partial_\mu \theta = & S_{1\mu}+S_{2\mu}+S_{3\mu}+S_{4\mu}\\
S_{1\mu}=& e^{-id_3 \lambda^3 -i d_8 \lambda^8} \partial_\mu e^{id_3 \lambda^3 +i d_8 \lambda^8}\\
S_{2\mu}=&e^{-id_3 \lambda^3 -i d_8 \lambda^8}e^{-ia_1\phi_1} \partial_\mu (e^{ia_1\phi_1})e^{id_3 \lambda^3 +i d_8 \lambda^8}\\
S_{3\mu}=&e^{-id_3 \lambda^3 -i d_8 \lambda^8}e^{-ia_1\phi_1} e^{-ia_2\phi_2} \partial_\mu (e^{ia_2\phi_2}) e^{ia_1 \phi_1}e^{id_3 \lambda^3 +i d_8 \lambda^8}\\
S_{4\mu}=&e^{-id_3 \lambda^3 -i d_8 \lambda^8}e^{-ia_1\phi_1} e^{-ia_2\phi_2}e^{-ia_3\phi_3} \partial_\mu (e^{ia_3\phi_3})e^{ia_2\phi_2} e^{ia_1 \phi_1}e^{id_3 \lambda^3 +i d_8 \lambda^8},
\end{align}
where explicit expressions for $n^3$, $n^8$ and $S_{1\mu},\ldots,S_{4\mu}$ are given in  \ref{app:su3}.

Close to the \mon we can write the Abelian decomposition as
\begin{gather}
U_\sigma = \hat{X}_\sigma \theta_\sigma e^{i \delta\sigma\lambda^3 \hat{u}_{3\sigma} +i \delta\sigma\lambda^8 \hat{u}_ {8\sigma}} \theta^\dagger_{\sigma+\delta \sigma},
\end{gather}
which gives
\begin{gather}
e^{i\delta\sigma\lambda^3  \hat{u}_{3\sigma} +i\delta\sigma \lambda^8 \hat{u}_{8\sigma}} = \theta_\sigma^\dagger\hat{X}^\dagger U\theta_{\sigma+\delta \sigma}.
\end{gather}
The two components of the field strength will be given by the integral of $\hat{u}_3$ or $\hat{u}_8$ around a loop surrounding the \mon at an $a_i$ that have fallen to some background level. If the fields $\hat{X}$ and $U$ are smooth, and with $\theta$ differentiable along this path, we have
\begin{align}
\hat{u}_{3\sigma}  = &-i \frac{1}{\tr (\lambda^3)^2} \tr(\lambda^3 \theta^\dagger \partial_\sigma \theta)\nonumber\\
\hat{u}_{8\sigma}  = &-i \frac{1}{\tr (\lambda^8)^2} \tr(\lambda^8 \theta^\dagger \partial_\sigma \theta)\label{eq:62}
\end{align}
which implies that (using equations (\ref{eq:D1}) - (\ref{eq:D2}))
\begin{align}
\hat{u}^3_{\sigma} =&  \left(\partial_\sigma d_3 - \sin^2 a_1 \partial_\sigma c_1 + \frac{\sin^2 a_2}{2} \cos(2a_1) \partial_\sigma c_2 + \partial_\sigma c_3\frac{\sin^2 a_3}{2} {(\cos (2a_2) -2)} \cos(2a_1) \right)-\nonumber\\
& \left(\partial_\sigma c_3 \frac{\sin(2a_3)}{2} \sin a_2 \sin 2a_1 \sin(c_1 + c_3 - c_2)\right)+\partial_\sigma a_3 \sin a_2 \sin(2a_1) \cos(c_1+ c_3 - c_2)\nonumber\\
\hat{u}^8_{\sigma} =& \left(\partial_\sigma d_8 - \partial_\sigma c_2 \frac{\sqrt{3}\sin^2 a_2}{2} + \sqrt{3} \partial_\sigma c_3 \frac{\sin^2 a_3}{2}\cos(2a_2)\right).
\end{align}
The components of the field strength tensor are then given by (excluding terms which will not contribute either because they are multiples of $2\pi$ or zero)
\begin{align}
\hat{F}^3_{\mu\nu} d\Sigma^{\mu\nu} = & \oint_{\tilde{C}} d\sigma \bigg( - \sin^2 a_1 \partial_\mu c_1 + \frac{\sin^2 a_2}{2} \cos(2a_1) \partial_\mu c_2 + \partial_\mu c_3\frac{\sin^2 a_3}{2} (\cos (2a_2) -2) \cos(2a_1)-\nonumber\\
&\phantom{\oint_C d\sigma \bigg(}\partial_\mu c_2 \frac{\sin 2 a_3}{2} \sin a_2 \sin2a_1 \sin(c_1 + c_3 - c_2)\bigg)\nonumber\\
\hat{F}^8_{\mu\nu} d\Sigma^{\mu\nu} = & \oint_{\tilde{C}} d\sigma \left(- \partial_\mu c_2 \frac{\sqrt{3}\sin^2 a_2}{2} + \sqrt{3} \partial_\mu c_3 \frac{\sin^2 a_3}{2}\cos(2a_2)\right).\label{eq:fsu3}
\end{align}
where $d\Sigma$ is an element of area which covers the \mon and $\tilde{C}$ the curve which surrounds the \monp. Only those $c_i$s which wind around the \mon will vary on this infinitesimal loop; all the other parameters can be treated as constant. These field strengths are non-zero if one of the $c_i$ has a non-zero winding number around the \monp, and they will not always be an integer multiple of $2\pi$. In particular, for the terms proportional to $\partial_\mu c_2$ and $\partial_\mu c_3$ we may pick infinitesimal loops where the $a$ parameters do not vary significantly. The \mons thus will lead to a non trivial structure in the restricted field strength, and if the number of \mons within an area is proportional to that area, an area law for the expectation value of the Wilson loop and confinement.
\subsection{The Electromagnetic field}\label{sec:4.3}
We saw in equation (\ref{eq:7}) that the field strength can be written as,
\begin{gather}
g\hat{F}_{\mu\nu}^j = \partial_\mu \hat {u}^j_\nu - \partial_\nu \hat {u}^j_\mu
\end{gather}
with (from equation (\ref{eq:reallydull}))
\begin{gather}
\hat{u}^j_\mu = -\frac{1}{2}\tr(n^j(gA_\mu - i\theta \partial_\mu \theta^\dagger)).
\end{gather}
In~\cite{Cho:1980}, the potential is split into two parts $A_E=\tr\; n^j A_\mu$ and $A_M=\frac{1}{2g} i \tr\; n^j \theta \partial_\mu \theta^\dagger$. The field strength,
\begin{gather}
\hat{F}_{\mu\nu}^j = \frac{1}{2}\left[\partial_\mu \tr (n^j A_\nu) - \partial_\nu \tr (n^j A_\mu)\right] + \frac{i}{8g}\tr(n^j[\partial_\mu n^k,\partial_\nu n^k]),
\end{gather}
is often split into two components,\footnote{In the literature, $\hat{E}$ is known as `electric' and $\hat{H}$ as `magnetic', but we will not be following this terminology but refer to the $ti$ components of the fields as electric and the $ij$ components as magnetic, where $i$ and $j$ are spatial directions. For a monopole, $\hat{H}$ is purely magnetic and the terminology is sensible; however \mons are not monopoles and will usually also carry an electric field.  }
\begin{align}
\hat{E}_{\mu\nu}^j =&\frac{1}{2}\left[\partial_\mu \tr (n^j A_\nu) - \partial_\nu \tr (n^j A_\mu)\right] \nonumber\\
\hat{H}_{\mu\nu}^j =& \frac{i}{8g}\tr(n^j[\partial_\mu n^k,\partial_\nu n^k]).
\end{align}
The magnetic current $k^\mu$ can be defined through Maxwell's equation
\begin{gather}
\frac{1}{2} \epsilon^{\mu\nu\rho\sigma}\partial_\nu \hat{F}_{\rho\sigma} = k^\mu,
\end{gather}
and while $\frac{1}{2} \epsilon^{\mu\nu\rho\sigma}\partial_\nu \hat{E}_{\rho\sigma} = 0$, as in electromagnetism, in general $\frac{1}{2} \epsilon^{\mu\nu\rho\sigma}\partial_\nu \hat{H}_{\rho\sigma} \neq 0$. This means that this field strength, coming from the colour field rather than the gauge field, is associated with a magnetic source. Note, however, that it is not in general a pure monopole (except for certain specific choices of $\theta$), since the electric current, defined through $\partial^\mu \hat{H}_{\mu\nu} \neq 0$. Neither $\hat{H}_{\mu\nu}$ nor $\hat{E}_{\mu\nu}$ are by themselves gauge invariant, but only $\hat{F} = \hat{E}+\hat{H}$. We expect that close to the \mon{} solutions $\hat{F}$ will be dominated by $\hat{H}$ since this is the part of $\hat{F}$ which is sensitive to the winding of $c$ around the centre of the \monp. In analogy to QED, we shall refer to the components of $\hat{F}$ (and also $\hat{H}$) as the electromagnetic fields.
\subsubsection{SU(2)}
In SU(2), we see that (using equations (\ref{eq:tdtsu2}) and (\ref{eq:F1}) and setting $d=0$),
\begin{align}
\hat{H}_{\mu\nu}^3 =&-\frac{i}{2g}\tr \lambda^3 \left[  \partial_\mu a \phi + \sin a \cos a \bar{\phi}\partial_\mu c -  \sin^2 a \partial_\mu c \lambda^3,\partial_\nu a \phi + \sin a \cos a \bar{\phi}\partial_\nu c -  \sin^2 a \partial_\nu c \lambda^3 \right]\nonumber\\
=&-\frac{\sin (2a)}{g} (\partial_\mu a \partial_\nu c - \partial_\mu c \partial_\nu a).\label{eq:H3}
\end{align}
The Yang-Wu monopole~\cite{YangWu,tHooft1976,Polyakov:1974} is one classic magnetic monopole solution for  winding number $\nu = 1$. Here we parametrise the coordinates as (with the origin at the location of the monopole)
\begin{gather}
\left(\begin{array}{c} t\\x\\y\\z\end{array}\right) = r \left(\begin{array}{c}\cos\psi_3\\ \sin\psi_3\cos\psi_2\\\sin \psi_3\sin\psi_2\cos\psi_1\\\sin\psi_3\sin\psi_2\sin\psi_1 \end{array}\right).\label{eq:cs1}
\end{gather}
The reason we choose this parametrisation is so that we can easily respect the symmetries of the theory. Since we are considering point-like objects, it is most natural to use a spherical polar coordinate system. QCD, after averaging over the gauge links, possesses a spherical symmetry, which we have broken by inserting the Wilson Loop in the $xt$ plane. However, we should still be free to rotate the coordinate axes in the $yz$ plane. This particular parametrisation allows us to absorb such rotations into a shift of $\psi_1$. If, however, we parametrised the coordinates differently (for example by exchanging $t$ and $z$ in equation (\ref{eq:cs1})), a rotation in the $yz$ plane would require some non-trivial change of all the angular variables. The only changes we can easily make to the coordinate parametrisation are to interchange the $t$ and $x$ axes or rotate in the $yz$ plane (we shall only consider rotations of $\pi/2$ since we are primarily interested in comparing with a lattice theory).

For the Yang-Wu monopole we select the solution with $c = \psi_1$ and $a = \psi_2/2$.\footnote{Here one would have to choose the coordinates parametrising $\theta$ within the ranges $0<a<\pi$, $-\pi/2<d<\pi/2$, $c \in \mathbb{R}$, rather than as we set it out earlier. Note that for this monopole, $\theta$ is discontinuous ($\theta(\psi_2 = 2\pi) = - \theta(\psi_2 = 0)$) but $n^3$ remains continuous.} This leads to  the solution for the electric and magnetic fields $\mathbf{E} = 0$, $\mathbf{B} = \frac{1}{2g} \frac{\mathbf{r_s}}{r_s^3}$, where $\mathbf{r_s} = (x,y,z)$; in other words the true magnetic analogue of a point like electric charge. Note that there is no time dependence in the monopole field strength: these objects will reveal themselves as stings of electromagnetic flux extending in the time direction.

The literature concentrates on the possibility of finding these monopoles in the colour field, which are then assumed to condense and provide confinement through a dual Meissner effect. We will later, therefore, in addition to the \mon solution suggested above, consider what structures may appear in the fields if these Yang-Wu monopoles dominate.

In our case, we have something a little different. We will start by considering the coordinate system given in equation (\ref{eq:cs1}). We consider the objects close to $a = \pi/2$ here; the objects at $a = 0$ can be treated in a similar way. We can then write the variables, at all locations away from the singularity, as $a = \pi/2 - G(r,\psi_1,\psi_2,\psi_3)$ where $G$ is zero at $r = 0$ and $c = R(r,\psi_1,\psi_2,\psi_3)$; and for a winding number one \mon we require $R(\psi_3+2\pi) = R(\psi_3)+2\pi$. The functions $R$ and $G$ will depend on the gauge, but in a way which allows $\hat{F}^3$ to remain gauge-invariant. This gives us
\begin{align}
\tan\psi_1 = &\frac{z}{y}&\tan\psi_2 = &\frac{\sqrt{z^2 + y^2}}{x}\nonumber\\
\tan\psi_3 =& \frac{\sqrt{x^2 + y^2 + z^2}}{t}& r = & \sqrt{x^2 + y^2 + z^2 + t^2},
\end{align}
and
\begin{align}
\frac{\partial \psi_1}{\partial z} =& \frac{\cos \psi_1}{r \sin\psi_2 \sin \psi_3} =\frac{y}{z^2 + y^2}\nonumber\\
\frac{\partial \psi_1}{\partial y} =& -\frac{\sin \psi_1}{r \sin\psi_2 \sin \psi_3} = -\frac{z}{z^2 + y^2}\nonumber\\
\frac{\partial \psi_2}{\partial z} =& \frac{\sin \psi_1 \cos \psi_2}{r  \sin \psi_3} = \frac{xz}{\sqrt{z^2 + y^2}(x^2 + y^2 + z^2)} \nonumber\\
\frac{\partial \psi_2}{\partial y} =& \frac{\cos \psi_1 \cos\psi_2}{r  \sin \psi_3}= \frac{xy}{\sqrt{z^2 + y^2}(x^2 + y^2 + z^2)}\nonumber\\
\frac{\partial \psi_2}{\partial x} =& -\frac{\sin\psi_2}{r  \sin \psi_3}=-\frac{\sqrt{z^2 + y^2}}{x^2 + y^2 + z^2}\nonumber\\
\frac{\partial \psi_3}{\partial z} =& \frac{\cos\psi_3\sin\psi_2\sin \psi_1}{r } = \frac{zt}{\sqrt{x^2 + y^2 + z^2}(x^2 + y^2 + z^2 + t^2)} \nonumber\\
\frac{\partial \psi_3}{\partial y} =& \frac{\cos\psi_3\sin\psi_2\cos \psi_1}{r}=\frac{yt}{\sqrt{x^2 + y^2 + z^2}(x^2 + y^2 + z^2 + t^2)}\nonumber\\
\frac{\partial \psi_3}{\partial x} =& \frac{\cos\psi_3\cos\psi_2}{r}= \frac{xt}{\sqrt{x^2 + y^2 + z^2}(x^2 + y^2 + z^2 + t^2)}\nonumber\\
\frac{\partial \psi_3}{\partial t} =& -\frac{\sin\psi_3}{r}=-\frac{\sqrt{x^2 + y^2 + z^2}}{(x^2 + y^2 + z^2 + t^2)}\nonumber.
\end{align}
The simplest possibility is to set $G=G(kr)$ for some $k$ and $R = R(\psi_3)$, neglecting the $\psi_1$ and $\psi_2$ dependence. We shall consider a more general solution later. Using equation (\ref{eq:H3}), we obtain,
\begin{align}
\hat{H}^3_{zt} = \hat{E}^3_z = & \frac{1}{g}\sin (2 G) R' G' \left(\frac{\partial r}{\partial z} \frac{\partial \psi_3}{\partial t} -  \frac{\partial r}{\partial t} \frac{\partial \psi_3}{\partial z} \right)\nonumber\\
=&\frac{\sin (2 G)}{r} R' G' \left( -\sin^2\psi_3 \sin\psi_2 \sin\psi_1 - \cos^2 \psi_3 \sin\psi_2 \sin \psi_1\right)\nonumber\\
=&-\frac{\sin (2 G)}{r} R' G' \frac{z}{\sqrt{x^2 + y^2 + z^2}},
\end{align}
where $R'$ and $G'$ indicate the first derivative of these functions and $\mathbf{E}^3$ is the electric field. Similarly,
\begin{align}
\hat{H}^3_{yt} = & \hat{E}^3_y = -\frac{1}{g}\frac{\sin (2 G)}{r} R' G' \frac{y}{\sqrt{x^2 + y^2 + z^2}}\nonumber\\
\hat{H}^3_{xt} = & \hat{E}^3_x = -\frac{1}{g}\frac{\sin (2 G)}{r} R' G' \frac{x}{\sqrt{x^2 + y^2 + z^2}}.
\end{align}
For the magnetic field $\mathbf{B}^3$ we find,
\begin{align}
\hat{H}^3_{zy} = -\hat{B}^3_x =&   \frac{1}{g}{\sin 2G} G' R' \left(\frac{z}{r} \frac{\partial \psi_3}{\partial y} - \frac{y}{r} \frac{\partial \psi_3}{\partial z}\right)\nonumber\\
=&-\frac{1}{g}\frac{\sin 2G}{r} G' R' \left( \sin\psi_3\cos \psi_3 \sin^2 \psi_2\sin\psi_1 \cos \psi_1 - \sin\psi_3\cos \psi_3 \sin^2 \psi_2\sin\psi_1 \cos \psi_1\right)\nonumber\\
 =& 0\nonumber\\
\hat{H}^3_{xz} = -\hat{B}^3_y = & 0\nonumber\\
\hat{H}^3_{yx} = -\hat{B}^3_z = & 0.
\end{align}
Here, the \mon generates a pure electric field.
This is, of course, not the only way we can parametrise the coordinates. For example, we could have chosen the coordinate system
\begin{gather}
\left(\begin{array}{c} z\\y\\t\\x\end{array}\right) = r \left(\begin{array}{c} \sin \psi_3\sin\psi_2\sin\psi_1\\\sin\psi_3\sin\psi_2\cos\psi_1\\ \sin\psi_3\cos\psi_2\\\cos\psi_3\end{array}\right),
\end{gather}
which in effect swaps $t$ and $x$. This gives a different solution for the field strength tensor
\begin{align}
\hat{H}^3_{xt} = \hat{E}^3_x = & \frac{1}{g}\frac{\sin (2 G)}{r} R' G' \frac{t}{\sqrt{t^2 + y^2 + z^2}}\nonumber\\
\hat{H}^3_{yx} = -\hat{B}^3_z = & -\frac{1}{g}\frac{\sin (2 G)}{r} R' G' \frac{y}{\sqrt{z^2 + y^2 + t^2}}\nonumber\\
\hat{H}^3_{zx} = \hat{B}^3_y = &-\frac{1}{g}\frac{\sin (2 G)}{r} R' G' \frac{z}{\sqrt{z^2 + y^2 + t^2}},\label{eq:ds2}
\end{align}
with the other components of $\hat{H}^3$ zero. We could also use any other rotation of the coordinates consistent with the space time symmetries.

For a general function $G(r,\psi_1,\psi_2,\psi_3)$ and $R(\psi_1,\psi_2,\psi_3)$ (neglecting any radial dependence of $c$, which may in practice only be valid for small enough distances and certain choices of gauge), with $\partial_i \equiv \partial/\partial \psi_i$ and $r^2_{yz} = y^2+z^2$ and $r_{xyz}^2 = x^2+y^2+z^2$, we find for the parametrisation (\ref{eq:cs1})
\begin{align}
\hat{H}^3_{zt} = &-\frac{1}{g}\sin 2G\bigg(\partial_r  G\partial_3 R \frac{z}{r r_{xyz}} - (\partial_1 G \partial_3 R - \partial_3G\partial_1R) \frac{y r_{xyz}}{r^2 r_{yz}^2} - \nonumber\\
&\phantom{-\frac{1}{g}\sin 2G ()}(\partial_2G\partial_3R-\partial_3G \partial_2 R)\frac{zx}{r_{yz}r_{xyz} r^2} + \partial_1R \partial_r G \frac{t y}{r r_{yz}^2} + \partial_rG  \partial_2 R \frac{zxt}{r_{yz}r^2_{zyx} r}  \bigg)
\nonumber\\
\hat{H}^3_{yt} = &-\frac{1}{g}\sin 2G \bigg(\partial_r G \partial_3 R  \frac{y}{r_{xyz} r} + \partial_r G \partial_2 R \frac{xyt}{r_{yz}r^2_{zyx} r} - \partial_r G \partial_1 R \frac{zt}{r r_{yz}^2} + \nonumber\\
&\phantom{-\frac{1}{g}\sin 2G ()}(\partial_1R \partial_3 G - \partial_1 G \partial_3 R) \frac{z r_{xyz}}{r^2 r_{yz}^2} + (\partial_2 G \partial_3 R- \partial_2 R \partial_3 G)\frac{xy}{r^2 r_{xyz}r_{yz}} \bigg)
\nonumber\\
\hat{H}^3_{xt} = &-\frac{1}{g}\sin2G\bigg(\partial_rG  \partial_3 R \frac{x}{r r_{xyz}} - \partial_2 G \partial_3 R \frac{r_{yz}}{r^2 r_{xyz}} - \partial_r G \partial_2 R \frac{t r_{yz}}{r r^2_{zyx}} + \partial_3 G \partial_2 R \frac{r_{yz}}{r^2 r_{xyz}} \bigg)
\nonumber\\
\hat{H}^3_{zy} = &\frac{1}{g}\sin 2G \bigg(-\partial_r G \partial_1 R \frac{1}{r} + (\partial_1 G \partial_2 R  - \partial_2 G \partial_1 R)\frac{x}{r_{yz} r^2_{zyx}} + (\partial_1G  \partial_3 R  - \partial_3 G \partial_1 R)\frac{t}{r_{xyz} r^2} \bigg)
\nonumber\\
\hat{H}^3_{zx} = &\frac{1}{g}\sin 2G \bigg(-\partial_r G \partial_1 R \frac{xy}{r r_{yz}^2} - \partial_r G \partial_2 R \frac{z}{r r_{yz}} +(\partial_2 G \partial_1 R - \partial_1 G \partial_2 R)\frac{y}{r_{yz} r_{xyz}^2} + \nonumber\\
&\phantom{-\frac{1}{g}\sin 2G ()}(\partial_1 G \partial_3 R - \partial_3 G \partial_1 R) \frac{yxt}{r^2 r_{xyz} r_{yz}^2} + (\partial_2 G \partial_3 R - \partial_3 G \partial_2 R) \frac{zt}{r^2 r_{xyz} r_{yz}}\bigg)
\nonumber\\
\hat{H}^3_{yx} = &\frac{1}{g}\sin 2G \bigg(\partial_r G \partial_1 R \frac{xz}{r r_{yz}^2} - \partial_r G \partial_2 R \frac{y}{r r_{yz}} + (\partial_1 G \partial_2 R - \partial_1 R \partial_2 G) \frac{z}{r_{yz} r^2_{zyx}} + \nonumber\\
&\phantom{-\frac{1}{g}\sin 2G ()}(\partial_3 G \partial_1 R - \partial_1 G \partial_3 R) \frac{zxt}{r^2 r_{yz}^2 r_{xyz}}\bigg)\label{eq:biggy}
\end{align}
We must also consider every permutation and rotation of the coordinates to obtain other possible solutions.

At small $r$, each of these terms gives a finite and non-zero result if the derivatives of $G$ and $R$ are non-zero and both $G$ and $\partial_{\psi_i} G$ are proportional to $r$.
Our ultimate goal is to measure the electromagnetic fields on the lattice to try to identify these structures, and this means evaluating how the fields behave at large and intermediate distances.
 This is complicated by the unknown functions $G$ and $R$ (which are also not gauge invariant and therefore uniquely defined). We require that at least one of the $\partial_{\psi_i} R \neq 0$ (and, in practice, on average take an integer value).
 We can, however, study the explicit coordinate dependence within equation (\ref{eq:biggy}) for $G \sim r$ (obviously this is in practice only an approximation, as we only consider part of the expression for the field strength, but nonetheless we may expect these features to be visible to a certain extent).
We can see three types of behaviour at large distance. Firstly, there are terms such as $(\partial_1 G \partial_2 R - \partial_2 R \partial_1 G) \frac{z}{r_{yz} r^2_{xyz}}$ (taken from $\hat{H}^3_{yx}$) which fall away at least as $1/z_i$ for large spatial distances, while increasing at large $t$ if
 $\partial_i G$ is linear in $r$ -- we find a possible solution where the field grows with distance, although we may hope that these are excluded in practice.
Secondly, we have terms which form a line of large electromagnetic field in one of
the directions, for example in $\hat{H}^3_{xz}$ we find the term
\begin{gather}
\frac{1}{g}\sin2G\partial_rG  \partial_1 R \frac{xy}{r r_{yz}^2} \le \frac{1}{g}\partial_rG  \partial_1 R \frac{xy}{r r_{yz}^2} ,
\end{gather}
and the maxima of this function fall off at large distances from the origin as $1/z^3$, $1/y^2$ and $1/t$ but remain roughly independent of $x$. This will correspond to a line of large magnetic field (which may be broken or change sign) in the $x$-direction, and we will label it as a $\mu$-string, where $\mu$ indicates the direction (in practice, the line of field strength will only have a finite extent because of the $G$ and $R$ dependence). We do not mean anything by this terminology beyond that the large field may extend in the $\mu$-direction and falls of rapidly in the other directions. We require that the leading order $r$ dependence in $G$ should be linear if we wish to avoid zero or infinite field strengths at $r=0$. Alternatively, by permuting the coordinates, we also find these strings may exist in the other electromagnetic fields. Finally, we have what we will label as points where the electromagnetic flux falls as at least $1/r$ in each direction at large distances, $r\gtrsim 1/k$, for example the term $\sin 2G \partial_rG \partial_1R \frac{1}{r}$ in $\hat{H}^3_{yz}$.

We will concentrate on solutions with $\partial_1 R = 0$ and $\partial_2 R = 0$. Our reason for doing so is that we expect the discontinuity to occur in a point; thus there is no obvious reason why the objects should wind around this discontinuity in the $yz$ plane and not the $xt$ plane; if the origin is at the discontinuity and $c$ proportional to the angle we might expect $c(-\delta x_\mu) = c(\delta x_\mu) + \pi$, and only setting $c= \psi_3$ accomplishes this. If there is no winding number related to variations in $\psi_1$ and $\psi_2$, then we might expect any fluctuations in the fields in these directions to cancel out as we integrate around a loop surrounding the discontinuity. Also, only winding in the $xt$ plane can contribute to confinement. In our numerical simulations, we only use Wilson Loops in the $xt$-plane, so we restrict ourselves to these loops here. This means that we should expect a symmetry between the structures in the $x$ and $t$ components of the electro-magnetic field, and also between the $y$ and $z$ components of the field,    but not necessarily between (for example) the $x$ and $y$ components. This all suggests that it is safe to neglect this angular dependence in $R$. With this simplification, we obtain,
\begin{align}
\hat{H}^3_{zt} = \hat{E}_z^3=&-\frac{1}{g}\sin 2G\bigg(\underbrace{\partial_r  G\partial_3 R \frac{z}{r r_{xyz}}}_{\text{point}} - \underbrace{\partial_1 G \partial_3 R \frac{y r_{xyz}}{r^2 r_{yz}^2}}_{\text{x-string}} -\underbrace{\partial_2G\partial_3R\frac{zx}{r_{yz}r_{xyz} r^2}}_{\text{point}}  \bigg)
\nonumber\\
\hat{H}^3_{yt} = \hat{E}_y^3=&-\frac{1}{g}\sin 2G \bigg(\underbrace{\partial_r G \partial_3 R  \frac{y}{r_{xyz} r}}_{\text{point}} - \underbrace{\partial_1 G \partial_3 R \frac{z r_{xyz}}{r^2 r_{yz}^2}}_{\text{x-string}} + \underbrace{\partial_2 G \partial_3 R\frac{xy}{r^2 r_{xyz}r_{yz}}}_{\text{point}} \bigg)
\nonumber\\
\hat{H}^3_{xt} =\hat{E}_x^3= &-\frac{1}{g}\sin2G\bigg(\underbrace{\partial_rG  \partial_3 R \frac{x}{r r_{xyz}}}_{\text{point}} - \underbrace{\partial_2 G \partial_3 R \frac{r_{yz}}{r^2 r_{xyz}}}_{\text{point}}  \bigg)
\nonumber\\
\hat{H}^3_{zy} = \hat{B}_x^3 =&\frac{1}{g}\sin 2G \bigg( \underbrace{\partial_1G  \partial_3 R \frac{t}{r_{xyz} r^2}}_{\text{t-string}} \bigg)
\nonumber\\
\hat{H}^3_{zx} = \hat{B}_y^3=&\frac{1}{g}\sin 2G \bigg(\underbrace{\partial_1 G \partial_3 R  \frac{yxt}{r^2 r_{xyz} r_{yz}^2}}_{\text{t-string}} + \underbrace{\partial_2 G \partial_3 R  \frac{zt}{r^2 r_{xyz} r_{yz}}}_{\text{t-string}}\bigg)
\nonumber\\
\hat{H}^3_{yx} = \hat{B}_z^3=&-\frac{1}{g}\sin 2G \bigg( \underbrace{\partial_1 G \partial_3 R \frac{zxt}{r^2 r_{yz}^2 r_{xyz}}}_{\text{t-string}}\bigg).\label{eq:70}
\end{align}
In this parametrisation of the coordinates, the electromagnetic fields will therefore appear either as points (not necessarily spherically symmetric, but falling in every direction) or strings in either the $x$ or $t$ direction. We also consider the three other parametrisations of the coordinates which arise from the exchanges $x \leftrightarrow t$ and $y \leftrightarrow z$. We summarise the results in table \ref{tab:lotsofstrings}. With these choices of coordinate parametrisations, we see that $\hat{E}_x$ will only have points, $\hat{E}_y$ and $E_
z$ strings along the $x$ axis, $\hat{B}_x$ may have a string along the $x$ or $t$ axis, while $\hat{B}_y$ and $\hat{B}_z$ strings along the time axis. Of course, in practice this picture will be obscured by fluctuations in the fields and effects from the unknown functions $G$ and $R$; we also cannot say how long the strings will be. It also depends on that $G$ has some angular dependence, and in particular has some variation in the $yz$ plane (as parametrised by the angle $\psi_1$). Equally, the strings depend on different angular differentials of $G$; if $G$ has no angular dependence they will not be present, and if it depends on $\psi_2$ but not on $\psi_1$ only some of these strings will appear. We will search for these patterns of electromagnetic fields in in lattice QCD in section \ref{sec:6.3}.
\begin{table}
\begin{center}
\begin{tabular}{l l l l l l l}
\hline
Parametrisation&$\hat{E}^j_x$&$\hat{E}^j_y$&$\hat{E}^j_z$&$\hat{B}^j_x$& $\hat{B}^j_y$&$\hat{B}_z^j$
\\
\hline
Equation (\ref{eq:cs1})&point&x-string&x-string&t-string&t-string&t-string\\
$t\leftrightarrow x$&point & x-string&x-string&x-string&t-string&t-string\\
$y\leftrightarrow z$&point&x-string&x-string&t-string&t-string&t-string\\
$t\leftrightarrow x$,$y\leftrightarrow z$&point&x-string&x-string&x-string&t-string&t-string\\
\hline\hline
\end{tabular}
\end{center}
\caption{Some Possible configurations of the restricted electromagnetic fields; every time we specify a string there may also additionally be points. This table lists the fields which could in principle be present; it is not necessary that each of them would be visible for a given \monp in practice.}\label{tab:lotsofstrings}
\end{table}

We can also consider the patterns we can expect from the Wang-Yu monopole and other solutions for the fields. We have already seen that the monopole solution with the coordinate parametrisation given in equation (\ref{eq:cs1}) gives t-strings in the magnetic field; we may also study the other options when we permute the coordinates. Additionally, we list a few similar simple alternative forms for the functions $R$ and $G$. These are summarised in table \ref{tab:los2}. It should be observed that if $a$ is linked to an angular variable then either $\hat{E}_x=0$ or $\hat{B}_x=0$; while if $c$ is linked to the angular variable $\psi_2$ rather than $\psi_3$ with $G\sim r$, then the Electric fields have t-strings and the magnetic fields x-strings: the opposite of our expected model in table \ref{tab:lotsofstrings}. We note that a combination of $a = \psi_3/2;c=\psi_1$ monopoles and $a = \psi_3/2; c = \psi_2$ structures may have the same pattern of points and strings as the configuration of table $\ref{tab:lotsofstrings}$; the
difference is that there is in this case no correlation between the spatial locations of the strings and the points in the $\hat{E}^3_x$ field.
\begin{table}
\begin{center}
\begin{tabular}{l l l l l l l}
\hline
Parametrisation&$\hat{E}^j_x$&$\hat{E}^j_y$&$\hat{E}^j_z$&$\hat{B}^j_x$& $\hat{B}^j_y$&$\hat{B}^j_z$\\
\hline\hline
Equation (\ref{eq:cs1})&0&0&0&t-string&t-string&t-string\\
$t\leftrightarrow x$&0 & x-string&x-string&x-string&0&0\\
$y\leftrightarrow z$&0&0&0&t-string&t-string&t-string\\
$t\leftrightarrow x$,$y\leftrightarrow z$&0&x-string&x-string&x-string&0&0\\
\hline\hline
Equation (\ref{eq:cs1})&point&point&point&0&point&0\\
$t\leftrightarrow x$&point &0&point&0&point&point\\
$y\leftrightarrow z$&point&point&point&0&0&point\\
$t\leftrightarrow x$,$y\leftrightarrow z$&point&point&0&0&point&point\\
\hline\hline
Equation (\ref{eq:cs1})&0&point&point&point&point&point\\
$t\leftrightarrow x$&0&point&point&point&point&point\\
$y\leftrightarrow z$&0&point&point&point&point&point\\
$t\leftrightarrow x$,$y\leftrightarrow z$&0&point&point&point&point&point\\
\hline\hline
Equation (\ref{eq:cs1})&t-string&t-string&t-string&0&point&point\\
$t\leftrightarrow x$&x-string&point&point&0&x-string&x-string\\
$y\leftrightarrow z$&t-string&t-string&t-string&0&point&point\\
$t\leftrightarrow x$,$y\leftrightarrow z$&x-string&point&point&0&x-string&x-string\\
\hline\hline
\end{tabular}
\end{center}
\caption{Some Possible configurations of the restricted electromagnetic fields for the parametrisations $a = \psi_2/2$, $c = \psi_1$ (including the Wang-Yu monopole) (top); $a = \psi_3/2; c = \psi_2$ (second top); $a =\psi_3/2; c=\psi_1$ (second bottom); $G=kr;c=\psi_2$ (bottom) }\label{tab:los2}
\end{table}

That the structures in the field strength are not point like but contain vortices is important because it means that we cannot evade the large field strength by distorting the surface $\Sigma$. For example, we may create a bump in $\Sigma$ to avoid the \mon{} and the structure in the $\hat{F}_{xt}$ field: however, this will require integrating over a surface in (for example) the $yt$ plane where $\hat{F}_{yt}$ is large, and the $yx$ plane where $F_{yx}$ is large. This means that it is not naively ruled out that the surface integral bound by the Wilson loop is not affected by the choice of surface to integrate over.
\subsubsection{SU(3)}
 SU(3) gives a similar picture as SU(2); here we only outline the argument about which structures will emerge in the field strength. The starting point is equation (\ref{eq:62}), and by differentiating this we can construct the two field strengths $\hat{F}^3$ and $\hat{F}^8$. For example, we find that
\begin{multline}
 \hat{F}^8_{\mu\nu} = \partial_\mu \hat{u}^8_\nu - \partial_\nu \hat{u}^8_\mu = -\sqrt{3}\sin (2 a_2) (\partial_\mu a_2 \partial_\nu c_2 - \partial_\mu c_2 \partial _\nu a_2) +\\ \sqrt{3} \cos(2a_2) \sin(2a_3) (\partial_\mu a_3 \partial_\nu c_3 - \partial_\mu c_3 \partial _\nu a_3) +\\ \sqrt{3} \sin^2 a_3 \sin (2a_2) (\partial_\nu c_3 \partial_\mu a_2 - \partial_\mu c_3 \partial_\nu a_2).
\end{multline}
The first two terms are similar to the quantities already discussed in SU(2), and the same analysis applies. The third term, mixing the differential of $c_3$ with that of $a_2$, is new and we need to consider this as well. $\hat{F}^3_{\mu\nu}$ is similar to $\hat{F}^8_{\mu\nu}$, with similar expressions arising, plus additionally terms proportional to the product of the differential of to of the $c_i$s, for example $\partial_\mu c_1 \partial_\nu c_2$. We therefore expect to see the same structures as in SU(2), plus possibly some additional structures from these extra terms. For example, in $\hat{F}^8_{it}$ with $c_3 \sim \psi_3$, we find the contribution $\frac{x_i t}{r^2 r_{xyz}} \frac{\partial a_2}{\partial t} - \frac{r_{xyz}}{r^2 } \frac{\partial a_2}{\partial x_i}$, while in $\hat{F^8}_{ij}$ we find the terms $\frac{x_i t}{r^2 r_{xzy}} \frac{\partial a_2}{\partial x_j} - \frac{x_j t}{r^2 r_{xzy}} \frac{\partial a_2}{\partial x_i}$. In general $a_2$, unlike $a_3$, will not be at its minimum or maximum value at the \mon{}. There is therefore no reason to require that $\partial a_2/\partial \psi_{1,2}$ will be proportional to $r$ at small $r$ (we can discount the possibility that $c_2$ and $c_3$ both have winding around the same point). We can also neglect terms such as $\partial_\mu c_1 \partial_\nu c_2$ because we do not expect both $c_1$ and $c_2$ to wind around the same point. This means that these additional objects in the electromagnetic field strength will just be points. The same argument applies to all the additional terms in the $\hat{F}^3_{\mu\nu}$. Therefore, the electromagnetic structure in SU(3) will be the same as in SU(2) but there might be additional point-like structures appearing in all the components of the electromagnetic field; this applies equally for both of the electromagnetic field tensors, $\hat{F}^{3}_{\mu\nu}$ and $\hat{F}^{8}_{\mu\nu}$.
\subsection{Gauge Invariance}\label{sec:4.4}
$\hat{F}^j$ is gauge invariant, which is what we expect for the mechanism behind confinement. However $\theta$, $X$ and $U$ are dependent on the gauge. This presents a conceptual difficulty, since we have stated that the gauge invariant field strength is influenced by objects which arise due to features in the gauge dependent parametrisation of $\theta$. The key contribution, however, comes from the winding of $c$ around the \monp, and we here show that this is gauge-invariant (at least in SU(2)).

Since this formalism requires that the gauge field is continuous, we can only use continuous gauge transformations. However, to undo the winding in $\theta$ we require a discontinuous gauge transformation.
Thus the discontinuity in $\theta$ at $a = \pi/2$ will survive any smooth gauge transformation, and the \mons will remain. For example, in SU(2), we can parametrise an infinitesimal gauge transformation as
\begin{gather}
\Lambda = \left(\begin{array}{cc}\cos l_1& i \sin l_1 e^{il_2}\\ i \sin l_1 e^{-il_2} &\cos l_1\end{array}\right)\left(\begin{array}{cc} e^{il_3}&0\\0&e^{-il_3}\end{array}\right),
\end{gather}
with $l_1$ and $l_3$ infinitesimal and $0<l_2<2\pi$. Recall that under a gauge transformation $\theta \rightarrow \Lambda\theta$.

Multiplying $\Lambda\theta$ together (choosing $d_3 = 0$) gives
\begin{gather}
 \Lambda\theta = \left( \begin{array}{cc}
                         \cos l_1 \cos a - l_1 \sin a e^{i(l_2-c-2l_3)}&i\sin a e^{i(c+2l_3)} + i \cos a l_1 e^{il_2}\\
                         i\sin a e^{-i(c+2l_3)} + i \cos a l_1 e^{-il_2}&\cos l_1 \cos a - l_1 \sin a e^{-i(l_2-c-2l_3)}
                        \end{array}\right)
                        \left(\begin{array}{cc}
                               e^{il_3}&0\\
                               0&e^{-il_3}
                              \end{array}\right).
\end{gather}
If we define
\begin{gather}
 \cos a' = \sqrt{\cos^2 a (1-l_1^2) - 2l_1 \sin a\cos a \cos (l_2-c) + l_1^2 \sin^2 a }, \label{eq:cosap}
\end{gather}
then we can rewrite this as
\begin{multline}
 \Lambda\theta = \left( \begin{array}{cc}
                         \cos a'&i \sin a' e^{ic'}\\
                         i\sin a' e^{-ic'}&\cos a'
                        \end{array}\right)\\
                        \left(\begin{array}{cc}
                               \frac{\cos l_1 \cos a - l_1 \sin a e^{i(l_2-c-2l_3)}}{\cos a'}e^{il_3}&0\\
                               0&\frac{\cos l_1 \cos a - l_1 \sin a e^{-i(l_2-c-2l_3)}}{\cos a'}e^{-il_3}
                              \end{array}\right)\nonumber
                              \end{multline}
                              \begin{gather}
\sin a'e^{ic'} = \frac{(\sin a e^{i(c+2l_3)} + \cos a l_1 e^{il_2})(\cos a - l_1 \sin a e^{i(l_2-c-2l_3)})}{\cos a'}.
\end{gather}
We fix $d_3 = 0$, so we remove the matrix on the right by a $U(1)$ rotation.
From the remaining term, we can read off $e^{ic'}$ as the phase of the top right component of the matrix on the left. In particular, we see that
\begin{align}
 \sin a' e^{i(c'-c)} =& (\sin a e^{i(2l_3)} + \cos a l_1 e^{i(l_2-c)})\frac{\cos a - l_1 \sin a e^{i(l_2-c-2l_3)}}{\cos a'}\nonumber\\
 \sin(c'-c) =& \frac{\sin a \cos a}{\sin a' \cos a'}\sin 2l_3 + l_1 \frac{\cos^2 a}{\sin a' \cos a'} \sin(l_2-c) -l_1 \frac{\sin^2 a \sin(l_2-c)}{\sin a' \cos a'}.
\end{align}
If $ a \not{\!\!\sim}\; \pi/2$ and $a \not{\!\!\sim}\; 0$, we can read off, to lowest order in $l_1$ and $l_3$ and given that $c = c'$ at $l_1 = l_3 = 0$ (i.e. there is no gauge transformation),
\begin{align}
 a'-a =& l_1 \cos(l_2-c)\nonumber\\
 c'-c = & 2l_3 + l_1 \sin(l_2-c)\cot a -l_1 \sin(l_2-c) \tan a
\end{align}

The winding around the \mon becomes $\lambda^3 \frac{1}{2}\tr[X_0] \oint \partial_{\tilde{\sigma}} c' d\tilde{\sigma} = \lambda^3 \frac{1}{2}\tr[X_0] \oint \partial_{\tilde{\sigma}} (c'-c) d\tilde{\sigma} + \lambda^3 \pi \nu \tr X_0$, and since $l_1$ and $l_3$ are continuous functions and $c$ only contributes to $(c'-c)$ within a trigonometric function, the winding of $\theta$ around the \mon is unaffected by a continuous gauge transformation. The location where $a = \pi/2$ or $a=0$ may, however, be shifted by a small amount. The \mons, of course, are not $\delta$-function peaks, but extended over a small area.

\subsection{Wilson Loop Dependence of the \Mon Solution.}
Each \mon is constructed from a particular configuration of the $\theta$ field, and each $\theta$ field is given by the eigenvectors of a particular set of nested Wilson Loops. This means that in addition to the gauge field, the particular topological objects we find will depend on the choice of which set of Wilson Loops is being studied. So if we choose one particular set of Wilson Loops, we will uncover one distribution of \monsp, while if we choose a different set of Wilson Loops, we may well uncover a different distribution of the objects. An obvious question is whether this eliminates the usefulness of the \mons as an explanation for confinement. After all, an instanton, Dirac magnetic monopole, or vortex are dependent only on the gauge field, and so can be measured without making any arbitrary choices concerning the $\theta$ field. The global topological charge of the field is an observable which is independent of any choices that we make. If this can be reduced to a set of local topological structures, then the natural thing is to say that these structures should also be independent of any arbitrary decisions we should make when parametrising the gauge field (such as our choice of Wilson Loop).

Obviously this issue would not be relevant if the distribution of \mons was (to a good approximation) independent of our choice of Wilson Loop. However, our numerical tests indicate that this is not the case: we see no significant correlation between the topological field strength when we use one set of Wilson Loops to construct $\theta$ compared to when we use a different set of Wilson Loops.

The Wilson loop represents the propagation of a static quark/anti-quark pair through time. The choice of Wilson Loop is equivalent to choosing where the quark/anti-quark pair under study is located. So the question posed in the first paragraph of this subsection is asking whether the topological objects which explain the confinement of quarks (if quark confinement is explained by topological objects) should manifest themselves independently of the presence of those quarks. There is no obvious reason why that should be the case. Discussion of confinement makes no sense unless there are quarks present to confine; the study of confinement in general can be reduced to the study of why each particular pair of quarks is confined. As the explanation of the confinement of a particular pair of quarks, where the position of the quarks determines the Wilson Loop which then fixes the $\theta$ we require for the decomposition and uniquely determines the topological structures contained within $\theta$, the topological objects under study here work as well as those which are present in the gauge field independently of any choice of Wilson Loop. There is no obvious reason why the key quantity that determines the strength of the confining potential (which is an observable independent of which quarks are being studied), the density (after averaging over the gauge fields) of \mon- anti-\mon pairs, should be dependent on which Wilson Loop is chosen. Thus we feel that the arbitrariness of the \mon fields does not weaken this model's use as an explanation of quark confinement.

\section{String breaking}\label{sec:3}
So far, we have not discussed the effects  of the discontinuity in $\theta$ when the Wilson Loop has degenerate eigenvalues. Although the main focus of our work is elsewhere, some discussion of this topic is required for completeness; however we will do no more than present a simple result leaving the questions arising for the physical consequences of this result unanswered. We will find that if the Wilson Loop has degenerate eigenvalues, then the confining potential will weaken at larger distances. This is similar to the effect known in dynamical simulations (including quark loops) known as string breaking. It is usually expressed in terms of the shielding of the potential by the quark loops. Once the quarks become separated far enough, then the binding energy becomes sufficiently large that it is advantageous to pull a quark anti-quark pair out of the vacuum, and have two shorter `strings' plus an extra pair of quarks rather than one really long `string'. Obviously, our numerical results are in a pure gauge theory, and we do not expect to observe string breaking. If our model is correct, then there must be something suppressing the occurrence of degenerate eigenvalues of the Wilson Loop in pure gauge theory, with this suppression weakened in the dynamical theory. However, this statement remains just a hypothesis, and a dedicated study, well beyond the scope of this work, is required to either confirm or deny it.

We expect that the same model of confinement will be valid in both quenched QCD (Pure Yang-Mills theory) and full QCD. This means that if confinement is explained by some particular mechanism just involving topological objects in the gauge field, then there should be some explanation in full QCD in terms of those objects or the theory surrounding them why the string is broken at large distances. But given that the explanation itself does not depend on the presence of dynamical quarks -- only the weighting given to various gauge field configurations -- then there should be some allowance for string breaking within the model, and this will apply for both quenched and full QCD. String breaking would then be activated by certain gauge fields which are suppressed in quenched QCD but present in full QCD. That our model has an allowance for string breaking even though we only consider the pure Yang Mills theory is thus not surprising.

We wish to evaluate the change of the individual gauge links of the Wilson line as we gradually increase the spatial extent of the Wilson Loop. Again, we consider a $T\times R$ rectangular Wilson Loop, of length $L=2(R+T)$ and $T>R$. We parametrise the position along the Loop using $0\le s < L$ and the spatial extent of the loop with $r = R/2$. We again neglect the effects of the corners of the Loop.

The building block of the Wilson Loop is
\begin{gather}
e^{i\delta\sigma \hat{u}_s^{rj}\lambda^j} = (\theta_s^r)^\dagger e^{i \delta\sigma A_{\mu(s)}(s+\frac{\delta\sigma}{2},r)} \theta_{s+\delta \sigma}^r,
\end{gather}
writing $\theta = e^{i a \phi}$ (neglecting the U(1) phase), we can use equation (\ref{eq:l3}) to write
\begin{align}
\hat{u}_s^{rj}\lambda^j \equiv& \Delta_{1s}^r\nonumber\\
= & A_{\mu(s)}(s+\frac{\delta \sigma}{2},r) +i \partial_\sigma (-ia\phi) + \sum_{n > 0} \frac{1}{n!} \left[(-ia\phi)^n \left(A_{\mu(s)}(s+\frac{\delta \sigma}{2},r) + \frac{i}{n+1}\partial_\sigma (-ia\phi)\right)\right],
\end{align}
where we use the notation
\begin{gather}
[X^n Y] \equiv \underbrace{[X,[X,\ldots [X}_{n\text{ terms}},Y]\ldots]].
\end{gather}
$a$ and $\phi$ were chosen so that $\Delta_{1s}^r$ is Hermitian, traceless and diagonal.

We now consider the evolution of $\hat{u}$ with $r$, using the machinery developed in section \ref{sec:2b} and equation (\ref{eq:thetardiff}). We write
\begin{gather}
\theta^{r+\delta r}_s = \theta^r_s e^{i \beta^{r,s} \Gamma^{r,s}},
\end{gather}
with
\begin{gather}
\Gamma^{r,s} = \left(\begin{array}{c c}0&e^{i\gamma^{r,s}}\\-e^{-i\gamma^{r,s}}&0\end{array}\right).
\end{gather}
We want to calculate
\begin{gather}
e^{i \delta \sigma \hat{u}_s^{r+\delta r,j}\lambda^j} = e^{-i\beta^{r,s}\Gamma^{r,s}} e^{-ia\phi} e^{i \delta \sigma(A_{\mu(s)}(s + \frac{\delta s}{2},r) + \delta r \partial_r A_{\mu(s))}(s + \frac{\delta s}{2},r)} e^{ia\phi} e^{i\beta^{r,s+\delta\sigma} \Gamma^{r,s+\delta \sigma}}
\end{gather}
Using equation (\ref{eq:l3}), we can again write this in the form
\begin{align}
\hat{u}_s^{r+\delta r,j}=& \frac{1}{2} \tr \lambda^j\left(\Delta_2 +i \delta_\sigma (-i\beta \Gamma) + \sum_{n>0}\frac{1}{n!}\left[(-i\beta \Gamma)^n(\Delta_2 + \frac{i}{n+1} \partial_\sigma(-i \beta \Gamma) )\right]\right)+\nonumber\\
&\frac{1}{2}\tr \lambda^j\left(\Delta_1 + \sum_{n > 0}\frac{1}{n!} [(-i\beta\Gamma)^n \Delta_1]\right)\label{eq:Idunnowhattocallthis}.
\\
\Delta_2 = &\delta r \partial_r A_\mu + \sum_{n>0}\frac{1}{n!} \delta r\left[(-ia\phi)^n \partial_r A\right]
\end{align}
We can evaluate the last term in equation (\ref{eq:Idunnowhattocallthis}) using equation (\ref{eq:lemma:su2}), which gives
\begin{gather}
\Delta_1 + \sum_{n > 0}\frac{1}{n!} [(-i\beta\Gamma)^n \Delta_1] = \cos(2\beta) (\Delta_1 - \frac{1}{2}\Gamma \tr \Gamma \Delta_1) + \frac{\sin 2\beta}{2} [\Gamma,\Delta_1] + \frac{1}{2} \Gamma \tr (\Gamma \Delta_1).
\end{gather}
Given that $\Delta_1\propto \lambda^k$ (i.e. one of the diagonal Gell-Mann operators; $\lambda^3$ in SU(2)), we can see that $\tr \Gamma \Delta_1 = 0$ and $\tr (\lambda^j [\Gamma,\Delta_1]) = 0$. This means that from these two terms only the $\frac{1}{2}\tr\; \lambda^j\cos (2\beta) \Delta_1$ part contributes to $u_s^{r+\delta r,j}$. In most situations, $\beta$ is infinitesimal, and this just leads to $\hat{u}_s^{r+\delta r,j} = \hat{u}_s^{r,j} + \ldots$ where the $\ldots$ comes from the remaining terms in equation (\ref{eq:Idunnowhattocallthis}) and is of O($\delta r$). $\hat{u}_s^{r+\delta r}$ is $\hat{u}_s^{r}$ plus a small amount, and therefore its expected absolute value increases (as we have seen in the previous sections) with $r$ once it is clear of the Wilson Loop eigenvalue degeneracy at $r=0$.  However, where there is a degeneracy in the eigenvalues of the Wilson Loop, $\beta$ will be large, and at some point close to the degeneracy $\cos 2 \beta $ will cross zero. While the first terms in equation
(\ref{eq:Idunnowhattocallthis}) will no longer all be infinitesimal, $\hat{u}_s^{r+\delta r,j}$ will lose its direct dependence on $\hat{u}_s^{r,j}$. Effectively, the value of the Wilson Loop is reset and loses all knowledge of what occurred at $R < 2 r$.

Let us suppose that for a particular configuration, these eigenvalue degeneracies in the Wilson Loop occur at distances $R = 0,R_0,R_1,\ldots$. In this case, once we are clear of $R=0$ we may expect $\hat{u}$ (on average) to rise linearly with $R$ until we reach $R_0$, where it will be reset to some value $\tilde{u}$.
Subsequently, it will rise linearly again with $R-R_0$ until $R = R_1$ and it is reset again before rising with $R-R_1$. The location of $R_0$, $R_1$ etc. will vary from configuration to configuration, which means that when we average over configurations, we can consider some mean separation between those Wilson Loops with degenerate eigenvalues $\langle R_0 \rangle$.
 The pattern we may expect is, apart from some non-linear behaviour at small $R$ as the loop escapes the degeneracy at $R=0$ the potential will increase linearly while $R \ll \langle R_0\rangle$. For $R \gg \langle R_0\rangle$ we may expect the potential to be flat,
 since on each configuration it will be somewhere between $\langle\tilde{u}\rangle$ and $\langle\tilde{u}\rangle + \rho \langle R_0 \rangle$, and on the average over configurations this will become a constant as long as the distribution of $R_0,R_1,\ldots$ is not so sharply peaked that they occur in the same place on each configuration.
There will obviously be some intermediate region between these two regimes where we transit from the linear to constant behaviour.

This, of course, precisely as we expect in dynamical QCD: at large distances, the quark loops screen the static quark potential leading to a breaking of the string. This has been observed numerous times in lattice QCD, for example in~\cite{Bali:2005fu}, although in our lattice study which excludes quark loops we do not see it.

\section{Numerical results}\label{sec:4}
We generated $16^332$ and $20^340$ quenched (i.e. pure Yang Mills) lattice QCD (SU(3)) configurations with a Tadpole Improved Luscher-Weisz gauge action~\cite{TILW,TILW2,TILW3,TILW4,TILW5} using a Hybrid Monte Carlo routine~\cite{HMC} (see table \ref{tab:1}).
The lattice spacing was measured using the string tension $\rho \sim (420 \text{MeV})^2$. We fixed to the Landau gauge and applied ten steps of improved stout smearing~\cite{Morningstar:2003gk,Moran:2008ra} with parameters $\rho = 0.015$ and $\epsilon = 0$.
$\theta$ and $\hat{U}$ were extracted from the gauge field numerically by solving equations (\ref{eq:evth}), (\ref{eq:defeq1}) and (\ref{eq:defeq2}), as outlined in  \ref{app:C}.
We constructed $\theta$ from planar Wilson Loops in the $xt$ plane. $\hat{F}_{xt}\equiv \hat{E}_x$ is therefore the component of the field strength tensor responsible for confinement. All our error analysis was performed using the bootstrap method (except for the parameters obtained from fits which estimated the errors from the $\chi^2$ distribution).
\begin{table}
{
\begin{center}
\begin{tabular}{|l l l l l|}
\hline
Lattice size & $L$ (fm)&$\beta$&$a$ (fm)& $\#$\\
\hline
$16^332$& 2.30& 8.0&0.144(1) &91
\\
$16^332$&1.84& 8.3 &0.115(1) &91
\\
$16^332$&1.58 & 8.52 &0.099(1) &82
\\
$20^340$&2.30 & 8.3&0.115(1) &20\\
\hline
\end{tabular}
\end{center}
}
\caption{Parameters for our simulations: the lattice size, measured in units of the lattice spacing, the spatial extent of the lattice, $L$, in physical units, the inverse gauge coupling $\beta$, the lattice spacing $a$, and the number of configurations in each ensemble $\#$. }\label{tab:1}
\end{table}

The main difficulty we have faced while investigating this proposed mechanism of confinement is the non-gauge invariance of $\theta$. We need to work in a gauge where the gauge field is continuous, while on the lattice there is no such gauge.
 However, to completely establish the proposed mechanism of confinement, we would need to calculate the variables $a$ and $c$ used to parametrise $\theta$.  Extracting the components $a$ and $c$ from $\theta$ is straightforward, and we presented some results for the winding of $c$ around the peaks in~\cite{Cundy:2012ee}, with some encouraging results. They are, however, gauge dependent, and there is therefore no unambiguous definition of these quantities. While the winding number is protected from continuous gauge transformations in the continuum, on the lattice it is difficult even to ensure that you are in the right gauge.  Therefore for this paper, we have concentrated on gauge invariant quantities such as the static potential and the restricted field strength.

Our aim is to show that the string tension is reproduced by that of the restricted field, and the restricted field dominated by the contribution from $\theta$, or the \mon term in the field strength (constructed from the potential $\tr \lambda^j \theta^\dagger \partial_\mu \theta$). We should show that the restricted field strength $\hat{F}_{xt}$ is dominated by points  (which would appear on the lattice as peaks a single lattice spacing across), while the other components of $\hat{F}$ contain structures extended in the directions specified in table \ref{tab:lotsofstrings}.

Measuring the string tension associated with the restricted potential is straight-forward, since it is gauge invariant, so it can be extracted from the Wilson Loop using standard methods. Equally, the restricted field strength can be measured using the plaquette definition
\begin{gather}
e^{i\hat{F}_{xt}^j(x+\frac{1}{2}a,t+\frac{1}{2}a) \lambda^j} \sim \theta^\dagger_x \hat{U}_x(x,t) \hat{U}_t(x+a,t) \hat{U}^\dagger_x(x,t+a)\hat{U}^\dagger_t(x,t)  \theta_x,
\end{gather}
and as this is gauge invariant it can be measured without any ambiguities or difficulties. Extracting $\hat{F}^j_{\mu\nu}$ from $e^{i \hat{F}^j \lambda^j}$, however, can only be done modulo $2\pi$ for $F^3$ and $4\pi/\sqrt{3}$ for $F^8$. For smooth gauge fields, where $\hat{F}$ is small (as is the case in most lattice applications), this is not a concern. We, however, as interested in those cases where $\hat{F}$ is not small, and we shall find that in some applications $|\hat{F}^j|$ may be larger than $\pi$. For example, if $\hat{F}^3 \sim 2\pi$ it is impossible to distinguish this large value from a fluctuation around zero. This is not of significance for the physics, since these shifts in $\hat{F}$ will not affect the value of the Wilson Loop, but it may be of importance in some of the following as we attempt to identify peaks in the field strength.

However, constructing the \mon contribution to the restricted field strength is more challenging because there is no clean gauge-invariant way to observe this on the lattice. In the continuum, the observable is $\lim_{a\rightarrow 0}\frac{1}{a}\tr \lambda^j (\theta^\dagger_x \theta_{x + a\hat{\mu}} - 1) \sim \tr \lambda^j (\theta^\dagger_{x + \frac{1}{2}a\hat{\mu}}\partial_\mu \theta_{x + \frac{1}{2}a\hat{\mu}})$ for lattice spacing $a$.
Since this is not gauge invariant, a gauge transformation which would be discontinuous in the continuum would lead to additional discontinuities appearing in the observable, or the removal of the discontinuities we want to observe.
 Fixing the gauge would not help us, because we could well be fixing to a discontinuous gauge.
 We solved this problem by introducing a gauge invariant quantity which measures the \mon contribution to the potential.

We introduce a field $\tilde{U}$, which is the gauge field $U$ subjected to a large number of coarse stout smearing
sweeps.
 A small amount of smearing is useful to remove discontinuities on the order of the lattice spacing; a large amount is usually seen as dangerous as it removes the physical features of the gauge field; however we can use this to our advantage.
Each Stout smearing sweep replaces $U_{x,\mu} \rightarrow U'_{x,\mu} = e^{iQ_x} U_{x,\mu}$ where $Q$ is a Hermitian operator constructed from closed loops of gauge links starting and finishing at $x$ containing the gauge link $U_{x,\mu}$ (the original smearing algorithm~\cite{Morningstar:2003gk} used plaquettes; we also used $2\times 1$ rectangles following~\cite{Moran:2008ra}).
 After a gauge transformation $U_{x,\mu} \rightarrow \Lambda_x U_{x,\mu} \Lambda_{x+\hat{\mu}}$, the smeared link transforms in the same way, $U'_{x,\mu} \rightarrow \Lambda_x U'_{x,\mu} \Lambda_{x+\hat{\mu}}$.
 The second effect of stout smearing is to smooth the gauge field, and reduce the value of the plaquette to zero and thus the potential to a constant, and in practice the potential becomes zero. Eventually, after repeating the smearing a very large number of times, the over-smeared link $\tilde{U}$ becomes, in effect, a gauge transformation of something close to the identity operator.\footnote{It has been shown that a large number of infinitesimal smearing sweeps -- not quite what we are doing here, but similar -- is equivalent to a flow in the configuration space towards an ensemble generated with infinite gauge coupling~\cite{Luscher:2009eq}.}
 Thus $\theta^\dagger_x \tilde{U}_{x,\mu} \theta_{x+\hat{\mu}}$ is both gauge invariant and in a gauge where $U$ is continuous would measure $\theta^\dagger \partial_\mu \theta$, the observable we need.
We therefore use this observable to measure the \mon portion of the field strength. In effect, we are replicating the calculation for the restricted field, but with the gauge field replaced by something close to zero, just leaving the \mon contribution to the restricted field.
 To extract the Abelian component of $\theta^\dagger_x \tilde{U}_{x,\mu} \theta_{x+\hat{\mu}}$ we perform another Abelian decomposition and extract the restricted field $\hat{ \tilde{U}}_{x,\mu}$
which satisfies $[\hat{\tilde{U}}_{x,\mu} ,\lambda^j] = 0$
and $\tr(\lambda^j(\tilde{X} -\tilde{X}^\dagger)) = 0$ with $\tilde{X} = \theta^\dagger_x \tilde{U}_{x,\mu} \theta_{x+\hat{\mu}} (\hat{ \tilde{U}}_{x,\mu})^{-1}$.
 We can then compare the field strength from this restricted field, which represents the \mon portion of the restricted field strength, with that of the restricted field $\hat{U}$.
Our expectation is that the observables calculated from the \mon field and the restricted field should be similar: the string tensions should be in agreement, and the field strength should contain a similar pattern of peaks. We denote $H^j_{\mu\nu}$ as the field strength tensor constructed using $\hat{\tilde{U}}$.

\subsection{String tension -- Fixed Wilson Loop}

Firstly, in figure \ref{fig:5} and table \ref{tab:2}, we extract the string tension, $\rho$, for the original gauge field $U$, the restricted field $\hat{U}$ and the \mon component of the restricted field $M$. We have calculated this in two ways: firstly by calculating the expectation value of the $R\times T$ Wilson Loop for one of the fields, and fitting its logarithm firstly to $VT + a + b/T$ and then $V$ to $\rho R +c+d/R$ for additional fit coefficients $a,b,c$ and $d$; and secondly by fitting the whole data set according to $\rho RT + a R + b T + c +d/R +e/T +f/(RT) + g R/T + h T/R$. The results from these two approaches were in good agreement. We tabulate the results from the combined $RT$ fit, while plot the results of the single fits.
The errors on the expectation value were calculated using the bootstrap method; we calculated the errors on the fitted parameters by measuring the surfaces of constant $\chi^2$.  In principle we should recalculate a new $\theta$ field for each Wilson Loop; however the numerical cost to do so was prohibitive for an initial study, and we firstly show results obtained by using the same $\theta$ field calculated from one set of Wilson Loops in the $xt$ plane for our entire simulation~\cite{Cundy:2013xsa}. We discuss this simplification a little more in the next subsection. We average over all Wilson Loops for the restricted field, and not just those with the correct $\theta$ field. Were we to recalculate a different $\theta$ for each Wilson Loop, the string tension extracted from the $U$ and $\hat{U}$ fields would be identical. Because we have not done this, we expect the two quantities to differ by a small amount.

We calculate $\tilde{U}$ after 100, 300, 500, 600, 800 and 1000 sweeps of stout smearing with parameters $\epsilon = 0, \rho = 0.015$ following the notation of~\cite{Moran:2008ra}, and extract the \mon portion of string tension by constructing Wilson Loops from the Abelian decomposition of $\theta^\dagger\tilde{U} \theta$. We also show the string tension for $\tilde{U}$, and can confirm that it is much smaller than that of the original gauge field and decreases as we increase the level of smearing. The observable we are using to measure the \mon portion of the string tension is unaffected by the smearing, suggesting that we have indeed measured the contribution from $\theta \partial_\mu \theta^\dagger$ rather than any remnant of the gauge field remaining after the smearing. We see a good agreement between the \mon and restricted field string tensions, within 10\% on the smaller lattices (although the difference is larger on the larger lattice where we have less statistics), suggesting that it is indeed structures
within the $\theta$ field that lead to confinement. There is no obvious change in this pattern as we change the lattice spacing.

\begin{figure}
\begin{center}
{\tiny
\input{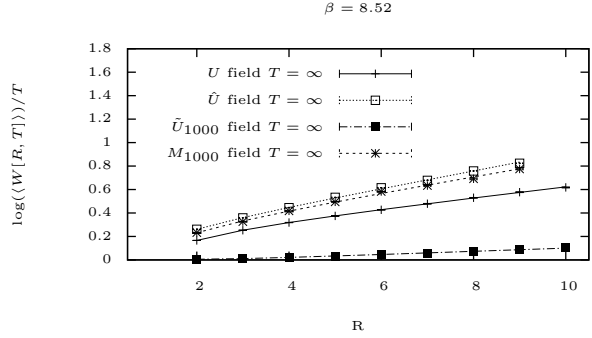}
}
\end{center}
\caption{The string tension extrapolated to infinite time for the original gauge field $U$, the restricted gauge field $\hat{U}$, the over-smeared field $\tilde{U}$ and the \mon field $M$ for the $\beta = 8.52$ ensemble}\label{fig:5}
\end{figure}

\begin{table}
\begin{center}
{\begin{tabular}{|l|l l l l|}
\hline
$\beta$&8.0&8.3&8.52&8.3L\\
\hline
$U$&0.094(2)&0.0590(8)&0.0442(6)&0.057(2)
\\
$\hat{U}$&0.116(4)&0.095(2)&0.077(1)&0.099(1)
\\
$\tilde{U}_{100}$& 0.0828(4)&0.0545(3)&0.0433(7)&0.0594(8)
\\
$M_{100}$ &0.129(4)  & 0.090(3)& 0.075(3)&0.100(1)
\\
$\tilde{U}_{300}$& 0.0460(3)&0.0301(3)&0.0239(4)&0.0316(2)
\\
$M_{300}$ &0.122(4)  & 0.086(3)& 0.072(3)&0.102(1)
\\
$\tilde{U}_{500}$&0.0316(2) &0.0216(2)&0.0174(3)&0.0218(1)
\\
$M_{500}$ & 0.124(5) & 0.088(3)& 0.072(3)&0.104(1)
\\
$\tilde{U}_{600}$& 0.0273(2)&0.0185(1)&0.0149(2)&0.0179(1)
\\
$M_{600}$ &0.103(10)  &0.087(5) & 0.076(3)&0.099(1)
\\
$\tilde{U}_{800}$&0.0213(1) &0.0147(1) &0.0124(2) &0.0144(1)
\\
$M_{800}$ &0.104(8)&0.087(5) &0.076(3)&0.099(1)
\\
$\tilde{U}_{1000}$&0.0174(1) &0.0122(1) &0.0106(2)&0.0121(1)
\\
$M_{1000}$ &0.105(9) &0.087(5) &0.077(3)&0.098(2)
\\
\hline
\end{tabular}
}
\end{center}
\caption{The string tension extrapolated to infinite time  across all our ensembles. 8.3L refers to the $20^3\times40$ ensemble.}\label{tab:2}
\end{table}
\subsection{String Tension -- Varied Wilson Loop.}
We generated the results presented in the previous section as a quick and dirty initial study, largely to demonstrate to ourselves and the wider community that this method was worth pursuing, and we first published these results in~\cite{Cundy:2013xsa} (these are the 10 smearing sweeps results of table \ref{tab:2b}). Since the initial drafts of this paper, we have commenced a more accurate study removing the simplification of the previous section, this time recomputing the Abelian decomposition for each Wilson Loop under consideration. This obviously maintains the identity between the confining potential of the full and restricted gauge fields. Our results so far are only tentative, and far from being finalised. We present early results here mainly to update our initial presentation of this study in~\cite{Cundy:2014nna}. In those proceedings, we found that although the restricted and full QCD potentials were in perfect agreement, the topological contribution to the string tension seemed to be relatively small.

If this result is genuine, there are two possible explanations: firstly, that the Maxwell term in the restricted potential, contrary to expectations, also has a large contribution to confinement. The second possibility is that our means of extracting the topological part of the string tension is flawed, and what we were measuring was not the topological part of the string tension but something else. Since our initial publication, however, we have expanded our study, and have noted another issue. We apply a small number of smearing sweeps to the gauge field initially, before applying the Abelian decomposition,  to remove various unphysical fluctuations of the order of the inverse lattice spacing. This is necessary to allow us to obtain a clear result. How many smearing sweeps should be used is difficult to assess, and for a full presentation of the results we should measure the string tension at different levels of smearing to gain some idea of the systematic error caused by this arbitrary choice. As we did so, we observed that the measured topological contribution to the smearing depended strongly on the number of initial smearing sweeps. Furthermore (unlike in the results of the previous section), we also found that at a larger number of initial smearing sweeps, there was a strong lattice spacing dependence on the ratio between the restricted and topological potentials: as the lattice spacing grew smaller, the restricted potential became increasingly dominated by the topological part. This means that our full analysis is considerably more difficult to do accurately than we had first envisaged, and, given our limited computing resources, will take longer to complete.

Some preliminary results (which are not finalised, and could change when we make our final publication) are shown table \ref{tab:2b}. As we vary the number of smearing sweeps, the string tension of the original gauge field first increases as the unphysical dislocations are removed, and then slowly decreases as the smearing gradually distorts the gauge field. This obviously begs the question of which value we should use, and our intention has been to use the number of smearing sweeps where the string tension is at its maximum (at the present time, we have kept the smearing parameters fixed at $\epsilon = 0$, $\rho = 0.015$ for this study). So far, we have only conducted this analysis for the $16^3\times 32$ configurations. Table \ref{tab:2b} lists the $\beta$ value of the configuration, the number of smearing sweeps where the maximum value of the string tension was reached (out of those which we have sampled so far), and the measured string tension for the QCD (and restricted) gauge fields, and the proportion of that string tension which we have measured as coming from the topological part of the restricted potential (after 2500 smearing sweeps).
\begin{table}
\begin{center}
 \begin{tabular}{|c|ccc|}
 \hline
  $\beta$&8.0&8.3&8.52\\
  \hline
  Smearing Sweeps&10&10&10
  \\
  $\rho$&0.0976(36)&0.0603(19)&0.0416(17)
  \\
  Proportion&0.254(14)&0.277(14)&0.334(19)\\
  \hline
   Smearing Sweeps&16&16&16
  \\
  $\rho$&0.0954(40)&0.0740(17)&0.0489(14)
  \\
  Proportion&0.247(17)&0.332(15)&0.462(27)\\
  \hline
   Smearing Sweeps&22&22&22
  \\
  $\rho$&0.0944(26)&0.0.0731(14)&0.0501(13)
  \\
  Proportion&0.334(15)&0.400(14)&0.526(27)\\
  \hline
  \end{tabular}
 \end{center}
\caption{The number of smearing sweeps where the string tension of the full QCD field was at its maximum, the string tension $\rho$ (calculated from the $U$ field) and the ratio of the topological part of the string tension to the full QCD string tension. Only the statistical errors are recorded. We list results at those numbers of smearing sweeps where, out of the measurements we have taken so far, the string tension for the $\beta 8.0$, $\beta 8.3$ and $\beta 8.52$ ensembles respectively are maximised.}\label{tab:2b}
\end{table}

It can be seen that firstly the maximum string tensions for the $\beta 8.3$ and $\beta 8.52$ configurations are higher than the results given above (they were calculated following a different amount of initial smearing). Secondly, the proportion of the string tension attributable to the topological part increases as we decrease the lattice spacing. This pattern is not merely a matter of the results being taken after a different number of smearing steps, but is apparent at each value of the initial smearing, although it is more pronounced as we increase the smearing. Our analysis is less developed on the larger lattice, but the results we have suggest that there is no significant volume dependence.

This strong dependence on the lattice spacing is awkward because it necessitates an extrapolation to the continuum limit. This is challenging with the data we have gathered so far, firstly because we do not know of a good formula to guide the extrapolation, and secondly because we are still on a fairly coarse lattice. We need to finalise our results on these small lattices, and then expand our data set so that we have access to finer lattice spacings before it is useful to attempt to perform such an extrapolation; we will have to leave the completion of this study to a future publication.

However, it is not unreasonable to suggest on the basis of this data that the string tension will be dominated by the topological part of the restricted potential (as we have measured it) in the continuum limit. This still needs to be confirmed by a complete and more careful analysis, which we are currently undertaking.

\subsection{Presence of peaks in $\hat{F}_{xt}$}

The next question is whether the $\hat{E}_x$ component of the gauge invariant restricted field strength is dominated by point-like objects as expected in the theoretical model, and we plot the distribution of $\hat{F}^3_{xt}$ in figure \ref{fig:4} on two neighbouring slices of the lattice, using a contour plot to display the data. The corresponding plot for $\hat{F}^8_{xt}$ has a similar set of structures. It can be seen that $\hat{F}^3_{xt}$ is indeed dominated by these objects a single lattice spacing across (or, on occasion, two lattice spacings -- see also figure \ref{fig:ClusterSize}). There is no relationship between the peaks on neighbouring slices of the lattice, suggesting that these are indeed points rather than strings or surfaces. Are these peaks from the \mon part of the restricted potential? The background shading of figure \ref{fig:4} gives a similar plot for the field strength extracted from $\hat{\tilde{U}}$, and the is a strong correspondence between the location of the peaks in the \mon field and the restricted
field, although
on occasion one of the peaks may be shifted by a lattice spacing (which is the limit of the resolution of the lattice, so we can expect errors of up to a half lattice spacing on each of the data sets), and a few objects in the \mon field are not visible in the restricted field. Nonetheless, the similarity between the two sets of data indicates that these peaks are caused by the magnetic portion of the ensemble as predicted by the theory. Although figure \ref{fig:4} shows data from just one slice of one configuration for one ensemble, the pattern of peaks is similar across all our ensembles.


\begin{figure}
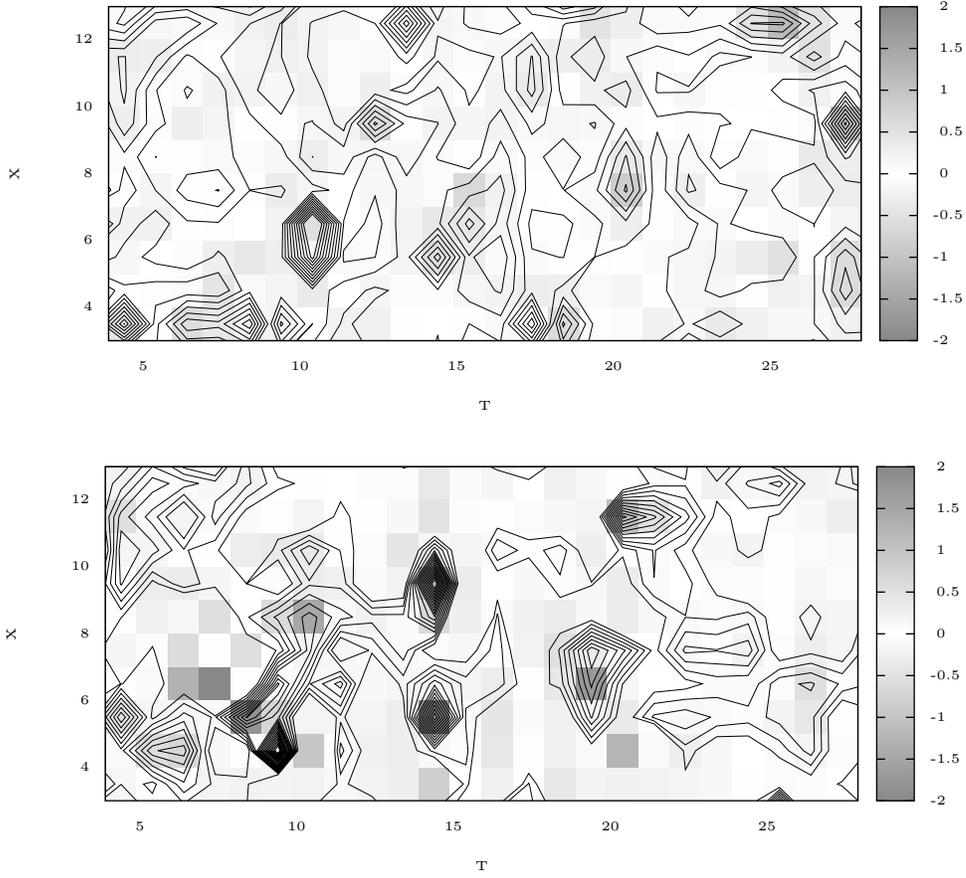

\begin{center}
\begin{tabular}{c}
{\tiny
\input{w3_10.ltx}
}\\
{\tiny
\input{w3_11.ltx}
}
\end{tabular}
\end{center}

\caption{A comparison between the peaks in $\hat{F}^{3}_{xt}$ (contours) against the \mon field strength extracted $\hat{ \tilde{U}}$ (shaded background). In this presentation the negative and positive peaks cannot be distinguished.  The top plot is taken for slice of the lattice $Y=0$, $Z=10$ while the bottom plot uses the slice at $Y=0$, $Z=11$ on the same $\beta = 8.52$ configuration. Due to the limited resolution of the lattice, the extrapolated contour lines and the shading have an error of up to one lattice spacing.}\label{fig:4}
\end{figure}

\begin{figure}
\begin{center}
\begin{tabular}{c}
{
\input{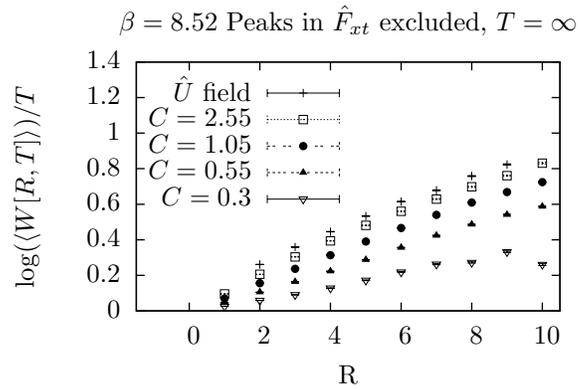}
}
\end{tabular}
\end{center}
\caption{The $\hat{U}$ string tension, $\rho$, excluding Wilson loops containing peaks of height $|\hat{F}_{xt}| > \mathcal{C}$ from the average, on the $\beta 8.52$ lattice. }\label{fig:1}
\end{figure}
\begin{table}
\begin{center}
{
\begin {tabular}{|l|lllll|}
\hline
$C$&$2.55$&$2.30$&$2.05$&$1.80$&$1.55$\\
\hline
$\beta =8.0$&$0.1178(9)$&$0.1201(9)$&$0.1213(9)$&$0.1225(8)$&$0.1199(8)$
\\
$\beta =8.3$&$0.0943(8)$&$0.0955(7)$&$0.0961(8)$&$0.0971(8)$&$0.0957(8)$
\\
$\beta =8.52$&$0.0796(7)$&$0.0809(7)$&$0.0820(6)$&$0.0821(6)$&$0.0820(6)$
\\
$\beta =8.3L$&$0.0975(8)$&$0.0965(8)$&$0.0954(8)$&$0.0964(8)$&$0.0963(8)$
\\
\hline\hline
$C$&$1.30$&$1.05$&$0.80$&$0.55$&$0.30$\\
\hline
$\beta =8.0$&$0.1144(8)$&$0.1049(7)$&$0.1002(7)$&$0.0925(7)$&$0.04998(7)$
\\
$\beta =8.3$&$0.0899(8)$&$0.0866(7)$&$0.0809(6)$&$0.0741(6)$&$0.0382(7)$
\\
$\beta =8.52$&$0.0791(6)$&$0.0747(6)$&$0.0697(5)$&$0.0665(5)$&$0.0472(6)$
\\
$\beta =8.3L$&$0.0902(7)$&$0.0843(7)$&$0.0770(7)$&$0.0706(5)$&$-$
\\
\hline

\end{tabular}
}
\end{center}
\caption{The $\hat{U}$ string tension, $\rho$ , excluding Wilson loops containing peaks of height $|F_{xt}| > \mathcal{C}$ from the average. The $\beta 8.3L$ data refers to the largest lattice; there was insufficient data to get a reliable estimate of the string tension at the lowest cut-off on this lattice.}\label{tab:1b}
\end{table}

In figure \ref{fig:1} and table \ref{tab:1b}, we investigate whether these peaks are responsible for the string tension, i.e. if excluding the peaks would reduce or eliminate the confining potential.  Usually, we measure the expectation value of the Wilson Loop by averaging over every planar loop in the configuration. However, it is a straight-forward exercise to restrict the average to loops in the $xt$ plane and either only include or exclude those Wilson Loops which contain one of the peaks from the average. We show in figure \ref{fig:1} the result of excluding the Wilson Loops which contain peaks. We introduce a cut-off $\mathcal{C}$, and do not include any Wilson Loops where $|\hat{F}| > \mathcal{C}$ for one of the lattice sites contained within the loop. If the peaks are responsible for
confinement, we would expect to see the string tension disappear as we exclude more of the peaks. If they are merely incidental, and confinement is dominated by some other mechanism than that proposed here, we might expect to see little difference in the string tension as we change the cut-off. The picture is, however, consistent with the idea that these peaks are responsible for confinement.

\subsection{Electromagnetic field tensor}\label{sec:6.3}
We now look at the full strength tensor to examine how closely it resembles our expectations from section \ref{sec:4.3}, although for conciseness we concentrate only on $\hat{H}^8_{\mu\nu}$ (the results for $\hat{H}^3_{\mu\nu}$ and $\hat{F}^j_{\mu\nu}$ are indistinguishable). First of all, we plot histograms of each component of the field strength in figure \ref{fig:hist}, showing what proportion of lattice sites had a particular value of the electric or magnetic field. It can be seen that the distribution for $\hat{E}_y,\hat{B}_y,\hat{E}_z$ and $\hat{B}_z$ are all similar, while the distribution for $\hat{E}_x$ and $\hat{B}_x$ differ. Most of the plots show similar features: a large peak around $\hat{H}^8_{\mu\nu}=0$, meaning that most of the points have no field strength beyond a small fluctuation around zero, and a plateau for larger field strength up to the maximum value of $\hat{H}^8_{\mu\nu}$, meaning that on a minority of lattice sites there is a large contribution to the field strength, where the number of sites with that particular value of the field strength  seems to be independent of the field strength. This is as we expected. We also
see that the
proportion of the points with small values of the field strength increases as we
decrease the lattice spacing, which suggests that these objects have a fixed density per physical volume but the size of the objects decreases as the lattice spacing tends to zero, which is consistent with the idea that these are thin points or string-like objects. $\hat{B}_x$ and $\hat{E}_x$ are different; there are twice as many lattice sites with a large value of $\hat{B}_x$ compared to the fields in the $y$ and $z$ direction, and very few lattice sites with a large value of $\hat{E}_x$. There also seems to be little volume dependence.

The picture for all the fields is consistent with having some (perhaps Gaussian distributed) fluctuations around zero, and then some additional objects at large field strength. For $\hat{E}_x$, the number of the objects diminishes as the field strength increases, while for the remaining fields the proportion of lattice sites with a particular field strength seems to be roughly independent of the field strength, at least for those field strengths we can measure. This means that there may be peaks in the field with a field strength $\sim 2\pi$ for $\hat{F}^3$ or $\sim 4\pi\sqrt{3}$ for $\hat{F}^8$ which are impossible for us to distinguish from zero. It should therefore be borne in mind that not every object in these fields will be visible to us.
\begin{figure}
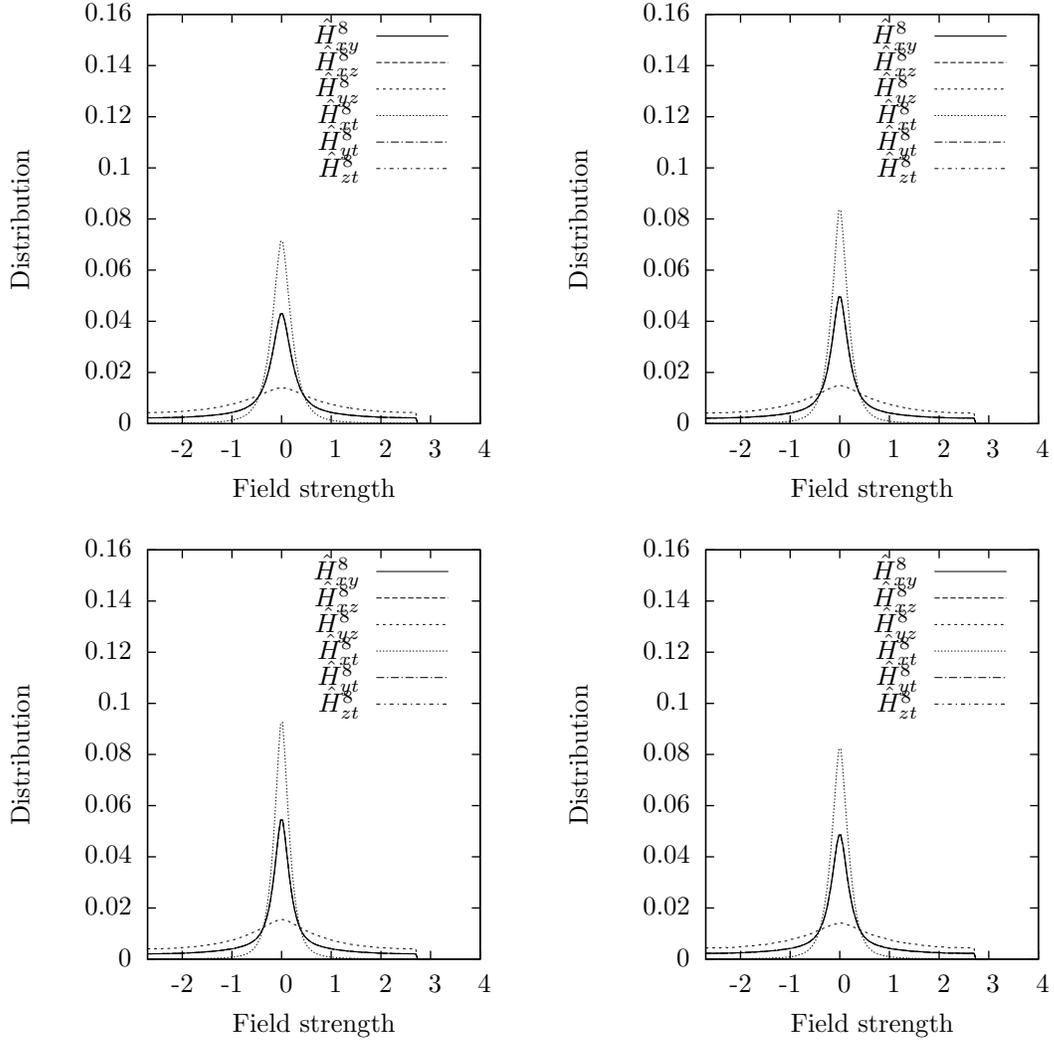

\begin{center}
\begin{tabular}{cc}
\input{hist_H8_80.ltx}&
\input{hist_H8_83.ltx}\\
\input{hist_H8_852.ltx} &
\input{hist_H8_83_s20.ltx}
\end{tabular}

\end{center}
\caption{Histogram of the various components of the magnetic field strength for $\beta = 8.0$ (top left), $\beta = 8.3$ (top right) and $\beta = 8.52$ (bottom left) on the $16^332$ lattices, and the $\beta = 8.3$ $20^340$ lattice (bottom right).}\label{fig:hist}
\end{figure}

This picture can be studied in more detail by plotting slices of the electromagnetic fields: figure \ref{fig:fmunu_xt} shows this in the $x t$ plane; figure \ref{fig:fmunu_yt} in the $yt$ plane, \ref{fig:fmunu_zt} in the $zt$ plane.
\begin{figure}
\begin{center}
\begin{tabular}{cc}
\input{E_x_b8-52_70_xt.ltx}&
\input{B_x_b8-52_70_xt.ltx}\\
\input{E_y_b8-52_70_xt.ltx}&
\input{B_y_b8-52_70_xt.ltx}\\
\input{E_z_b8-52_70_xt.ltx}&
\input{B_z_b8-52_70_xt.ltx}
\end{tabular}

\end{center}
\caption{The electromagnetic fields on the slice with $y = 5$, $z = 5$ on one $\beta = 8.52$ configuration. The dotted contours indicate a positive field strength, the solid contours indicate a negative field strength. The top plots show the fields in the $x$ direction, the middle plots the $y$ direction, and the bottom plots the $z$ direction, while the left column shows the electric field and the right column the magnetic field.}\label{fig:fmunu_xt}
\end{figure}

\begin{figure}
\begin{center}
\begin{tabular}{cc}
\input{E_x_b8-52_70_yt.ltx}&
\input{B_x_b8-52_70_yt.ltx}\\
\input{E_y_b8-52_70_yt.ltx}&
\input{B_y_b8-52_70_yt.ltx}\\
\input{E_z_b8-52_70_yt.ltx}&
\input{B_z_b8-52_70_yt.ltx}
\end{tabular}

\end{center}
\caption{The electromagnetic fields on the slice with $x = 5$, $z = 5$ on one $\beta = 8.52$ configuration. The dotted contours indicate a positive field strength, the solid contours indicate a negative field strength.  The top plots show the fields in the $x$ direction, the middle plots the $y$ direction, and the bottom plots the $z$ direction, while the left column shows the electric field and the right column the magnetic field.}\label{fig:fmunu_yt}
\end{figure}

\begin{figure}
\begin{center}
\begin{tabular}{cc}
\input{E_x_b8-52_70_zt.ltx}&
\input{B_x_b8-52_70_zt.ltx}\\
\input{E_y_b8-52_70_zt.ltx}&
\input{B_y_b8-52_70_zt.ltx}\\
\input{E_z_b8-52_70_zt.ltx}&
\input{B_z_b8-52_70_zt.ltx}
\end{tabular}
\end{center}
\caption{The electromagnetic fields on the slice with $x = 5$, $y = 5$ on one $\beta = 8.52$ configuration. The dotted contours indicate a positive field strength, the solid contours indicate a negative field strength.  The top plots show the fields in the $x$ direction, the middle plots the $y$ direction, and the bottom plots the $z$ direction, while the left column shows the electric field and the right column the magnetic field.}\label{fig:fmunu_zt}
\end{figure}
These objects are not just isolated magnetic monopoles: there are peaks in the electric field, and the magnetic fields are not spherically symmetric. They may, of course, be some magnetic monopoles and some other structures, or possibly a signature of a monopole/anti-monopole condensate. The results are consistent with our expectations given in table \ref{tab:lotsofstrings}, and the patterns seen on this configuration and slice of the lattice are duplicated across all our ensembles. The electric field is clearest (although not completely unambiguous): $\hat{E}_x$ is a point, while $\hat{E}_y$ and $\hat{E}_z$ form strings parallel to the $x$-axis. The magnetic field is a little less clear, and expected string like behaviour of $\hat{B}_y$ and $\hat{B}_z$ is less obvious, although we have seen clear examples of t-strings. $\hat{B}_x$ is, as expected, extended along both the $x$ and $t$ axes, though not the $z$ or $y$ axis. In particular, other orientations of the fields with strings in directions perpendicular to the $xt$ plane are inconsistent with this data.

\subsection{Cluster analysis}

To investigate the restricted electromagnetic fields more thoroughly, we employ a simple strategy to determine the size and shape of these structures. We first of all find clusters, connected regions of sign coherent high electric or magnetic field strength. We perform the analysis for each component of the field strength tensor separately. While we have performed this analysis for $\hat{F}^3_{\mu\nu}$, $\hat{F}^8_{\mu\nu}$, $\hat{H}^3_{\mu\nu}$ and $\hat{H}^8_{\mu\nu}$ individually, there is no significant difference in the results, so here we just show the results for $\hat{H}^8_{\mu\nu}$.

We define a cluster as all connected sites with a value of $|\hat{F}_{\mu\nu}| > 1$ and identical sign of $\hat{F}_{\mu\nu}$ (i.e. except for clusters containing just one lattice site, each site in the cluster is the nearest neighbour of at least one more site in the cluster and all its nearest neighbours which satisfy the bound $|\hat{F}_{\mu\nu}| >1$ are within the same cluster and each cluster is internally connected so it is possible to reach any site within the cluster from any other by crossing to a neighbouring site while remaining within the same cluster, while different clusters are not internally connected with each other). We have also studied (but do not show here) results with numerous values of this cut-off from 0.8 to 2.0, but we do not see any important difference as a result of varying this cut-off.

For all the fields except $\hat{E}_x$, the vast majority of the lattice sites where the restricted field satisfied our condition were contained within a large cluster which spanned the entire lattice. This is in certain respects similar to the structures found in the full field strength tensor~\cite{Horvath:2003yj,Horvath:2005rv}. The theoretical picture we were testing described a number of strings in either the $x$ or $t$ direction. This large structure may be formed if the density of these strings is sufficiently large that every string has at least one neighbour on an adjacent lattice site in the $y$ or $z$ directions. It may, however, relate to a different picture of the vacuum. There were a few clusters not attached to this large structure; and most of our analysis is dominated by results from these smaller structures, as our averaging with respect to the number of structures weights against the smaller number of structures with a large number of lattice sites. We also present a few results which specifically target these larger structures. Here we see the expected string behaviour, with a large number of objects at just a single point. We emphasise that $\hat{E}_x$ (or $\hat{F}_{xt}$) differed, in
that the field
strength did not contain this large structure, but only single lattice site structures. To attempt to analyse the shape of this large cluster, we have also performed a similar analysis, but where the clusters are constructed from connected sites with a two dimensional slice of the lattice.

When we show our results, we will label those plots where we have included all clusters as \textit{unfiltered}, those which only average over clusters with at least four lattice sites as \textit{filtered}, and those which average over clusers with at least two hundred lattice sites as \textit{Large Clusters}.

The theoretical expectation was that if $\hat{E}_x$ is dominated only by points of large field, then the remaining fields should be dominated by objects which are either points or strings in the $x$ or $t$ direction. The only expected difference in the results for the $\hat{E}_y$, $\hat{E}_z$, $\hat{B}_z$ and $\hat{B}_y$ fields is the orientation of the string. A point will show itself as a cluster containing just a few lattice sites (ideally only one lattice site, but due to lattice artefacts it may be smeared across two or three sites). A string would display itself as a line of flux, i.e. each site within a cluster would have only two nearest neighbours within a cluster if it is in the centre of the string and one if it is at the end of the string, and it will only be extended in one direction. In practice, the situation will be messier than this. Neighbouring strings may be counted as the same cluster, and there may be ambiguities caused by the poor resolution of the lattice leading to a thickness of the string
being more than one lattice spacing in places, two different strings may overlap (and the expectation is only accurate if the functions $G$ and $R'$ have no strong angular dependence). Nonetheless, we would hope to see some signal of these strings when considering an average size and shape of the cluster.

We first plot the distribution of the number of lattice sites in each cluster in figure \ref{fig:ClusterSize}. The picture is similar across all our ensembles, so we have only plotted the data for the $\beta = 8.52$ ensemble. It can be seen that all the components of the field strength contain a large number of points with only one or two sites contributing to the cluster. However, the $\hat{E}_x$ field only contains these objects, while the remaining fields also contain a substantial number of structures extended across several lattice sites. This means that if the picture outlined in the earlier sections is correct, then the $\hat{B}_x$, $\hat{E}_y$, $\hat{E}_z$, $\hat{B}_y$ and $\hat{B}_z$ fields can only contain points and the strings described in table \ref{tab:lotsofstrings}. We also note that the data for the $\hat{E}_y$, $\hat{E}_z$, $\hat{B}_y$ and $\hat{B}_z$ fields are indistinguishable, in line with our expectations.

\begin{figure}
\begin{center}
\input{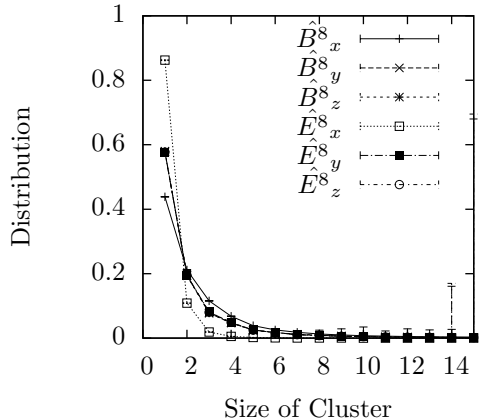}
\end{center}
\caption{The distribution of the number of the lattice sites in each cluster for the $\beta = 8.52$ ensemble.}\label{fig:ClusterSize}
\end{figure}

We next study the number of nearest neighbours within a cluster in figures \ref{fig:ClusterNeigh}, \ref{fig:ClusterfNeigh} and \ref{fig:ClusterffNeigh}. The first of these plots shows the data across all the clusters, while the second filters only those clusters containing four or more lattice sites, so eliminating the points. We have studied the effects of changing this filter from all clusters with at least three sites to all clusters with at least six lattice sites, and there are no substantial changes in the results. The third plot focusses on the very large clusters. A point should have no nearest neighbours (or possibly on occasion one), a string one or two neighbours, a two dimensional surface one, two three or four neighbours (with the proportion depending on the size and shape of the surface), and higher dimensional objects would have a larger number of nearest neighbours within the cluster. The data for $\hat{E}_x$ is again unambiguous: these are points. For the remaining fields, it is less clear. Focussing on figure \ref{fig:ClusterfNeigh}, we see that firstly there is no substantial difference between the results for the different fields: the same structures appear in the fields with the same
frequency. The majority of the lattice sites
within the larger clusters had two or three neighbours, with a small number having four. We can rule out that these are extended in more than two dimensions. This may indicate  strings where two strings are on neighbouring lattice sites and thus counted in the same cluster (and giving objects with three or more nearest neighbours), or it may indicate that there are also some two dimensional surfaces with a large electromagnetic field. There is a small but statistically significant increase in the proportion of sites with two nearest neighbours as we decrease the lattice spacing, and no noticeable difference in the distribution as we increase the volume of the lattice. For the very large clusters, which are extended across the whole lattice, we again see that the sites have two or three nearest neighbours in the cluster. This is consistent with, but does not demonstrate, that these clusters are constructed from densely packed one dimensional objects.

\begin{figure}
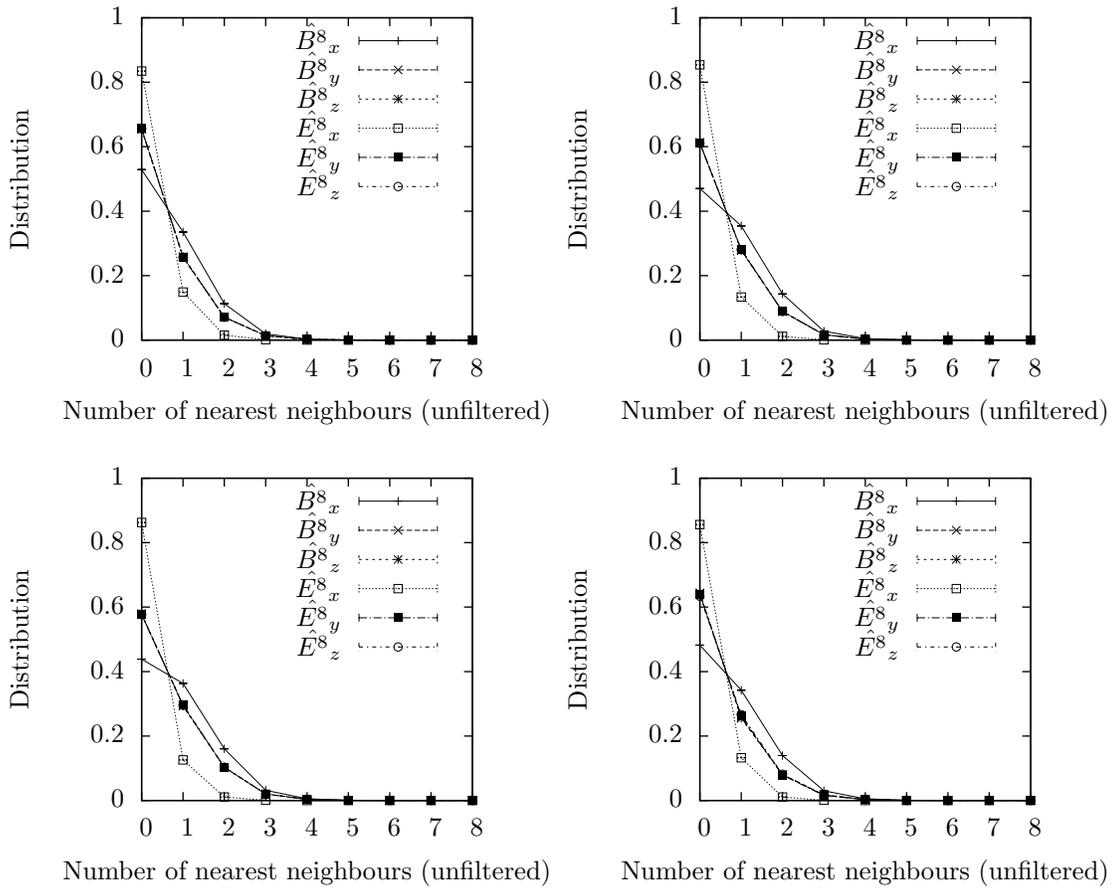

\begin{center}
\begin{tabular}{cc}
\input{Cluster_neighbours_b8_0.ltx}&
\input{Cluster_neighbours_b8_3.ltx}\\
\input{Cluster_neighbours_b8_52.ltx}&
\input{Cluster_neighbours_b8_3_s20.ltx}
\end{tabular}
\end{center}
\caption{The distribution of the number of the number of nearest neighbours within a cluster for each site in a cluster for $\beta = 8.0$ (top left), $\beta = 8.3$ (top right), $\beta = 8.52$ (bottom left) $16^332$ lattices and the $\beta = 8.3$ $20^340$ lattice (bottom right).}\label{fig:ClusterNeigh}
\end{figure}

\begin{figure}
\begin{center}
\begin{tabular}{cc}
\input{Cluster_f_neighbours_b8_0.ltx}&
\input{Cluster_f_neighbours_b8_3.ltx}\\
\input{Cluster_f_neighbours_b8_52.ltx}&
\input{Cluster_f_neighbours_b8_3_s20.ltx}
\end{tabular}
\end{center}
\caption{The distribution of the number of the number of nearest neighbours within a cluster for each site in a cluster for $\beta = 8.0$ (top left), $\beta = 8.3$ (top right), $\beta = 8.52$ (bottom left) $16^332$ lattices and the $\beta = 8.3$ $20^340$ lattice (bottom right), only including those clusters containing at least four lattice sites.}\label{fig:ClusterfNeigh}
\end{figure}

\begin{figure}
\begin{center}
\begin{tabular}{cc}
\input{Cluster_ff_neighbours_b8_0.ltx}&
\input{Cluster_ff_neighbours_b8_3.ltx}\\
\input{Cluster_ff_neighbours_b8_52.ltx}&
\input{Cluster_ff_neighbours_b8_3_s20.ltx}
\end{tabular}
\end{center}
\caption{The distribution of the number of the number of nearest neighbours within a cluster for each site in a cluster for $\beta = 8.0$ (top left), $\beta = 8.3$ (top right), $\beta = 8.52$ (bottom left) $16^332$ lattices and the $\beta = 8.3$ $20^340$ lattice (bottom right), only including those clusters containing at least 200 lattice sites.}\label{fig:ClusterffNeigh}
\end{figure}

We predicted that the strings in the $\hat{B}_x$ field could be in either the $x$ or $t$ direction, while the electric fields $\hat{E}_y$ and $\hat{E}_z$ would extend in the $x$ direction only and the magnetic fields $\hat{B}_y$ and $\hat{B}_z$ in the time direction only. In figure \ref{fig:ClusterExtentBx}, we plot the distribution of the spatial extent in each direction of each cluster (for example, a cluster that stretched from $(t,x,y,z) = (3,4,5,7)$ to $(6,9,8,13)$ would have extents $(6-3+1=4,6,4,7)$; the plot shows the proportion of clusters with a particular extent in each direction). An idealised $x$-string (which would not occur in practice) would have a large extent in the $x$-direction, and extend for only one lattice site in all other directions. This is not fully realised in our data: there is a noticeable number of structures which are extended for up to four or five lattice sites in the `unwanted' directions. However, for each field, the clusters extend considerably more in the directions of the expected strings. We once again restrict the results we present to those clusters containing at least four lattice sites, although once again we have studied numerous values of this filter without seeing any significant difference in the results. The shape of the curve changes as we only include larger clusters (larger extents become more likely as we increase the cutoff for the cluster size), although the maximum in the $X$ and $T$ directions remains at four lattice spacings. This may
be consistent with our
expectations, with strings overlapping each other, or it may indicate that there are additionally other types of objects in the electromagnetic field.

\begin{figure}
\begin{center}
\begin{tabular}{cc}
\input{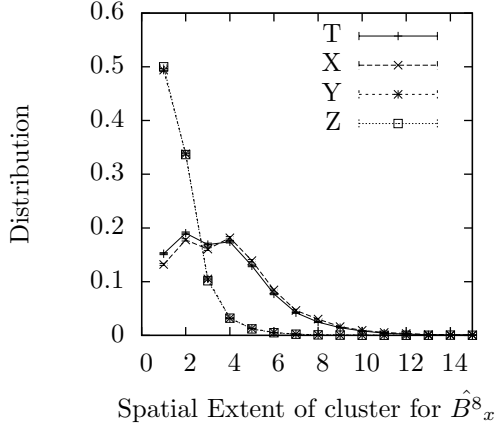} \\
\input{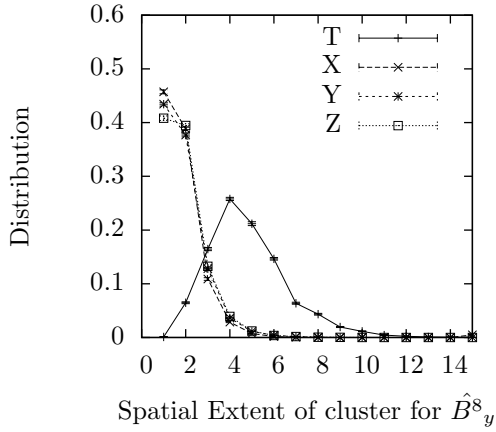} &
\input{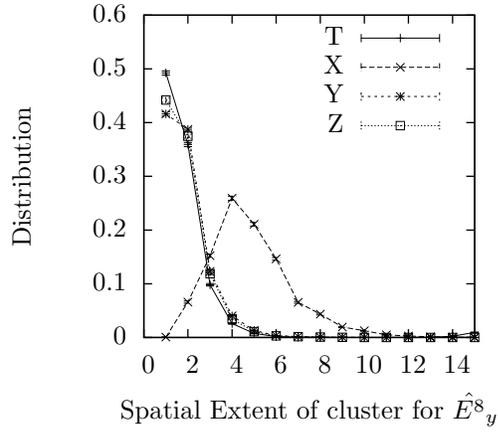}\\
\input{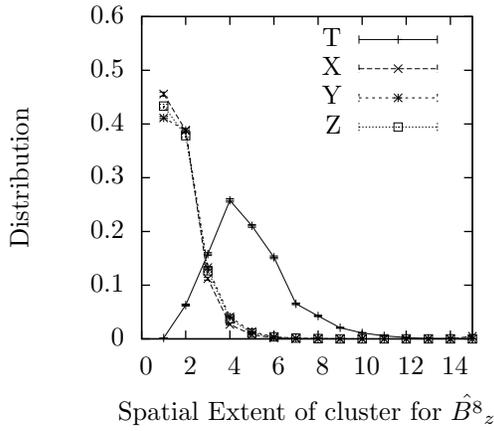} &
\input{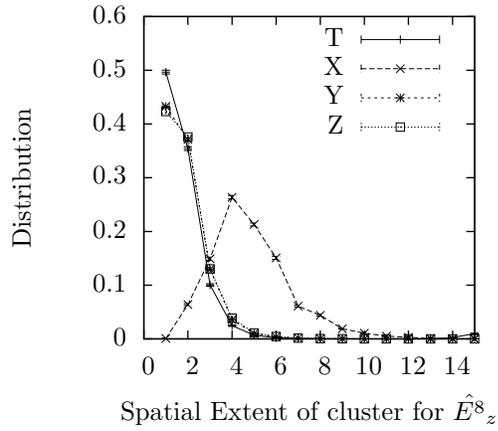}
\end{tabular}
\end{center}
\caption{The distribution of the spatial extent of the clusters for the $\beta = 8.52$ ensemble, only including those clusters containing at least four lattice sites.}\label{fig:ClusterExtentBx}
\end{figure}

One question that can be raised is whether that some clusters have a large thickness in the unwanted directions and some sites within the clusters challenge whether these large fields are arranged in strings one lattice spacing thick. Clearly, in the context of this picture, this can be explained by having strings on neighbouring lattice sites, and some degree of the breakdown of the naive behaviour should be expected. To better judge whether this is indeed the case, we have generated configurations where the field strength is distributed as expected. We have counted the number of lattice sites where $F^j_{\mu\nu} > 1$ and $F^j_{\mu\nu} < -1$, and generated two different sets of configurations of the restricted field strength. In each case, we ensured that the same number of lattice sites had the large $F^j$ as on the actual quenched QCD configuration. In the first random configuration, these sites were allocated completely random positions around the lattice. In the second random configuration, one quarter
of the sites were positioned randomly, and the rest were arranged into strings in either the $x$ or $t$ direction as appropriate for the field. The principle ambiguity is what to use as the distribution for the length of these strings, since the distribution given in figure \ref{fig:ClusterExtentBx} will contain examples from multiple strings. We examined several Poisson and Gaussian distributions, and the data we present here is taken from a Poisson distribution with the mean value extracted from the data presented in figure \ref{fig:ClusterExtentBx}. Only the analogue of the curve in \ref{fig:ClusterExtentBx} in the direction of the string depends strongly on the distribution of the length of the string. We cannot predict the shape of this curve for this simple model. We then simply repeat the cluster analysis on these configurations.

Some sample results are plotted in figures \ref{fig:ran1} and \ref{fig:ran2}.

 \begin{figure}
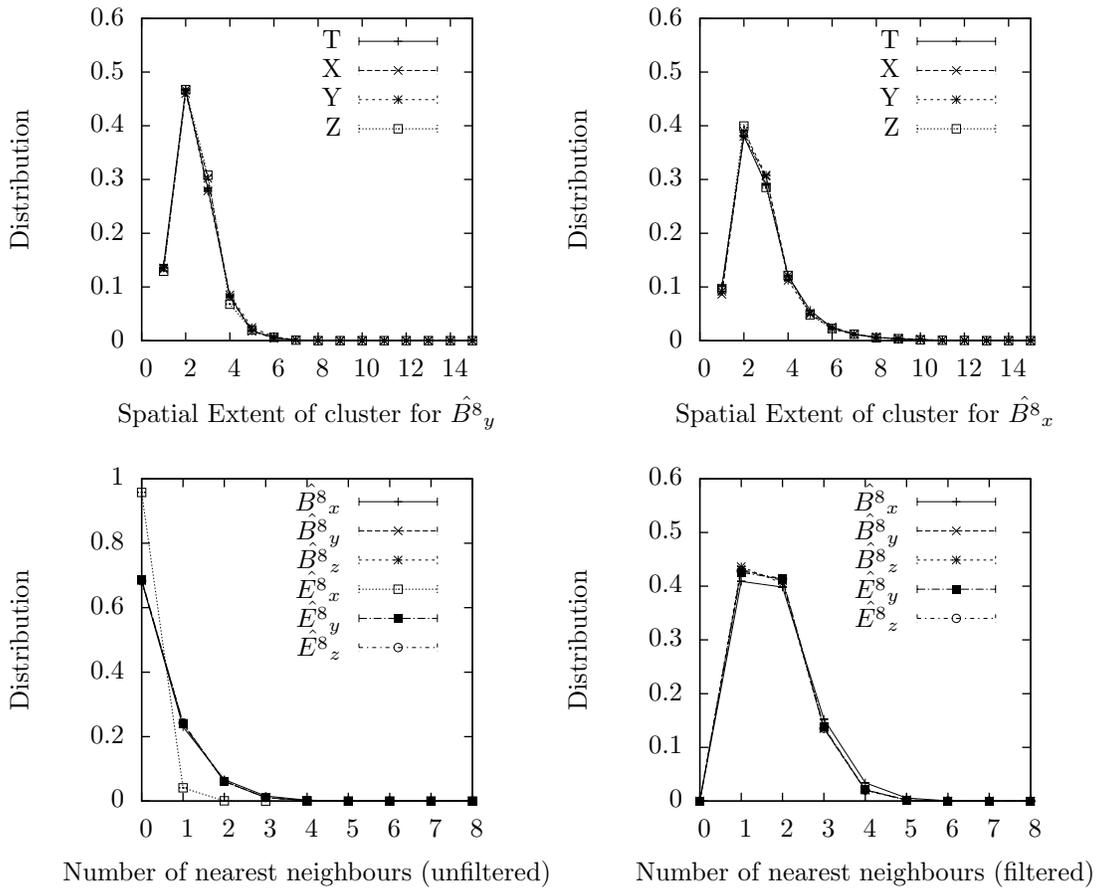

\begin{center}
\begin{tabular}{cc}
\input{Cluster_extent_B_y_b8_52_ran1.ltx}&
\input{Cluster_extent_B_x_b8_52_ran1.ltx} \\
\input{Cluster_neighbours_b8_52_ran1.ltx} &
\input{Cluster_f_neighbours_b8_52_ran1.ltx}
\end{tabular}
\end{center}
\caption{The distribution of the spatial extent of the clusters for where the field strength is selected according to the distribution given in figure \ref{fig:ClusterSize} for the $\beta = 8.52$ ensemble for the $\hat{B}_y$ (top left) and $\hat{B}_x$ (top right) fields, but with the lattice sites randomised; the distribution of the number of nearest neighbours (bottom left), and the filtered distribution of the number of nearest neighbours (bottom right) for this random ensemble. }\label{fig:ran1}
\end{figure}

 \begin{figure}
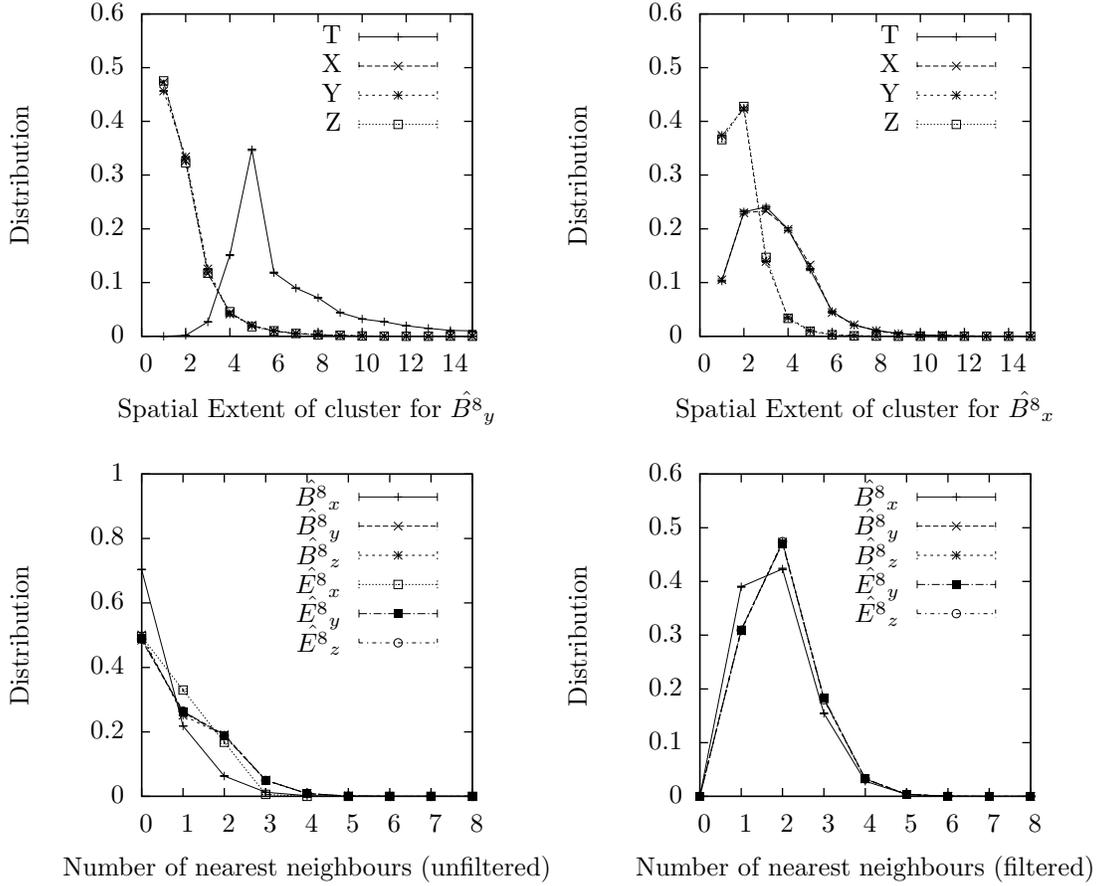

\begin{center}
\begin{tabular}{cc}
\input{Cluster_extent_B_y_b8_52_ran2.ltx}&
\input{Cluster_extent_B_x_b8_52_ran2.ltx} \\
\input{Cluster_neighbours_b8_52_ran2.ltx} &
\input{Cluster_f_neighbours_b8_52_ran2.ltx}
\end{tabular}
\end{center}
\caption{The distribution of the spatial extent of the clusters for where the field strength is selected according to the distribution given in figure \ref{fig:ClusterSize} for the $\beta = 8.52$ ensemble for the $\hat{B}_y$ (top left) and $\hat{B}_x$ (top right) fields, but with the lattice sites randomly distributed with three quarters of the larger field strengths arranged into strings; the distribution of the number of nearest neighbours (bottom left), and the filtered distribution of the number of nearest neighbours (bottom right) for this random ensemble.}\label{fig:ran2}
\end{figure}
It can be seen that the data for figure \ref{fig:ran1} is qualitatively similar to the actual QCD data for $\hat{E}_x$, while there are disagreements when comparing the model and actual data sets for the other fields. The second model, with strings, compares well with the numerical data for $\hat{B}_y$, $\hat{B}_z$, $\hat{E}_y$ and $\hat{E}_z$; in particular the plots for the number of nearest neighbours and the extent of cluster in the directions away from the string are close to the QCD data. The extent of the cluster in the direction of the string strongly depends on the distribution for the string length, so we do not expect a perfect agreement here. In particular, that the `thickness' of the string agrees with the data, and seems to be largely model independent, suggests that this cannot be used to rule out the predicted string model of flux. However, the broad agreement between our model calculation  and the data implies that the QCD $\hat{E}_x$ field is dominated by points and the other fields by a mixture of points
and strings.

Finally, we check to see whether the locations of the clusters in the different fields are correlated. If our model is correct, then the space-time location of many of the peaks in $E_x$ should be correlated with high field strengths in the other fields. To do this, we take each of the peaks in $\hat{E}_x$ (i.e. each point where $|\hat{H}_{xt}| > 1$) and plot a histogram of the value of the other fields around that site. For comparison, we produce a similar plot measuring the field strength around random lattice sites rather than the peaks of the $\hat{E}_x$ field. We would expect the plot for the actual data to have a higher concentration of large field strengths than the random data, and a smaller concentration of lower field strengths. There is one complication: we measure the field strength at different locations in space-time for each field. For example, we measure $\hat{F}_{xt}(t + \frac{1}{2},x+\frac{1}{2},y,z) \equiv \hat{E}_x(t,x,y,z)$ and $\hat{F}_{yz}(t ,x+\frac{1}{2},y,z+ \frac{1}{2}) \equiv \hat{B}_y(t,x,y,z)$. This means that there may be a shift of up to one lattice spacing in the location of the peak within each of the measured fields. We therefore take the maximum value of $|\hat{B}_y(t,x,y,z)|$, $|\hat{B}_y(t+1,x,y,z)
|$, $|\hat{B}_y(t,x,y,z-
1)|$ and $|\hat{B}_y(t+1,x,y,z-1)|$ (for example). This creates a bias towards larger field strengths for both the actual locations of the peaks in $\hat{E}_x$ and the random data. This is particularly pronounced for $\hat{B}_x$, where there are 16 lattice sites which we need to consider, and coupled with the higher density of lattice sites with a large value of the $\hat{B}_x$ field this means that actual data from $\hat{B}_x$ does not differ greatly from the randomised data. We plot the results in figure \ref{fig:strhist} for the $\beta = 8.52$ ensemble. There is a statistically significant difference between the curves for the $\hat{E}_x$ peaks and the random data for all the fields, with a larger proportion of sites with a large value of the $\hat{B}_y$, $\hat{B}_z$ $\hat{E}_y$ and $\hat{E}_z$ fields and a smaller proportion of sites with a small value of these field surrounding the peaks in $\hat{E}_x$ than for the random data. There is, however, still a large number of sites where there does not seem to be a great deal of correlation between the peaks in (for example)
 $\hat{E}_x$ and $E_
y$; partly this can be explained by those sites where in practice $\hat{E}_y \sim  \pm\sqrt{3}\pi$ but we measure $\hat{E}_y = 0$, but it still seems as though not every peak in $\hat{E}_x$ corresponds to a peak in every one of the other fields. This is not inconsistent with equation (\ref{eq:70}); for example if $\partial_1 G$ is small then only three of the other fields are large when there is a contribution in $\hat{E}_x$, and if both $\partial_1 G$ and $\partial_2 G$ are small then only two of them do. This data is then qualitatively but not conclusively in line with the expectations: there is a clear correlation between the fields, however it is not large enough to say that all the the topological structures in the different components of $\hat{F}_{\mu\nu}$ are correlated with each other.

\begin{figure}
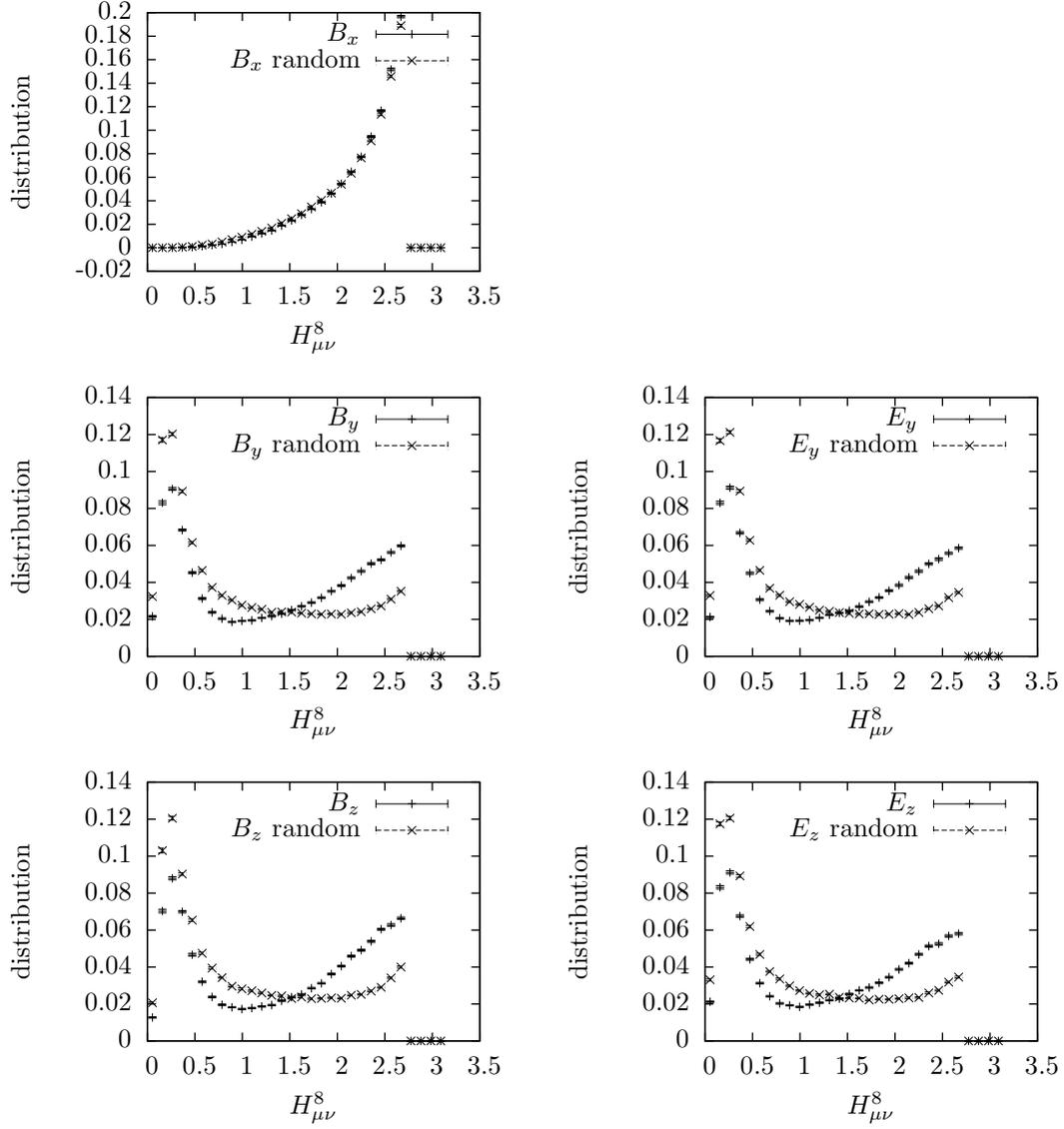

\begin{center}
\begin{tabular}{cc}
\input{strhist_Bx_852.ltx} \\
\input{strhist_By_852.ltx} &
\input{strhist_Ey_852.ltx}\\
\input{strhist_Bz_852.ltx} &
\input{strhist_Ez_852.ltx}
\end{tabular}
\end{center}
\caption{The distribution of the value of the $\hat{H}^8_{\mu\nu}$ field at the location of the peaks in the $\hat{E}_x$ field and at random lattice sites for the $\beta = 8.52$ ensemble.}\label{fig:strhist}
\end{figure}

\begin{figure}
\begin{center}
\begin{tabular}{cc}
\input{Cluster_extent_B_x_b8_0_XT.ltx} \\
\input{Cluster_extent_B_y_b8_0_XT.ltx} &
\input{Cluster_extent_E_y_b8_0_XT.ltx}\\
\input{Cluster_extent_B_z_b8_0_XT.ltx} &
\input{Cluster_extent_E_z_b8_0_XT.ltx}
\end{tabular}
\end{center}
\caption{The distribution of the spatial extent of the clusters in the $xt$ planefor the $\beta = 8.0$ ensemble, only including those clusters containing at least four lattice sites.}\label{fig:ClusterExtentBx_XT}
\end{figure}
\begin{figure}
\begin{center}
\begin{tabular}{cc}
\input{Cluster_extent_B_x_b8_0_YT.ltx} \\
\input{Cluster_extent_B_y_b8_0_YT.ltx} &
\input{Cluster_extent_E_y_b8_0_YT.ltx}\\
\input{Cluster_extent_B_z_b8_0_YT.ltx} &
\input{Cluster_extent_E_z_b8_0_YT.ltx}
\end{tabular}
\end{center}
\caption{The distribution of the spatial extent of the clusters in the $yt$ planefor the $\beta = 8.0$ ensemble, only including those clusters containing at least four lattice sites.}\label{fig:ClusterExtentBx_YT}
\end{figure}
\begin{figure}
\begin{center}
\begin{tabular}{cc}
\input{Cluster_extent_B_x_b8_0_XY.ltx} \\
\input{Cluster_extent_B_y_b8_0_XY.ltx} &
\input{Cluster_extent_E_y_b8_0_XY.ltx}\\
\input{Cluster_extent_B_z_b8_0_XY.ltx} &
\input{Cluster_extent_E_z_b8_0_XY.ltx}
\end{tabular}
\end{center}
\caption{The distribution of the spatial extent of the clusters in the $xy$ planefor the $\beta = 8.0$ ensemble, only including those clusters containing at least four lattice sites.}\label{fig:ClusterExtentBx_XY}
\end{figure}

This analysis has concentrated on the smaller, isolated structures. We still have to consider the larger structure which extends throughout the lattice, which we have seen from the nearest neighbour analysis is either one or two dimensional. This is consistent with our expectations if it is constructed from neighbouring strings along the $X$ or $T$ directions (so we may have an $x$-string at $y=z=t = 0$ and extending from $x = 0$ to $x = 5$, and another string at $y =1$, $z = t = 0$ extending from $x = 5$ to $x = 10$, and so on; if these are closely enough packed then our analysis will place them into the same cluster). We investigate this structure by modifying the cluster algorithm. We firstly select two directions; and we construct the cluster by only considering nearest neighbours in those two directions. This restricts us to isolated objects. By varying the two directions, we can see if the strings predominantly extend in particular directions.

We display the extent of the strings in figures \ref{fig:ClusterExtentBx_XT}, \ref{fig:ClusterExtentBx_YT} and \ref{fig:ClusterExtentBx_XY}. We have shown the $xt$, $xy$ and $yt$ planes as examples. There was again little difference between our ensembles. Again, we restrict ourselves to clusters with at least four lattice sites, so there must be an extent of on average two or three lattice sites in at least one direction. We again see that the magnetic fields $\hat{B}_y$ and $\hat{B}_z$ are extended far more parallel to the $t$ axis than in the other directions, $\hat{E}_x$ and $\hat{E}_z$ along the $x$-axis, while $\hat{B}_x$ is extended along both the $t$ and $x$ axes. None of the fields show similar features parallel
to the $y$ and $z$ axis.

The picture we see is consistently that the restricted $\hat{E}_x$ field is dominated by points, while the $\hat{B}_x$ field by strings in both directions in the $xt$ plane, and the other electric fields by strings in the $x$ direction, while the other magnetic fields by strings in the $t$ direction. These are not magnetic monopoles (or not only monopoles): the behaviour of the electric fields and $\hat{B}_x$ rules this out (and note that we presented results for the supposedly `magnetic' $\tr n [\partial_\mu n ,\partial_\nu n]$ part of the field strength; although the picture for the full field strength is the same). The results for the restricted electric and magnetic fields are consistent with the expectation (though, of course, they do not prove that the model of confinement we presented in the first part of this paper is correct: they may also be consistent with other models of the vacuum).

\section{Conclusions}\label{sec:5}
We have investigated whether the confining quark potential in quenched QCD is caused by CDG gauge-invariant \monsp. Our main novelty is to construct the CDG colour field $n^j = \theta \lambda^j \theta^\dagger$ from the eigenvectors of the Wilson loop, which allows us to access the full symmetry group of the \mons and permits a theoretical discussion. From this analysis, we have shown that the Wilson Loop for the actual QCD gauge field is identical to the Wilson Loop constructed from the restricted Abelian field, and this allows us to express the Wilson Loop in terms of an area integral over its surface. We then find that the construction of $\theta$ allows for certain topological defects to appear in the restricted field strength. In the component of the restricted field strength in the plane of the Wilson Loop, these manifest themselves as point-like objects; in the other restricted electric and magnetic fields they manifest themselves as strings parallel to one of the axes of the Wilson Loop. The net
result is that the restricted field strength is dominated by a large structure which is constructed from these one dimensional strings. The full QCD field strength contains the restricted field,
\begin{gather}
F_{\mu\nu}[A] = \hat{F}_{\mu\nu}[\hat{A}] + F_{\mu\nu}[X] + ig([\hat{A}_\mu,X_\nu] + [X_\mu,\hat{A}_\nu]),
\end{gather}
so it may be expected that these global one dimensional structure may extend to the full QCD field strength (albeit with O(4) symmetry suggesting that $F_{\mu\nu}[X]$ will also contain strings in other directions). If these strings are sufficiently dense, this picture is consistent with lattice simulations~\cite{Horvath:2003yj,Horvath:2005rv}. In particular, these objects are not magnetic monopoles; at least for this particular choice of the colour field. The peaks in the $\hat{E}_x$ field lead to an area law for the Wilson Loop and thus confinement.

We have also given an argument within the context of this decomposition which might explain string breaking at large spatial distances.

Our numerical results focus on searching for these strings and peaks in the restricted field strength, and we have uncovered firm evidence that the electric field in the direction of the Wilson Loop, which is responsible for the confining potential, is dominated by objects which are just present on a single lattice spacing. The remaining components of the electric and magnetic fields contain one dimensional objects which extend in the directions suggested by our theoretical analysis. While this does not prove that our analysis is correct, they are certainly consistent with it; and whatever these peaks are, they are responsible for the confining potential. Our numerical analysis is consistent with the vacuum related to the restricted field being dominated by a combination of magnetic monopoles and the $\pi_1$ topological objects we have labelled \monsp. It is also possible that the binding of  monopoles and anti-monopoles into pairs (similar to the Cooper pairs in superconductivity) might lead to a monopole condensate with a different pattern in the field strength than one would expect from a naive picture of isolated monopoles and anti-monopoles, and we cannot rule out this possibility.  We also show that in an initial test, making an important approximation, the string tension can be accounted for by the second, `magnetic', term within the restricted field strength. The results of the full calculation (without any approximation) are less clear: all we have shown so far is that we will need to make a careful study of the continuum limit. The objects responsible for confinement are embedded in the colour field, and only indirectly in the gauge field.

 We believe that this evidence is enough to merit further, more detailed, investigation of whether confinement can be explained within the context of this particular formulation of the CDG Abelian decomposition.

We have not yet investigated the effects of changing the representation of the group, of incorporating quark loops into our lattice study, or studied the de-confinement transition, and hope to investigate at least some of these questions in future work.
\section*{Acknowledgments}

umerical calculations used servers at Seoul National University
funded by the BK21 program of the NRF (MEST), Republic of
Korea. NC was supported in part through the BK21 program of the NRF (MEST), Republic of Korea. This research was supported by Basic Science Research Program through the National Research Foundation of Korea(NRF) funded by the Ministry of Education(2013057640). W. Lee is supported by the Creative Research Initiatives
Program (2013-003454) of the NRF grant, and acknowledges the support
from the KISTI supercomputing center through the strategic support
program for the supercomputing application research
(KSC-2012-G2-01).

 YMC is supported in part by NRF grant (2012-002-134)
funded by MEST.
\appendix
\section{The Abelian Decomposition}\label{app:A}

\subsection{Proof of the existence of $\theta$}\label{app:A.2}
In the text, we defined $\theta$ as the field which diagonalises the gauge link $U_\sigma$ along the Wilson Line. We here demonstrate that this field exists, using the following three lemmas (and the well known properties of the eigenvectors of normal and unitary matrices, including that a unitary matrix can always be diagonalised).
\begin{lemma}
Suppose that there is a $U(N_C)$ field $\theta$ which diagonalises the gauge links $U_\sigma$ which contribute to the closed Wilson Line $W[C_s]$ of length $L$ starting and ending at a position $s$, so $\theta^\dagger_\sigma U_\sigma \theta_{\sigma + \delta \sigma} = e^{i \delta \sigma u^j_\sigma \lambda^j}$ for diagonal Gell-Mann matrices $\lambda^j$. Then $\theta_s$ contains the eigenvectors of $W[C_s]$, with eigenvalues $e^{i\sum_\sigma \delta \sigma u^j_\sigma \lambda^j}$, i.e. (given that $W$ is unitary and thus normal) $W[C_s] \theta_s = \theta_s e^{i\sum_\sigma \delta \sigma u^j_\sigma \lambda^j}$.\label{lem:a1}
\end{lemma}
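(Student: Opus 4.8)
The plan is to prove Lemma~\ref{lem:a1} by directly telescoping the product that defines $W[C_s]$ after inserting the resolution of the identity $\theta_\sigma\theta_\sigma^\dagger=1$ between consecutive gauge links. Recall $W[C_s] = \lim_{\delta\sigma\to 0}\prod_{\sigma=0,\delta\sigma,\ldots}^{L-\delta\sigma} U_\sigma$, with the product taken in path order around the closed curve starting and ending at $s$. So $W[C_s]$ has the form $U_s U_{s+\delta\sigma}\cdots U_{s+L-\delta\sigma}$, where the index is understood modulo $L$ so that $U_{s+L-\delta\sigma}$ is the last link returning to $s$.

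First I would write $W[C_s]\theta_s$ and insert $\theta_{\sigma+\delta\sigma}\theta_{\sigma+\delta\sigma}^\dagger$ between each pair $U_\sigma U_{\sigma+\delta\sigma}$. Grouping each $\theta_\sigma^\dagger U_\sigma \theta_{\sigma+\delta\sigma}$ and using the hypothesis $\theta_\sigma^\dagger U_\sigma\theta_{\sigma+\delta\sigma}=e^{i\delta\sigma u_\sigma^j\lambda^j}$, the product collapses: after the insertions one is left with
\begin{gather}
W[C_s]\theta_s = \theta_s\left(\prod_{\sigma} e^{i\delta\sigma u_\sigma^j\lambda^j}\right),
\end{gather}
where I have used that the curve closes, so $\theta_{s+L}=\theta_s$ (the first $\theta_s$ on the far left and the final $\theta_{s+L}^\dagger=\theta_s^\dagger$ cancel against the inserted pair at the seam, leaving a single $\theta_s$ on the left and the product of exponentials on the right). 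Next, since all the $\lambda^j$ are diagonal they mutually commute, so the ordered product of exponentials equals $e^{i\sum_\sigma\delta\sigma u_\sigma^j\lambda^j}$, which in the limit $\delta\sigma\to 0$ is $e^{i\lambda^j\oint_{C_s}d\sigma\, u_\sigma^j}$, a diagonal unitary matrix. Hence $W[C_s]\theta_s = \theta_s\, e^{i\sum_\sigma\delta\sigma u_\sigma^j\lambda^j}$, which is exactly the claim; reading the $k$-th column of this matrix equation shows the $k$-th column of $\theta_s$ is an eigenvector of $W[C_s]$ with eigenvalue the $k$-th diagonal entry of $e^{i\sum_\sigma\delta\sigma u_\sigma^j\lambda^j}$. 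That $W[C_s]$ is unitary (a product of $SU(N_C)$ links) and therefore normal, with a complete orthonormal eigenbasis, is the standard fact invoked parenthetically in the lemma, so nothing further is needed there.

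The main obstacle, such as it is, is bookkeeping at the seam of the closed loop: one must be careful that the telescoping genuinely closes up rather than leaving a stray $\theta_s^\dagger W \theta_s$ or a leftover boundary factor, and that the corner of the Wilson loop (where $\mu(\sigma)$ jumps) does not spoil the insertion — but the excerpt has already stipulated that corners are smoothed (e.g. rounded) so $A_\sigma$ and hence $U_\sigma$ vary continuously, and the hypothesis that $\theta$ diagonalises \emph{every} contributing link, corner links included, is exactly what makes the collapse valid all the way around. A secondary point to state cleanly is the $\delta\sigma\to 0$ limit converting $\sum_\sigma\delta\sigma\,u_\sigma^j\to\oint_{C_s}d\sigma\,u_\sigma^j$; this is legitimate here because, unlike the situation criticised for the $I_{DP}$ construction, $\theta$ is a single continuous field along the curve so $u_\sigma^j$ is a genuine (continuous) function of $\sigma$. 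I would close the proof by remarking that this establishes existence modulo the separate question, handled by the subsequent lemmas, of showing such a diagonalising $\theta$ can actually be constructed — which is where the eigenvectors of $W[C_s]$ re-enter as the explicit solution.
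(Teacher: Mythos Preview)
Your proof is correct and follows essentially the same telescoping argument as the paper: the paper peels the product from the right, repeatedly using $U_\sigma\theta_{\sigma+\delta\sigma}=\theta_\sigma e^{i\delta\sigma u_\sigma^j\lambda^j}$ to push $\theta$ leftward while accumulating commuting diagonal exponentials, whereas you equivalently insert $\theta\theta^\dagger$ between every pair of links and group each $\theta_\sigma^\dagger U_\sigma\theta_{\sigma+\delta\sigma}$ at once. The only cosmetic wobble is your description of the seam --- to make the grouping close up you also need one insertion of $\theta_s\theta_s^\dagger$ on the far left, not just between consecutive links --- but this is clearly what you intend and the argument is the same.
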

\begin{proof}
$W[C_s]$ is defined as
\begin{gather}
W[C_s] = U_s U_{s+\delta \sigma}\ldots U_{s+L-2\delta\sigma}U_{s+L-\delta\sigma}.
\end{gather}
and, since $\theta_s \equiv \theta_{s+L}$ as $s$ and $s+L$ refer to the same location in space,
\begin{align}
W[C_s]\theta_s =& U_s U_{s+\delta \sigma}\ldots U_{s+L-2\delta\sigma}U_{s+L-\delta\sigma}\theta_s\nonumber\\
=& U_sU_s U_{s+\delta \sigma}\ldots U_{s+L-2\delta\sigma} \theta_{s+L-\delta\sigma} e^{i u^j_{s+L-\delta\sigma} \lambda^j \delta\sigma}\nonumber\\
=& U_sU_s U_{s+\delta \sigma}\ldots  \theta_{s+L-2\delta\sigma} e^{i ( u^j_{s+L-\delta\sigma} + u^j_{s+L-2\delta\sigma}) \lambda^j \delta\sigma}\nonumber\\
=& \theta_s e^{i \sum_\sigma u^j_\sigma \delta\sigma \lambda^j}.
\end{align}
where the last line follows by repeatedly evaluating $U \theta  = \theta e^{iu^j\lambda^j \delta\sigma}$, and we have noted that the diagonal Gell-Mann matrices commute with each other. But, since $\sum_\sigma u^j_\sigma \delta\sigma \lambda^j$ is diagonal, this is just an eigenvalue equation. Hence we have proved the result.
\end{proof}
\begin{lemma}
The eigenvalues of the Wilson Line along a closed curve $C$ are independent of which location $s$ along the curve is used as the start and end of the line.
\end{lemma}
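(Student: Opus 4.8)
The claim is that the eigenvalues of the closed Wilson line $W[C_s]$ do not depend on the choice of basepoint $s$ along the closed curve $C$. The plan is to exploit the fact that shifting the basepoint is a cyclic permutation of the ordered product of gauge links, and that cyclic permutations of a product of matrices are conjugate to one another, hence isospectral.

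First I would fix two basepoints $s$ and $s'$ on the curve $C$, with $s'$ reached from $s$ by traversing a partial arc. Write $W[C_s] = P_1 P_2$, where $P_1$ is the ordered product of the gauge links from $s$ to $s'$ and $P_2$ is the ordered product of the remaining links from $s'$ back around to $s$. By the definition of the path-ordered product around the curve starting at $s'$, we then have $W[C_{s'}] = P_2 P_1$. The key elementary fact is that $P_2 P_1 = P_2 (P_1 P_2) P_2^{-1} = P_2\, W[C_s]\, P_2^{-1}$; since each $U_\sigma \in SU(N_C)$ is invertible, $P_2$ is invertible, so $W[C_{s'}]$ and $W[C_s]$ are similar matrices and therefore have the same eigenvalues (indeed the same characteristic polynomial). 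For a completely general pair $s, s'$ one either applies this argument directly with $P_1$ the arc from $s$ to $s'$ in the direction of traversal, or composes two such steps.

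To make this rigorous at the level of the $\delta\sigma \to 0$ limit used throughout the paper, I would phrase it first for the lattice representation $W[C_s] = \lim_{\delta\sigma\to 0}\prod_{\sigma} U_\sigma$ with a finite number $L/\delta\sigma$ of links, where the cyclic-permutation identity is purely algebraic and manifestly valid, and then note that the eigenvalues (being continuous functions of the matrix entries, e.g.\ as roots of the characteristic polynomial) are preserved in the limit. Combined with Lemma~\ref{lem:a1}, which identifies the eigenvalues of $W[C_s]$ with $e^{i\sum_\sigma \delta\sigma u^j_\sigma \lambda^j}$, this also shows the exponents $\sum_\sigma \delta\sigma u^j_\sigma$ are basepoint-independent up to the usual $2\pi$ ambiguity and reordering of the eigenvalues.

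I do not expect a serious obstacle here; this is essentially the standard observation that the trace (and more generally the spectrum) of a Wilson loop is independent of the basepoint. The only mild subtlety worth a sentence is that $P_2$ itself is not a closed-loop holonomy and is gauge-dependent, but that is irrelevant since we only need it to be an invertible matrix to conclude similarity; and one should be slightly careful that the ``direction'' of traversal is fixed, since reversing orientation conjugates and inverts, which still preserves the eigenvalues only because the product of all eigenvalues is $1$ for $SU(N_C)$ — but for the statement as given (same orientation, different basepoint) no such care is needed.
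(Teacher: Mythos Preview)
Your proposal is correct and takes essentially the same approach as the paper: both exploit that shifting the basepoint yields a cyclic permutation of the ordered product, hence a similarity (conjugation) of the Wilson line, hence identical spectra. The only cosmetic difference is that the paper conjugates by a single link $U_s$ to shift by one step $\delta\sigma$ and then invokes induction, whereas you split at an arbitrary basepoint and conjugate by the partial product $P_2$ in one go.
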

\begin{proof}
We note that
\begin{align}
W[C_s] =& U_s U_{s+\delta \sigma}\ldots U_{s+L-2\delta\sigma}U_{s+L-\delta\sigma}\nonumber\\
W[C_{s+\delta \sigma}] =&  U_{s+\delta \sigma}\ldots U_{s+L-2\delta\sigma}U_{s+L-\delta\sigma}U_s\nonumber\\
W[C_{s+\delta \sigma}] =& U_s^\dagger W[C_s] U_s.
\end{align}
But $U_s$ is just a unitary matrix, which means that if $\theta_{s+\delta \sigma}$ diagonalises $W[C_{s+\delta \sigma}]$ then $U_s \theta_{s+\delta \sigma}$ diagonalises $W[C_s]$ with the same eigenvalues. Thus, with $\theta_s\equiv U_s \theta_{s+\delta s}$,
\begin{gather}
\theta^\dagger_sW[C_s] \theta_s = \theta^\dagger_{s+\delta \sigma} W[C_{s+\delta \sigma}] \theta_{s+\delta \sigma} = e^{i \sum_\sigma u^j_\sigma \delta\sigma \lambda^j}.
\end{gather}
The result follows by induction.
\end{proof}
\begin{lemma}
Suppose that there is a field $\theta_\sigma$ which diagonalises $W[C_\sigma]$ at each location $\sigma$ along the curve, with non-degenerate eigenvalues given by the diagonal elements of $e^{i \sum_\sigma u^j_\sigma \delta\sigma \lambda^j}$ then $\theta^\dagger_\sigma U_\sigma \theta_{\sigma + \delta \sigma}$ is also diagonal.\label{lem:a3}
\end{lemma}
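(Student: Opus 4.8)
The plan is to use the two preceding lemmas to pin down the relationship between $\theta_\sigma$ and $\theta_{\sigma+\delta\sigma}$, and then exploit the nondegeneracy of the eigenvalues to conclude. First I would recall from Lemma \ref{lem:a1} that $\theta_s$ (and by the same argument every $\theta_\sigma$) contains the eigenvectors of $W[C_\sigma]$, and from the second lemma that the eigenvalues of $W[C_\sigma]$ are independent of $\sigma$; call the common eigenvalue matrix $\Omega = e^{i\sum_\sigma u^j_\sigma\delta\sigma\lambda^j}$. So we have simultaneously $W[C_\sigma]\theta_\sigma = \theta_\sigma\Omega$ and $W[C_{\sigma+\delta\sigma}]\theta_{\sigma+\delta\sigma} = \theta_{\sigma+\delta\sigma}\Omega$. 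The second lemma's proof also gives the key cyclic identity $W[C_{\sigma+\delta\sigma}] = U_\sigma^\dagger W[C_\sigma] U_\sigma$, equivalently $W[C_\sigma] U_\sigma = U_\sigma W[C_{\sigma+\delta\sigma}]$.

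The core step is then to show that $U_\sigma\theta_{\sigma+\delta\sigma}$ and $\theta_\sigma$ are eigenvector matrices of $W[C_\sigma]$ for the \emph{same} ordered list of eigenvalues. Applying $W[C_\sigma]$ to $U_\sigma\theta_{\sigma+\delta\sigma}$ and using the cyclic identity gives
\begin{gather}
W[C_\sigma]\,U_\sigma\theta_{\sigma+\delta\sigma} = U_\sigma\,W[C_{\sigma+\delta\sigma}]\,\theta_{\sigma+\delta\sigma} = U_\sigma\theta_{\sigma+\delta\sigma}\,\Omega,
\end{gather}
so $U_\sigma\theta_{\sigma+\delta\sigma}$ diagonalises $W[C_\sigma]$ with eigenvalue matrix $\Omega$, exactly as $\theta_\sigma$ does. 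Since $W[C_\sigma]$ is unitary (hence normal) and, by hypothesis, has nondegenerate eigenvalues, its eigenvector corresponding to each eigenvalue is unique up to a phase; because both $\theta_\sigma$ and $U_\sigma\theta_{\sigma+\delta\sigma}$ list these eigenvectors in the order dictated by the common diagonal matrix $\Omega$, the two matrices can differ only by right-multiplication by a diagonal unitary (phase) matrix $\chi_\sigma \in (U(1))^{N_C}$: $U_\sigma\theta_{\sigma+\delta\sigma} = \theta_\sigma\chi_\sigma$. Rearranging yields $\theta_\sigma^\dagger U_\sigma\theta_{\sigma+\delta\sigma} = \chi_\sigma$, which is diagonal, completing the proof.

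The main obstacle is the nondegeneracy bookkeeping: one must be careful that the \emph{ordering} of eigenvectors in $\theta_\sigma$ and in $\theta_{\sigma+\delta\sigma}$ is consistent (both tied to the same fixed diagonal $\Omega$), since otherwise $U_\sigma\theta_{\sigma+\delta\sigma}$ would only equal $\theta_\sigma$ up to a permutation times a phase, and a permutation matrix conjugating $W[C_\sigma]$ is not in general diagonal. This is handled precisely by the hypothesis that the eigenvalues are nondegenerate \emph{and} that the diagonalising matrix at each $\sigma$ is the one whose columns are arranged so that $\theta_\sigma^\dagger W[C_\sigma]\theta_\sigma$ equals the \emph{same} $\Omega$ — which is what Lemmas \ref{lem:a1} and the second lemma supply. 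A secondary technical point worth a sentence is the $\delta\sigma\to 0$ limit / behaviour at the corners, but since we are working segment-by-segment with the lattice representation and $\chi_\sigma$ being diagonal is a statement at each fixed $\delta\sigma$, no limit is actually needed for the conclusion as stated.
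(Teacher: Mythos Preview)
Your proof is correct and follows essentially the same route as the paper. The paper inserts $\theta_s\theta_s^\dagger$ into $\theta_{s+\delta\sigma}^\dagger W[C_{s+\delta\sigma}]\theta_{s+\delta\sigma}=\Omega$ (using $W[C_{s+\delta\sigma}]=U_s^\dagger W[C_s]U_s$) to obtain $[\theta_s^\dagger U_s\theta_{s+\delta\sigma},\Omega]=0$ and then invokes non-degeneracy; your phrasing via uniqueness of eigenvectors ($U_\sigma\theta_{\sigma+\delta\sigma}=\theta_\sigma\chi_\sigma$ with $\chi_\sigma$ diagonal) is the same argument in different clothing, and your explicit remark about the ordering of eigenvectors is a point the paper leaves implicit in the hypothesis.
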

\begin{proof}
We have, using the results of the previous lemma,
\begin{align}
e^{i \sum_\sigma u^j_\sigma \delta\sigma \lambda^j} = &\theta^\dagger_{s+\delta \sigma} W[C_{s+\delta \sigma}] \theta_{s+\delta \sigma}\nonumber\\
=&\theta^\dagger_{s+\delta \sigma}U_s^\dagger \theta_s \theta_s^\dagger W[C_{s}]\theta_s \theta_s^\dagger U_s \theta_{s+\delta \sigma}\nonumber\\
=&\theta^\dagger_{s+\delta \sigma}U_s^\dagger \theta_s e^{i \sum_\sigma u^j_\sigma \delta\sigma \lambda^j} \theta_s^\dagger U_s \theta_{s+\delta \sigma}
\end{align}
which means that
\begin{gather}
[\theta^\dagger_{s}U_s \theta_{s+\delta\sigma},e^{i \sum_\sigma u^j_\sigma \delta\sigma \lambda^j}] = 0,
\end{gather}
which, unless $W[C_s]$ has degenerate eigenvalues, is only possible if $\theta^\dagger_{s}U_s \theta_{s+\delta\sigma}$ is diagonal, as required by the lemma.
\end{proof}
There always exists a $\theta$ field (unique baring the $U(1)^{N_C}$ phase factors and ordering of the eigenvectors) which diagonalises a $W[C_s] \in SU(N_C)$ as long as $W$ does not have degenerate eigenvalues. From lemmas \ref{lem:a1} and \ref{lem:a3}, we see that this field is also a unique (up to the same caveats) solution for a field that diagonalises $U$. Thus we have shown that there is a $SU(N_C)/U(1)^{N_C-1}$ field $\theta$ which diagonalises the gauge links along the closed Wilson Line.
\subsection{The Abelian Decomposition in the continuum}
Here we construct an explicit form for $\hatA$. The argument here is based on that presented in~\cite{Kondo:2008su}.
\begin{lemma}
For any Field $Y$ in the adjoint representation of a SU($N_C$) lie algebra
\begin{gather}
Y = \sum_j\left( \frac{1}{2}\lambda^j \tr(\lambda^jY) + \frac{1}{4}[\lambda^j,[\lambda^j,Y]]\right),\label{eq:YH}
\end{gather}
where $\lambda^j$ are the diagonal elements of the lie algebra normalised so that $\tr(\lambda^j \lambda^k) = 2\delta^{jk}$.
\end{lemma}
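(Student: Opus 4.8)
### Proof proposal for the lemma

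The plan is to expand an arbitrary adjoint-representation field $Y$ in the Gell-Mann basis, $Y = \frac{1}{2}Y^a\lambda^a$ (summed over all $a=1,\dots,N_C^2-1$), and then verify that the two operations appearing on the right-hand side of \eqref{eq:YH} act as a projector onto the diagonal (Cartan) part plus a correction that exactly reconstitutes the off-diagonal part. Concretely, I would split $Y = Y_{\text{diag}} + Y_{\text{off}}$, where $Y_{\text{diag}} = \frac{1}{2}Y^j\lambda^j$ runs over the diagonal Gell-Mann matrices and $Y_{\text{off}}$ over the remainder. The first term $\sum_j \frac{1}{2}\lambda^j\tr(\lambda^jY)$ is, using the normalisation $\tr(\lambda^j\lambda^k)=2\delta^{jk}$ and $\tr(\lambda^j\lambda^a)=0$ when $\lambda^a$ is off-diagonal, precisely $\sum_j \frac{1}{2}\lambda^j Y^j = Y_{\text{diag}}$. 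So the content of the lemma reduces to the claim that $\sum_j \frac{1}{4}[\lambda^j,[\lambda^j,Y]] = Y_{\text{off}}$, i.e. the double commutator with the Cartan generators annihilates the diagonal part and acts as the identity on the off-diagonal part.

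The key computation is therefore to evaluate $\sum_j [\lambda^j,[\lambda^j,E]]$ for each off-diagonal basis element. The cleanest route is to pass to the Cartan–Weyl basis: the off-diagonal generators can be organised into raising/lowering operators $E_\alpha$ labelled by the roots $\alpha$ of $su(N_C)$, satisfying $[\lambda^j, E_\alpha] = \alpha_j E_\alpha$ for a suitable weight vector $\alpha_j$ (the components of the root along the Cartan directions, in the chosen normalisation). Then $\sum_j [\lambda^j,[\lambda^j,E_\alpha]] = \Big(\sum_j \alpha_j^2\Big) E_\alpha$, so the lemma is equivalent to the statement that $\sum_j \alpha_j^2 = 4$ for every root $\alpha$, in the normalisation $\tr(\lambda^j\lambda^k)=2\delta^{jk}$. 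For $su(N_C)$ all roots have the same length, so this is a single normalisation check: for instance in $su(2)$, $\lambda^3=\sigma_3$ and $[\lambda^3,[\lambda^3,\sigma_\pm]] = 4\sigma_\pm$, confirming the constant is $4$; the general case follows because the roots $e_i-e_j$ of $su(N_C)$ in the standard basis all have the same norm and the overall scale is fixed by $\tr(\lambda^j\lambda^k)=2\delta^{jk}$. (Alternatively, one can avoid root systems entirely and just compute $\sum_j[\lambda^j,[\lambda^j,E_{ik}]]$ directly for the elementary off-diagonal matrix units $E_{ik}$, $i\neq k$, using $\lambda^j E_{ik} = (\lambda^j)_{ii}E_{ik}$ and $E_{ik}\lambda^j = (\lambda^j)_{kk}E_{ik}$, so that $[\lambda^j,E_{ik}] = ((\lambda^j)_{ii}-(\lambda^j)_{kk})E_{ik}$ and the sum over diagonal generators of $((\lambda^j)_{ii}-(\lambda^j)_{kk})^2$ must be shown to equal $4$; this is a finite linear-algebra identity about the diagonal projectors spanned by the Cartan subalgebra.)

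Finally, for the diagonal part I note $[\lambda^j,[\lambda^j,\lambda^k]]=0$ since the $\lambda^j$ mutually commute, so the double-commutator term contributes nothing on $Y_{\text{diag}}$, and adding the two pieces gives $Y_{\text{diag}}+Y_{\text{off}}=Y$, which is \eqref{eq:YH}. I expect the only real obstacle to be bookkeeping: making the normalisation constant come out to exactly $4$ requires being careful about the convention $\tr(\lambda^j\lambda^k)=2\delta^{jk}$ (as opposed to $\delta^{jk}$) and about which normalisation of the Cartan generators is implicit in the standard Gell-Mann representation used elsewhere in the paper — this is where a sign or factor-of-two slip would hide. Everything else is a routine basis expansion. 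Once this lemma is in hand, it is the algebraic engine that lets one project the gauge field onto its restricted part and derive the explicit formula \eqref{eq:hatA1} for $\hat{A}_\mu$.
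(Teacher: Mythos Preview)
Your proposal is correct and follows essentially the same skeleton as the paper: split $Y$ into its Cartan (diagonal) and off-diagonal parts, observe that the first term projects onto the Cartan part, and verify that the summed double commutator acts as $4$ times the identity on the off-diagonal part. The only genuine difference is in how the constant $4$ is extracted. The paper works in the Hermitian Gell-Mann basis, where $[\lambda^j,E_\alpha]=a^j_\alpha E_{-\alpha}$ mixes the pair $(E_\alpha,E_{-\alpha})$, and obtains $\sum_j (a^j_\alpha)^2=-4$ from the explicit evaluation $\tr\big([E_\alpha,E_{-\alpha}]^2\big)=-8$. You instead pass to the Cartan--Weyl basis where the $E_\alpha$ are genuine eigenvectors of the adjoint Cartan action, reducing the question to the root-length statement $\sum_j\alpha_j^2=4$, and you check this via the $su(2)$ case together with the fact that all roots of $su(N_C)$ have equal length. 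Your alternative route via matrix units $E_{ik}$ and the identity $\sum_j\big((\lambda^j)_{ii}-(\lambda^j)_{kk}\big)^2=4$ (which follows from the Fierz/completeness relation for the $\lambda^a$) is arguably the most self-contained and avoids any appeal to root systems; the paper's trace-of-commutator-squared argument achieves the same thing but is slightly more indirect. Either way, the substance is the same and your proof goes through.
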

\begin{proof}
$Y$ can be decomposed as $Y = \lambda^a Y^a$, where $\lambda^a/2$ are the generators of the gauge group. $\lambda^a$ are Hermitian, traceless matrices which satisfy $\tr \lambda^a \lambda^b = 2\delta^{ab}$. We adopt the convention that the diagonal Gell-Mann matrices are referred to with the index $j,k,\ldots$, while $\lambda^a, \lambda^b,\ldots$ may refer to any of the Gell-Mann matrices (including the diagonal components). The diagonal generators commute with each other, $[\lambda^j,\lambda^k] = 0$, and can be parametrised as $\text{diag}(1,-1,0,0,\ldots)$, $\frac{1}{\sqrt{3}}\text{diag}(1,1,-2,0,\ldots)$ $\ldots$ $\frac{\sqrt{2}}{\sqrt{N_C(N_C-1)}}\text{diag}(1,1,1,1,\ldots,-(N_C - 1))$. The other elements can be expressed in terms of the basis $E_{\pm \alpha}$, where $E_{+|\alpha|}$ consists of one element of $1$ above the diagonal and its symmetric counterpart below it, while $E_{-|\alpha|}$ has $-i$ above the diagonal and $i$ below it. For example,
\begin{align}
E_1 =& \lambda^1 = \left(\begin{array}{c c c} 0&1&\cdots\\1&0&\cdots\\\vdots&\vdots&\ddots\end{array}\right) & E_{-1} =& \lambda^2 = \left(\begin{array}{c c c} 0&-i&\cdots\\i&0&\cdots\\\vdots&\vdots&\ddots\end{array}\right)
\end{align}

Now, as $\lambda^j$ is diagonal, $[\lambda^j,E_\alpha]$ must have the same non-zero components as $E_\alpha$, and thus be a linear combination of $E_\alpha$ and $E_{-\alpha}$. Given that $\tr E_\alpha [\lambda^j,E_\alpha] = 0$, we can write $[\lambda^j,E_\alpha] = a^j_\alpha E_{-\alpha}$ for some coefficient $a$ which needs to be determined ($a$ could be $0$). Similarly, $[E_{\alpha},E_{-\alpha}]$ is diagonal and traceless, so $[E_{\alpha},E_{-\alpha}] = \sum_k b^k_\alpha \lambda^k$, for some coefficients $b$. The following results then proceed immediately using the trace theorem $\tr (A[B,C]) = \tr(C[A,B])$:
\begin{align}
[\lambda^j,[\lambda^j,E_\alpha]] =& a^j_\alpha a^j_{-\alpha} E_\alpha\nonumber\\
\tr (E_\alpha [\lambda^j,[\lambda^j,E_\alpha]]) = & 2a^j_\alpha a^j_{-\alpha} = \tr([\lambda^j,E_\alpha][E_\alpha,\lambda^j]) = - 2(a^j_\alpha)^2\nonumber\\
a^j_\alpha =& - a^j_{-\alpha}\nonumber\\
b^k_\alpha =& \frac{1}{2}\tr(\lambda^k [E_{\alpha},E_{-\alpha}]) = \frac{1}{2}\tr(E_{-\alpha}[\lambda^k,E_\alpha]) = a^k_\alpha\nonumber\\
\tr([E_{\alpha},E_{-\alpha}]^2) =& -8 = \tr(\lambda^j b^j_\alpha \lambda^k b^k_\alpha) = 2\sum_k (b^k_\alpha)^2 = 2\sum_k (a^k_\alpha)^2.
\end{align}
The first equality of the last equation follows from the specific form of $E_\alpha$ and $E_{-\alpha}$: $[E_{\alpha},E_{-\alpha}]$ has two non-zero terms on the diagonal, one $2i$ and the other $-2i$.
We now have the necessary ingredients to prove equation (\ref{eq:YH}). Expanding $Y = c^j \lambda^j + d_\alpha E_\alpha$, the right hand side reads
\begin{align}
\sum_j\left(\frac{1}{2} \lambda^j \tr(\lambda^j (c^k \lambda^k + d_\alpha E_\alpha)) + \frac{1}{4}[\lambda^j,[\lambda^j,c^k \lambda^k + d_\alpha E_\alpha]]\right) = & \sum_j\left( \lambda^j c^j - \frac{1}{4} (a_\alpha^j)^2 d_\alpha E_\alpha\right)\nonumber\\
=&c^j\lambda^j + d_\alpha E_\alpha = Y
\end{align}
\end{proof}

\begin{remark}
In the Lemma above, we used a natural representation of $\lambda^j$ and $E_\alpha$ in  terms of the Gell-Mann matrices. However, it is easy to see that the same commutation relations between the generators apply if we rotate the basis $\lambda^j \rightarrow \theta \lambda^j \theta^\dagger$, $E_\alpha \rightarrow \theta E_\alpha \theta^\dagger$, with $\theta \in SU(N_C)$, while the traces of powers of the operators are also untouched. As the proof of (\ref{eq:YH}) just depends on the commutation relations and the traces, it is clear that a form of this equation is also valid in the rotated basis. In particular, this means that we can replace $\lambda^j$ in equation (\ref{eq:YH}) with the rotated basis $n^j$, leaving us with
\begin{gather}
Y = \sum_j\left(\frac{1}{2} n^j \tr(n^jY) + \frac{1}{4}[n^j,[n^j,Y]]\right).\label{eq:YH2}
\end{gather}
\end{remark}
Using equation (\ref{eq:YH2}), we can use the defining equations to construct $\hatA$. From the second equation (\ref{eq:contde2}), we immediately have
\begin{gather}
\tr(n^j \hatA) = \tr(n^j A).
\end{gather}
From equation (\ref{eq:contde1}),
\begin{align}
0=& \partial_\mu n^j - ig [\hat{A}_\mu,n^j]\nonumber\\
0=& \sum_j\left( [n^j, \partial_\mu n^j] - i g [n^j,[\hatA_\mu,n^j]]\right)\nonumber\\
=& \sum_j [n^j, \partial_\mu n^j] + 2ig \left(2\hatA_\mu - \sum_j n^j \tr n^j \hatA_\mu\right)\nonumber\\
\hatA_\mu = & \sum_j\left(\frac{1}{2}n^j \tr (n^j A_\mu) + \frac{i}{4g} [n^j, \partial_\mu n^j] \right).\label{eq:67865}
\end{align}
The second field $X_\mu$ may be constructed from $X_\mu = A_\mu - \hatA_\mu$.

\begin{theorem}
The definitions of the restricted gauge field given in equations (\ref{eq:hatA1}) and (\ref{eq:hatA}), i.e.
\begin{align}
g\hat{A}_\mu =& - \hat{u}^j_\mu n^j  + i \theta\partial_\mu \theta^\dagger\label{eq:9910}\\
\hatA_\mu = & \sum_j\left(\frac{1}{2}n^j \tr (n^j A_\mu) + \frac{i}{4g} [n^j, \partial_\mu n^j] \right)\label{eq:678651}
\end{align}
 are equivalent, where $\hat{u}^j_\mu$ is a function of $\theta$  and $A_\mu$ which needs to be determined.
\end{theorem}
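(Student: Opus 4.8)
The plan is to start from equation (\ref{eq:9910}), namely $g\hat{A}_\mu = -\hat{u}^j_\mu n^j + i\theta\partial_\mu\theta^\dagger$, and show that it satisfies the two defining conditions (\ref{eq:contde1}) and (\ref{eq:contde2}), provided $\hat{u}^j_\mu$ is chosen appropriately; uniqueness of the decomposition (established earlier, with equation (\ref{eq:678651}) given as the explicit solution in \ref{app:A}) then forces the two expressions to coincide. First I would verify the defining equation $D_\mu[\hat{A}]n^k = 0$. Using $\partial_\mu n^k = [n^k,\theta\partial_\mu\theta^\dagger]$ (the identity recorded just after equation (\ref{eq:f11})), and the fact that the $n^j$ commute pairwise (since the diagonal $\lambda^j$ do and $n^j = \theta\lambda^j\theta^\dagger$), one computes
\begin{align}
g D_\mu[\hat{A}]n^k =& g\partial_\mu n^k - ig\,g[\hat{A}_\mu,n^k] \nonumber\\
=& g[n^k,\theta\partial_\mu\theta^\dagger] - i[-\hat{u}^j_\mu n^j + i\theta\partial_\mu\theta^\dagger, n^k] \nonumber\\
=& g[n^k,\theta\partial_\mu\theta^\dagger] + [\theta\partial_\mu\theta^\dagger,n^k] \nonumber\\
=& 0,
\end{align}
wait --- I need to be careful with the coupling constants: the covariant derivative in (\ref{eq:14}) is $D_\mu[\hat{A}]\alpha = \partial_\mu\alpha - ig[\hat{A}_\mu,\alpha]$, so the $\hat{u}^j_\mu n^j$ piece drops out because $[n^j,n^k]=0$, and the $i\theta\partial_\mu\theta^\dagger$ piece exactly cancels $g\partial_\mu n^k = g[n^k,\theta\partial_\mu\theta^\dagger]$ after multiplying (\ref{eq:9910}) through by $g$. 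This shows (\ref{eq:9910}) satisfies (\ref{eq:contde1}) for \emph{any} real coefficients $\hat{u}^j_\mu$, which is the expected behaviour: the freedom in $\hat{u}^j_\mu$ is precisely the Abelian (restricted) part that (\ref{eq:contde1}) cannot fix.

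Next I would pin down $\hat{u}^j_\mu$ by imposing the second defining condition, $\tr(n^k X_\mu) = 0$ with $X_\mu = A_\mu - \hat{A}_\mu$. Taking the trace of $n^k$ against (\ref{eq:9910}) gives $\tfrac12\tr(n^k n^j)\hat{u}^j_\mu = \tr(n^k n^j)\tfrac{1}{2}\hat{u}^j_\mu$; using $\tr(n^k n^j) = \tr(\lambda^k\lambda^j) = 2\delta^{jk}$ one finds
\begin{gather}
g\,\tfrac{1}{2}\tr(n^k\hat{A}_\mu) = -\hat{u}^k_\mu + \tfrac{i}{2}\tr(n^k\theta\partial_\mu\theta^\dagger),
\end{gather}
so demanding $\tfrac12\tr(n^k\hat{A}_\mu) = \tfrac12\tr(n^k A_\mu)$ (which is exactly condition (\ref{eq:defeq2})/(\ref{eq:contde2}) in the form quoted after equation (\ref{eq:f7})) yields the explicit formula
\begin{gather}
\hat{u}^k_\mu = -\tfrac{1}{2}\tr\!\big(n^k(gA_\mu - i\theta\partial_\mu\theta^\dagger)\big),
\end{gather}
which is precisely the relation (\ref{eq:reallydull}) used later in section \ref{sec:4.3}. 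With this choice $\hat{A}_\mu$ as defined by (\ref{eq:9910}) satisfies both defining equations.

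Finally I would invoke uniqueness: the lemma proved via equation (\ref{eq:YH2}) shows that the system (\ref{eq:contde1})--(\ref{eq:contde2}) has the unique solution (\ref{eq:678651}) (=(\ref{eq:67865})). Since (\ref{eq:9910}) with the $\hat{u}^k_\mu$ just determined also solves that system, the two must be equal. Alternatively --- and this is the cleaner route to present --- I would substitute the formula for $\hat{u}^k_\mu$ back into (\ref{eq:9910}) and directly rearrange into (\ref{eq:678651}): write $-\hat{u}^j_\mu n^j = \tfrac12 n^j\tr(n^j g A_\mu) - \tfrac{i}{2}n^j\tr(n^j\theta\partial_\mu\theta^\dagger)$, and then show $i\theta\partial_\mu\theta^\dagger - \tfrac{i}{2}n^j\tr(n^j\theta\partial_\mu\theta^\dagger) = \tfrac{i}{4}[n^j,[n^j,\theta\partial_\mu\theta^\dagger]] = \tfrac{i}{4}[n^j,\partial_\mu n^j]$, where the first equality is exactly the content of equation (\ref{eq:YH2}) applied to $Y = \theta\partial_\mu\theta^\dagger$ (so that $Y - \tfrac12 n^j\tr(n^jY) = \tfrac14[n^j,[n^j,Y]]$), and the second uses $\partial_\mu n^j = [n^j,\theta\partial_\mu\theta^\dagger]$. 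Dividing by $g$ gives (\ref{eq:678651}). The main obstacle is bookkeeping rather than conceptual: keeping the factors of $g$, the $\tfrac12$ from the Gell-Mann normalisation, and the signs consistent through the manipulation of nested commutators, and making sure the appeal to (\ref{eq:YH2}) is legitimate --- i.e. that $\theta\partial_\mu\theta^\dagger$ is a (anti-Hermitian, traceless) element of the $\mathfrak{su}(N_C)$ algebra to which the lemma applies, which holds because $\theta\in U(N_C)$ so $\theta\partial_\mu\theta^\dagger$ is anti-Hermitian, and its trace is $\partial_\mu\log\det\theta$, which may be dropped since it lies in the $U(1)$ direction that does not affect $n^j$ or $\hat{F}$ (consistent with the $\chi$-independence already noted in the text).
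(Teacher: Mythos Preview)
Your proposal is correct, and your ``alternative route'' is essentially the paper's proof run in reverse: the paper starts from (\ref{eq:678651}), rewrites $[n^j,\partial_\mu n^j]$ as $[n^j,[n^j,\theta\partial_\mu\theta^\dagger]]$ via $\partial_\mu n^j = [n^j,\theta\partial_\mu\theta^\dagger]$, applies the completeness relation (\ref{eq:YH2}) to collapse $\tfrac14[n^j,[n^j,\theta\partial_\mu\theta^\dagger]]$ into $\theta\partial_\mu\theta^\dagger - \tfrac12 n^j\tr(n^j\theta\partial_\mu\theta^\dagger)$, and then reads off (\ref{eq:9910}) with exactly the $\hat{u}^j_\mu$ you found. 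Your first route --- verifying (\ref{eq:contde1}) holds for any $\hat{u}^j_\mu$, fixing $\hat{u}^j_\mu$ from (\ref{eq:contde2}), and invoking uniqueness of the solution (\ref{eq:67865}) --- is a mild but genuine variant: it trades the direct algebraic rearrangement for an appeal to the uniqueness of the CDG decomposition, which makes clearer \emph{why} the two forms must agree rather than just that they do. Either way the same two ingredients, (\ref{eq:YH2}) and $\partial_\mu n^j = [n^j,\theta\partial_\mu\theta^\dagger]$, carry all the weight, and the bookkeeping you flag (factors of $g$, the $\tfrac12$ normalisation, tracelessness of $\theta\partial_\mu\theta^\dagger$) is exactly what the paper glosses over.
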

\begin{proof}
We start from equation (\ref{eq:678651}), and write, with the help of equation (\ref{eq:YH2}) and $\partial_\mu n^j = [n^j,\theta\partial_\mu \theta^\dagger]$
\begin{align}
\hatA_\mu = & \sum_j\left(\frac{1}{2}n^j \tr (n^j A_\mu) + \frac{i}{4g} [n^j, \partial_\mu n^j] \right)\nonumber\\
=&\frac{1}{2} n^j \tr n^j A_\mu + \frac{i}{4g} [n^j,[n^j,\theta \partial_\mu \theta^\dagger]]\nonumber\\
=&\frac{1}{2} n^j \tr (n^j(A_\mu - \frac{i}{g}\theta \partial_\mu \theta^\dagger)) + \frac{i}{g}\theta \partial_\mu \theta^\dagger
\end{align}
and if
\begin{gather}
\hat{u}^j_\mu \equiv -\frac{1}{2}\tr(n^j(gA_\mu - i\theta \partial_\mu \theta^\dagger)),\label{eq:reallydull}
\end{gather}
 we recover equation (\ref{eq:9910}).
\end{proof}
\begin{theorem}
The restricted field strength can be written in the form given in equation (\ref{eq:26})
\begin{gather}
F_{\mu\nu}[\hatA] = \frac{1}{2}\left(n^j \partial_\mu \tr(n^j \hat{A}_\nu) -  n^j \partial_\nu \tr(n^j \hat{A}_\mu) \right)+ \frac{i}{8g}n^j \tr(n^j [\partial_\mu n^k,\partial_\nu n^k])\label{eq:f2}
\end{gather}
\end{theorem}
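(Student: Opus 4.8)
The plan is to work from the general formula (\ref{eq:678651}) for $\hatA_\mu$ together with the identity $\partial_\mu n^j = [n^j,\theta\partial_\mu\theta^\dagger]$, substitute directly into the standard definition
$\hat{F}_{\mu\nu}[\hatA] = \partial_\mu\hatA_\nu - \partial_\nu\hatA_\mu - ig[\hatA_\mu,\hatA_\nu]$,
and manipulate the result into the form (\ref{eq:f2}). This is essentially an algebraic verification, so the main work is bookkeeping rather than any conceptual step. I would carry it out in the basis-rotated form: since $n^j = \theta\lambda^j\theta^\dagger$ and the $n^j$ satisfy the same commutation and trace relations as the diagonal Gell-Mann matrices (by the Remark following (\ref{eq:YH})), every identity available for $\lambda^j$ transfers to $n^j$.

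First I would record the pieces we need: from (\ref{eq:9910}) we have $g\hatA_\mu = -\hat u^j_\mu n^j + i\theta\partial_\mu\theta^\dagger$ with $\hat u^j_\mu = -\tfrac12\tr(n^j(gA_\mu - i\theta\partial_\mu\theta^\dagger))$, which is exactly (\ref{eq:reallydull}). The curvature of $\hatA$ written in this parametrisation was already computed in the text: equations (\ref{eq:hatA})--(\ref{eq:f11}) give $g\hat F_{\mu\nu}[\hatA] = n^j(\partial_\nu\hat u^j_\mu - \partial_\mu\hat u^j_\nu)$, where I now read $\hat u^j_\mu$ as a function of position via (\ref{eq:reallydull}) rather than restricted to the Wilson line. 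So the problem reduces to rewriting $n^j(\partial_\nu\hat u^j_\mu - \partial_\mu\hat u^j_\nu)/g$ as the right-hand side of (\ref{eq:f2}). Second, I would expand $\partial_\mu\hat u^j_\nu$ using (\ref{eq:reallydull}):
$g\,\partial_\mu\hat u^j_\nu = -\tfrac12\partial_\mu\tr(n^j(gA_\nu)) + \tfrac{i}{2}\partial_\mu\tr(n^j\theta\partial_\nu\theta^\dagger)$.
The first term, antisymmetrised in $\mu\nu$ and multiplied by $n^j$, directly produces $-\tfrac12 n^j(\partial_\mu\tr(n^jA_\nu) - \partial_\nu\tr(n^jA_\mu))$ times $g$ — up to sign conventions this is the $\hat E$ piece of (\ref{eq:f2}) (note (\ref{eq:f2}) as stated uses $\tr(n^j\hatA_\nu) = \tr(n^jA_\nu)$, which is just the defining condition (\ref{eq:contde2})).

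Third, and this is the only nontrivial part, I would show that the remaining term
$\tfrac{i}{2}\,n^j\big(\partial_\mu\tr(n^j\theta\partial_\nu\theta^\dagger) - \partial_\nu\tr(n^j\theta\partial_\mu\theta^\dagger)\big)$
equals $\tfrac{i}{8g}\cdot g\cdot n^j\tr(n^j[\partial_\mu n^k,\partial_\nu n^k])$, i.e. the $\hat H$ piece. Here I would use $\theta\partial_\mu\theta^\dagger = -\tfrac14\sum_k[n^k,[n^k,\theta\partial_\mu\theta^\dagger]] + \tfrac12\sum_k n^k\tr(n^k\theta\partial_\mu\theta^\dagger)$ from (\ref{eq:YH2}), together with $\partial_\mu n^k = [n^k,\theta\partial_\mu\theta^\dagger]$, to convert $\theta\partial_\mu\theta^\dagger$-factors into $\partial_\mu n^k$-factors; the trace identity $\tr(A[B,C]) = \tr(C[A,B])$ and the Jacobi identity will be used repeatedly to move derivatives around. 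The terms where $\partial_\mu$ hits the explicit $n^j$ inside the trace combine, via $\partial_\mu n^j = [n^j,\theta\partial_\mu\theta^\dagger]$ and the commutator algebra of the $n^k$, into the $[\partial_\mu n^k,\partial_\nu n^k]$ structure; the terms where $\partial_\mu$ hits $\theta\partial_\nu\theta^\dagger$ cancel against their $\mu\leftrightarrow\nu$ counterparts after using $\partial_\mu(\theta\partial_\nu\theta^\dagger) - \partial_\nu(\theta\partial_\mu\theta^\dagger) = [\theta\partial_\mu\theta^\dagger,\theta\partial_\nu\theta^\dagger]$ (the flatness/Maurer--Cartan identity for the pure gauge $\theta\partial\theta^\dagger$). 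I expect the main obstacle to be keeping the numerical coefficient $\tfrac{1}{8g}$ exactly right through these rearrangements — in particular tracking the factor $\tfrac14$ from (\ref{eq:YH2}) against the $\tfrac12$ from (\ref{eq:reallydull}) and the $a^k_\alpha$ normalisations — so I would cross-check the final coefficient against the already-established equivalent form
$\hat F_{\mu\nu}[\hatA] = -\tfrac{i}{2g}n^j\tr(n^j[\theta\partial_\mu\theta^\dagger,\theta\partial_\nu\theta^\dagger]) + (\hat E\text{ terms})$
by applying (\ref{eq:YH2}) to $\theta\partial_\mu\theta^\dagger$ inside that trace, which turns $[\theta\partial_\mu\theta^\dagger,\theta\partial_\nu\theta^\dagger]$ into a combination of $[\partial_\mu n^k,\partial_\nu n^k]$ and terms that vanish against $n^j$ under the trace. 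Either route leads to (\ref{eq:f2}); doing both gives a consistency check on the constant.
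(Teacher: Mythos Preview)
Your proposal is correct and reaches the target, but it takes a different route from the paper for the bulk of the argument. The paper substitutes the form (\ref{eq:678651}), $\hatA_\mu = \tfrac12 n^j\tr(n^jA_\mu) + \tfrac{i}{4g}[n^j,\partial_\mu n^j]$, directly into $\partial_\mu\hatA_\nu-\partial_\nu\hatA_\mu-ig[\hatA_\mu,\hatA_\nu]$ and then fights through nested commutators $[[n^j,\partial_\mu n^j],[n^k,\partial_\nu n^k]]$ using the Jacobi identity and (\ref{eq:YH2}) to reach the intermediate form (\ref{eq:F1}). You instead take the curvature computation (\ref{eq:f11}) from the main text as already established, so that everything reduces to differentiating $\hat u^j_\mu$ via (\ref{eq:reallydull}); the Maurer--Cartan identity for $\theta\partial\theta^\dagger$ then takes you to (\ref{eq:F1}) much more quickly. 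Both routes converge at (\ref{eq:F1}) and share the same final step, the identity $\tr(n^k[\partial_\mu n^j,\partial_\nu n^j]) = -4\tr(n^k[\theta\partial_\mu\theta^\dagger,\theta\partial_\nu\theta^\dagger])$, which the paper proves exactly as you propose, via Jacobi and (\ref{eq:YH2}). Your shortcut is legitimate and economical; what the paper's longer route buys is self-containment within the appendix, independent of the main-text derivation of (\ref{eq:f11}). Two small corrections to your sketch: the Maurer--Cartan identity has the opposite sign, $\partial_\mu(\theta\partial_\nu\theta^\dagger)-\partial_\nu(\theta\partial_\mu\theta^\dagger) = -[\theta\partial_\mu\theta^\dagger,\theta\partial_\nu\theta^\dagger]$, and the terms where $\partial_\mu$ hits $\theta\partial_\nu\theta^\dagger$ do not cancel but rather combine with the $\partial_\mu n^j$ terms to produce the single commutator in (\ref{eq:F1}) --- your cross-check would have caught both.
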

\begin{proof}
We first of all construct $F_{\mu\nu}$. For any field $X$, we have
\begin{gather}
D_\mu[\hatA] D_\nu[\hatA] X  = \partial_\mu \partial_\nu X - i g \partial_\mu[\hatA_\nu,X] - ig [\hatA_\mu,\partial_\nu X] - g^{2}[\hatA_\mu,[\hatA_\nu,X]]
\end{gather}
Anti-symmetrising this in the indices $\mu$ and $\nu$ gives
\begin{align}
[D_\mu[\hatA],D_\nu[\hatA]] X =& -ig[\partial_\mu \hatA_\nu - \partial_\nu \hatA_\mu,X] - g^2([A_\mu,[A_\nu,X]] - [A_\nu,[A_\mu,X]])\nonumber\\
=&-ig[\partial_\mu\hatA_\nu - \partial_\nu \hatA_\mu - i g[\hatA_\mu,\hatA_\nu],X],
\end{align}
and
\begin{gather}
F_{\mu\nu}[\hatA] = \partial_\mu\hatA_\nu - \partial_\nu \hatA_\mu - i g[\hatA_\mu,\hatA_\nu]
\end{gather}
where we have made use of the Jacobi Identity $[A,[B,C]] + [B,[C,A]] + [C,[A,B]] = 0$ and we have used the definition of $F_{\mu\nu}$ given in equation (\ref{eq:f7}).

We may now use equation (\ref{eq:67865}) to obtain an expression for $F_{\mu\nu}[\hatA]$:
\begin{align}
F_{\mu\nu}[\hatA] = & \frac{1}{2}\partial_\mu (n^j \tr (n^j A_\nu)) -\frac{1}{2}\partial_\nu(n^j \tr (n^j A_\mu)) + \nonumber\\
&\partial_\mu \frac{i}{4g} [n^j, \partial_\nu n^j] - \partial_\nu \frac{i}{4g} [n^j, \partial_\mu n^j] +\nonumber\\
& \frac{1}{8}[n^j \tr (n^j A_\mu),[n^k, \partial_\nu n^k]]-\frac{1}{8}[n^j \tr (n^j A_\nu),[n^k, \partial_\mu n^k]] +\nonumber\\
& \frac{i}{16g}[ [n^j, \partial_\mu n^j],[n^k, \partial_\nu n^k]]\nonumber\\
=&\frac{1}{2}n^j (\partial_\mu \tr n^j \hatA_\nu - \partial_\nu \tr n^j \hatA_\mu) + \nonumber\\
&\frac{1}{2}\tr (n^j A_\nu) (\partial_\mu n^j - \frac{1}{4} [n^j,[n^k,\partial_\mu n^k]]) - \nonumber\\
&\frac{1}{2}\tr (n^j A_\mu) (\partial_\nu n^j - \frac{1}{4} [n^j,[n^k,\partial_\nu n^k]]) + \nonumber\\
&\frac{i}{16g}(8[\partial_\mu n^j,\partial_\nu n^j] + [[n^j,\partial_\mu n^j],[n^k,\partial_\nu n^k]]).
\end{align}
For the term proportional to $\tr n^j A_\nu$, we can use the Jacobi identity and equation (\ref{eq:YH2}) to write
\begin{align}
\partial_\mu n^j - \frac{1}{4} [n^j,[n^k,\partial_\mu n^k]] = & \partial_\mu n^j +\frac{1}{4} [n^k,[\partial_\mu n^k,n^j]]\nonumber\\
=&\partial_\mu n^j -\frac{1}{4} [n^k,[n^k,\partial_\mu n^j]]\nonumber\\
=& \partial_\mu n^j - \partial_\mu n^j + \frac{1}{2}n^k \tr(n^k \partial_\mu n^j) = \frac{1}{2} n^k\tr(n^k[n^j,\theta\partial_\mu \theta^\dagger])=0
\end{align}
For the term with commutators between $\partial_\nu n$ and $\partial_\mu n$, we use $\partial_\mu n = [n,\theta \partial_\mu \theta^\dagger]$ and expand
\begin{align}
8[\partial_\mu n^j,\partial_\nu n^j] +& [[n^j,\partial_\mu n^j],[n^k,\partial_\nu n^k]] =  8[\partial_\mu n^j,\partial_\nu n^j] +[[n^j,[n^j,\theta\partial_\mu \theta^\dagger]],[n^k,[n^k,\theta\partial_\nu \theta^\dagger]]]\nonumber\\
=& 4(\partial_\mu[n^j,\partial_\nu n^j] - \partial_\nu [n^j,\partial_\mu n^j]) +\nonumber\\
&4 [n^j \tr(n^j \theta\partial_\mu \theta^\dagger) - 2\theta\partial_\mu \theta^\dagger,n^k \tr(n^k \theta\partial_\nu \theta^\dagger) - 2\theta\partial_\nu \theta^\dagger]\nonumber\\
=& 4(\partial_\mu[n^j,[n^j,\theta\partial_\nu \theta^\dagger]] - \partial_\nu [n^j,[n^j,\theta\partial_\mu \theta^\dagger]) +\nonumber\\
&4 [n^j \tr(n^j \theta\partial_\mu \theta^\dagger) - 2\theta\partial_\mu \theta^\dagger,n^k \tr(n^k \theta\partial_\nu \theta^\dagger) - 2\theta\partial_\nu \theta^\dagger]\nonumber\\
=& -8\big(\partial_\mu(n^j \tr(n^j \theta\partial_\nu \theta^\dagger)) -\partial_\nu(n^j \tr(n^j \theta\partial_\mu \theta^\dagger)) + 2\partial_\nu \theta\partial_\mu \theta^\dagger - 2\partial_\mu \theta\partial_\nu \theta^\dagger\big)-\nonumber\\
&8 \big(\tr(n^j \theta\partial_\mu \theta^\dagger) \partial_\nu n^j - \tr(n^j \theta\partial_\nu \theta^\dagger) \partial_\mu n^j - 2[\theta\partial_\mu \theta^\dagger,\theta\partial_\nu \theta^\dagger]\big)\nonumber\\
=&-8 n^j \tr (-\partial_\nu n^j \theta\partial_\mu \theta^\dagger + \partial_\mu n^j \theta\partial_\nu \theta^\dagger - n^j \partial_\nu \theta\partial_\mu \theta^\dagger + n^j \partial_\mu \theta\partial_\nu \theta^\dagger)\nonumber\\
=&-8n^j\tr([n^j,\theta\partial_\mu \theta^\dagger]\theta\partial_\nu\theta^\dagger - [n^j,\theta\partial_\nu \theta^\dagger]\theta\partial_\mu\theta^\dagger + n^j[\theta\partial_\nu\theta^\dagger,\theta\partial_\mu\theta^\dagger])\nonumber\\
=& -8 n^j \tr(n^j[\theta\partial_\mu \theta^\dagger,\theta\partial_\nu \theta^\dagger]).
\end{align}
This gives
\begin{gather}
F_{\mu\nu}[\hatA] = \frac{1}{2}n^j (\partial_\mu \tr n^j A_\nu - \partial_\nu \tr n^j A_\mu) - \frac{i}{2g} n^j\tr(n^j[\theta\partial_\mu \theta^\dagger,\theta\partial_\nu \theta^\dagger]).\label{eq:F1}
\end{gather}
Finally, we use $\tr(n^k [n^j,X])=0$ for any $X$, which follows from $[n^k,n^j] = 0$, to show that
\begin{align}
\tr n^k [\partial_\mu n^j,\partial_\nu n^j] = & \tr(n^k[[n^j,\theta\partial_\mu \theta^\dagger],[n^j,\theta\partial_\nu \theta^\dagger]]\nonumber\\
=&-\tr(n^k[n^j,[\theta\partial_\nu \theta^\dagger,[n^j,\theta\partial_\mu \theta^\dagger]]]) - \tr(n^k[\theta\partial_\nu \theta^\dagger,[[n^j,\theta\partial_\mu \theta^\dagger],n^j]])\nonumber\\
=&-2\tr (n^k [\theta\partial_\nu \theta^\dagger, n^j \tr (n^j \theta\partial_\mu \theta^\dagger)]) + 4 \tr (n^k [\theta\partial_\nu \theta^\dagger,\theta\partial_\mu \theta^\dagger])\nonumber\\
=& 4 \tr (n^k [\theta\partial_\nu \theta^\dagger,\theta\partial_\mu \theta^\dagger]),
\end{align}
which gives the alternative form for the restricted field strength tensor
\begin{gather}
F_{\mu\nu}[\hatA] = \frac{1}{2}n^j (\partial_\mu \tr n^j A_\nu - \partial_\nu \tr n^j A_\mu) + \frac{i}{8g} n^j\tr(n^j[\partial_\mu n^k,\partial_\nu n^k]),\label{eq:F2}
\end{gather}
in agreement with equation (\ref{eq:f2}).
The factors of $1/2$ and $1/8$ depend on our normalisation convention for the generators of the gauge group.
\end{proof}
\subsection{Exponents of the Lie algebra}
In this section, we give a few results which were needed in section \ref{sec:3}.
\begin{lemma}
For non-commuting objects $X$ and $Y$ in some Lie algebra,
\begin{gather}
e^X Y e^{-X} = Y + [X,Y] + \frac{1}{2!} [X,[X,Y]] + \ldots\label{eq:l1b}
\end{gather}
\end{lemma}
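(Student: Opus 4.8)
This is the standard Baker–Campbell–Hausdorff / adjoint-action identity, and the plan is to prove it by introducing a one-parameter family and using a differential equation in that parameter. Define $F(t) = e^{tX} Y e^{-tX}$ for a real parameter $t$, so that $F(0) = Y$ and $F(1) = e^X Y e^{-X}$ is the left-hand side of (\ref{eq:l1b}). The point of inserting $t$ is that everything becomes a smooth function of a single scalar, so ordinary calculus applies even though $X$ and $Y$ do not commute.

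First I would differentiate $F$ with respect to $t$. Using the product rule and $\frac{d}{dt}e^{tX} = X e^{tX} = e^{tX} X$, one finds $F'(t) = X e^{tX} Y e^{-tX} - e^{tX} Y e^{-tX} X = [X, F(t)]$. Thus $F$ satisfies the linear ODE $F'(t) = \mathrm{ad}_X F(t)$ with initial condition $F(0) = Y$, where $\mathrm{ad}_X(\cdot) = [X,\cdot]$. The unique solution is $F(t) = e^{t\,\mathrm{ad}_X} Y$, which I would justify either by appealing to uniqueness of solutions of linear ODEs with constant coefficients, or more elementarily by computing all higher derivatives at $t=0$: an easy induction gives $F^{(n)}(0) = (\mathrm{ad}_X)^n Y = [X^n Y]$ in the bracket notation of the excerpt. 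Then I would write the Taylor series $F(t) = \sum_{n\ge 0} \frac{t^n}{n!} (\mathrm{ad}_X)^n Y$, noting that for finite-dimensional Lie algebras (which is our setting, since $X,Y$ are finite matrices) the series converges absolutely for all $t$.

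Setting $t = 1$ then yields $e^X Y e^{-X} = \sum_{n\ge 0}\frac{1}{n!}(\mathrm{ad}_X)^n Y = Y + [X,Y] + \frac{1}{2!}[X,[X,Y]] + \cdots$, which is exactly (\ref{eq:l1b}). There is essentially no serious obstacle here; the only point that needs a word of care is the convergence/rearrangement of the series and the justification that the formal Taylor expansion actually equals $F(1)$, but since we work with finite matrices this is immediate from the analyticity of $t \mapsto e^{tX} Y e^{-tX}$. If one prefers to avoid analytic language entirely, an alternative purely algebraic route is to verify directly that both sides, viewed as formal power series in a bookkeeping parameter, satisfy the same recursion $a_{n+1} = \frac{1}{n+1}[X, a_n]$ with $a_0 = Y$; this bypasses convergence but is really the same computation. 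I would present the ODE version as the main argument since it is the cleanest.
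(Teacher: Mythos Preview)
Your proof is correct. The paper's own proof is a one-line remark: it states the result is standard and that it follows by expanding $e^X$ and $e^{-X}$ as Taylor series and commuting powers of $X$ through $Y$ term by term. Your route is slightly different: rather than expanding both exponentials and collecting the coefficient of each nested commutator combinatorially, you introduce the one-parameter family $F(t)=e^{tX}Ye^{-tX}$, derive the ODE $F'=[X,F]$, and read off the Taylor coefficients $F^{(n)}(0)=(\mathrm{ad}_X)^nY$ by induction. This buys you a cleaner argument with no double-sum bookkeeping and a transparent convergence statement in the finite-dimensional matrix setting; the paper's direct-expansion approach is shorter to state but leaves the combinatorics implicit. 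Either is entirely acceptable here.
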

\begin{proof}
The result is standard; it follows by expanding $e^X$ in a Taylor series and commuting $X$ term by term through $Y$.
\end{proof}
\begin{lemma}
For non-commuting $X$ and $Y$ in some Lie algebra,
\begin{gather}
e^{X+Y} = e^X e^Y e^{-\frac{1}{2}[X,Y]} e^{\frac{1}{3!}([X,[X,Y]] -2[Y,[X,Y]]) } e^{-\frac{1}{4!}([X,[X,[X,Y]]] +3[[[X,Y],X,Y]] + 3[[[X,Y],Y],Y] )}\ldots\label{eq:l2b}
\end{gather}
\end{lemma}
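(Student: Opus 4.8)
The statement to prove is the Baker–Campbell–Hausdorff-type identity
\begin{gather}
e^{X+Y} = e^X e^Y e^{-\frac{1}{2}[X,Y]} e^{\frac{1}{3!}([X,[X,Y]] -2[Y,[X,Y]]) } e^{-\frac{1}{4!}([X,[X,[X,Y]]] +3[[[X,Y],X,Y]] + 3[[[X,Y],Y],Y] )}\ldots\nonumber
\end{gather}
which expresses $e^{X+Y}$ as an infinite ordered product of exponentials of iterated commutators of increasing weight. The plan is to introduce a scaling parameter $t$ and write $e^{t(X+Y)} = e^{tX}e^{tY}e^{C_2(t)}e^{C_3(t)}\cdots$, where $C_n(t) = t^n C_n$ is an element of the Lie algebra built from commutators of weight $n$ in $X$ and $Y$. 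Treating everything as a formal power series in $t$ (so that convergence is not an issue — the claim is an identity of formal series, and only finitely many factors contribute to each order of $t$), I would determine the $C_n$ recursively by matching coefficients of $t^n$ on both sides.

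First I would set up the bookkeeping. Expand $e^{tX}e^{tY} = \exp\!\big(t(X+Y) + \tfrac{t^2}{2}[X,Y] + O(t^3)\big)$ using the ordinary BCH series (or directly by multiplying the two exponential series), so that $e^{-t(X+Y)}e^{tX}e^{tY} = \exp\!\big(\tfrac{t^2}{2}[X,Y] + O(t^3)\big)$. Then the defining relation becomes $\exp\!\big(\tfrac{t^2}{2}[X,Y]+O(t^3)\big) = e^{C_2(t)}e^{C_3(t)}\cdots$. At order $t^2$ the only contribution on the right is $C_2(t)$, forcing $C_2 = \tfrac{1}{2}[X,Y]$; but the statement has $e^{-\frac12[X,Y]}$ as the first correction factor, which simply means the product is being written in the form $e^{X+Y} = e^Xe^Ye^{C_2'}\cdots$ with $C_2' = -C_2$, i.e. one moves the $e^{tX}e^{tY}$ to the other side: $e^{t(X+Y)} = e^{tX}e^{tY}\exp(-\tfrac{t^2}{2}[X,Y])\cdots$, and I would be careful to state the convention so the signs come out as written. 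At order $t^3$, using Lemma with equation (\ref{eq:l1b}) to reorder the factors already fixed (conjugating $C_3$ past the lower factors contributes only higher order), one reads off $C_3 = \tfrac{1}{3!}([X,[X,Y]] - 2[Y,[X,Y]])$; at order $t^4$ one similarly obtains the weight-four term. The inductive step: assuming $C_2,\dots,C_{n-1}$ are known elements of the free Lie algebra of the appropriate weight, the equation at order $t^n$ reads $(\text{known weight-}n\text{ expression in }X,Y) = C_n + (\text{terms built from }C_2,\dots,C_{n-1})$, and since the right-hand side determines $C_n$ uniquely as a polynomial in iterated brackets, the recursion closes. That $C_n$ lies in the Lie algebra (is a sum of iterated commutators, not just an associative polynomial) follows from the Friedrichs criterion / the fact that $\log(e^Xe^Y\cdots)$ of Lie elements is Lie, or can be obtained by differentiating the relation in $t$ and using that the left side is manifestly primitive.

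The routine but slightly tedious part will be verifying the explicit weight-three and weight-four coefficients exactly as displayed — multiplying out the exponential series to fourth order and collecting iterated brackets using the Jacobi identity $[A,[B,C]]+[B,[C,A]]+[C,[A,B]]=0$ (already invoked elsewhere in the appendix) to bring everything to a chosen basis of the free Lie algebra. I expect the main obstacle to be purely organizational: the weight-four level already involves several iterated brackets and the notation $[[[X,Y],X,Y]]$ in the statement is a nested commutator that must be disambiguated and then checked against the standard BCH expansion; keeping the reordering of the tail factors $e^{C_3(t)}e^{C_4(t)}\cdots$ straight while matching coefficients is where sign and combinatorial errors creep in. Since for the applications in section \ref{sec:3} only the first one or two correction factors are ever used, I would prove the identity to the order actually needed in full detail and indicate that the general pattern follows by the same recursion, noting that the series is the formal BCH series regrouped by total degree, which is standard.
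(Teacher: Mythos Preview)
Your proposal is correct and in fact far more detailed than the paper's own ``proof'', which consists of a single line: \emph{See, for example,~\cite{Magnus:1954}.} The paper simply cites the classical reference of Magnus for the Zassenhaus formula and moves on.

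Your recursive approach---introducing a scaling parameter $t$, writing $e^{t(X+Y)} = e^{tX}e^{tY}e^{t^2 C_2}e^{t^3 C_3}\cdots$ as a formal product, and solving for each $C_n$ by matching coefficients order by order in $t$---is exactly the standard derivation of the Zassenhaus expansion, and is essentially what one finds in Magnus's paper. So you have reconstructed the cited argument rather than taking a different route. One small remark: your discussion of the sign of $C_2$ is slightly muddled (you first get $C_2 = \tfrac12[X,Y]$ from $e^{-t(X+Y)}e^{tX}e^{tY}$, then observe the statement has a minus sign); it is cleaner to work directly with $e^{-tY}e^{-tX}e^{t(X+Y)}$ from the start, which gives the correction factors with the signs as written. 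But this is cosmetic. Given that the paper itself only needs the first correction or two for the application in section~\ref{sec:3}, your plan to verify those orders explicitly and indicate the general recursion is entirely appropriate and more than the paper actually supplies.
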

\begin{proof}
See, for example,~\cite{Magnus:1954}.
\end{proof}
\begin{lemma}
For non-commuting objects $X$ and $Y$ in some Lie algebra, where $X$ is differentiable with respect to a variable $s$ and we can neglect terms of O($\delta s^2$)
\begin{gather}
e^{X(s)} e^{\delta s Y} e^{-X(s+\delta s)} = e^{\delta s Y - \delta s \partial_s X + \delta s \sum_{n>0} \frac{1}{n!}[X^n (Y-\frac{1}{n+1}\partial_s X) ]}\label{eq:l3}
\end{gather}
where we use the notation
\begin{gather}
[X^n Y] \equiv \underbrace{[X,[X,\ldots [X}_{n\text{ terms}},Y]]].
\end{gather}
\end{lemma}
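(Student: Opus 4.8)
The plan is to expand both sides to first order in $\delta s$, using the two preceding lemmas (equations (\ref{eq:l1b}) and (\ref{eq:l2b})) to control the non-commutativity of $X$ with $\partial_s X$, and then to observe that, since everything beyond the leading $1$ is of order $\delta s$, matching the first-order terms is the same as matching the exponents.

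First I would write $X(s+\delta s) = X(s) + \delta s\,\partial_s X + O(\delta s^2)$ and $e^{\delta s Y} = 1 + \delta s Y + O(\delta s^2)$, so that, abbreviating $X \equiv X(s)$,
\begin{gather}
e^{X}e^{\delta s Y}e^{-X(s+\delta s)} = e^{X}e^{-(X+\delta s\,\partial_s X)} + \delta s\, e^{X}Y e^{-X} + O(\delta s^2),
\end{gather}
where in the second term I replaced $e^{-(X+\delta s\,\partial_s X)}$ by $e^{-X}$ because the difference is $O(\delta s^2)$ once multiplied by $\delta s$. The second term is then immediate from the adjoint expansion (\ref{eq:l1b}): $e^{X}Y e^{-X} = \sum_{n\ge 0}\frac{1}{n!}[X^n Y]$.

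The heart of the argument is the first term $e^{X}e^{-(X+\delta s\,\partial_s X)}$. Here I would apply (\ref{eq:l2b}) with arguments $-X$ and $-\delta s\,\partial_s X$; the subtlety (and the main obstacle) is that although the second argument is $O(\delta s)$, infinitely many of the nested-commutator factors in (\ref{eq:l2b}) contribute at first order, so one must resum them rather than truncate after the first few. Collecting the first-order contributions produces the ``derivative of the exponential'' series,
\begin{gather}
e^{X}e^{-(X+\delta s\,\partial_s X)} = 1 - \delta s\sum_{n\ge 0}\frac{1}{(n+1)!}[X^n\,\partial_s X] + O(\delta s^2);
\end{gather}
equivalently, this identity follows by integrating (\ref{eq:l1b}), since $\int_0^1 e^{uX}(\partial_s X)e^{-uX}\,du = \sum_{n\ge 0}\frac{1}{(n+1)!}[X^n\,\partial_s X]$. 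Either route is just careful bookkeeping of signs and factorials.

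Finally I would add the two pieces, peel off the $n=0$ terms ($\delta s Y - \delta s\,\partial_s X$), and combine the remaining sums term-by-term using $\frac{1}{n!}[X^n Y] - \frac{1}{(n+1)!}[X^n\,\partial_s X] = \frac{1}{n!}[X^n(Y - \frac{1}{n+1}\partial_s X)]$. This gives $e^{X}e^{\delta s Y}e^{-X(s+\delta s)} = 1 + E + O(\delta s^2)$ with $E = \delta s Y - \delta s\,\partial_s X + \delta s\sum_{n>0}\frac{1}{n!}[X^n(Y - \frac{1}{n+1}\partial_s X)]$. Since $E = O(\delta s)$, we have $e^{E} = 1 + E + O(\delta s^2)$, so the left-hand side equals $e^{E}$ to the stated order, which is exactly (\ref{eq:l3}). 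The only other thing worth flagging is internal consistency of the ``drop $O(\delta s^2)$'' convention throughout, in particular the legitimacy of passing between $1+E$ and $e^{E}$ at the final step.
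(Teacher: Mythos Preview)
Your proposal is correct and follows essentially the same approach as the paper: expand $e^{\delta s Y}$ and $X(s+\delta s)$ to first order, handle the $Y$ contribution via the adjoint expansion (\ref{eq:l1b}), handle the $\partial_s X$ contribution by applying (\ref{eq:l2b}) to $e^{X}e^{-(X+\delta s\,\partial_s X)}$ and keeping only the $O(\delta s)$ nested commutators, then recombine into a single exponential. The only cosmetic difference is that the paper keeps the intermediate result as a product of exponentials and collapses them in one step, whereas you pass through the $1+E$ form and re-exponentiate at the end; your alternative via the integral $\int_0^1 e^{uX}(\partial_s X)e^{-uX}\,du$ is a clean way to get the same $\frac{1}{(n+1)!}$ coefficients that the paper extracts from (\ref{eq:l2b}).
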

\begin{proof}
Using a Taylor expansion, and the results of equations (\ref{eq:l1b}) and (\ref{eq:l2b}), we have
\begin{align}
e^{X(s)} e^{\delta s Y}& e^{-X(s+\delta s)} = e^{X(s)} (1+\delta s Y) e^{-X(s)-\delta s \partial_s X(s)} + O(\delta s^2)\nonumber\\
=&(1+ \delta s Y +\sum_{n>0} \frac{1}{n!}[X^n \delta s Y])e^{-\delta s \partial_s X(s)} e^{-\frac{1}{2}\delta s [X,\partial_s X]} e^{-\frac{1}{3!}\delta s [X^2,\partial_s X]} \ldots + O(\delta s^2)\nonumber\\
=& e^{\delta s Y +\sum_{n>0} \frac{1}{n!}[X^n \delta s Y] - \delta s \partial_s X - \sum_{n>1} \frac{1}{n!}[X^{n-1} \delta s \partial_s X]} + O(\delta s^2)
\end{align}
where $X \equiv X(s)$. Equation (\ref{eq:l3}) follows immediately.
\end{proof}
\begin{lemma}
For $\phi,X \in su(2)$,
\begin{gather}
X + \sum_{n>0} \frac{1}{n!}[(i\phi)^n X] = \cos \sqrt{2 \tr \phi^2}\left(X - \phi \frac{\tr \phi X}{\tr \phi^2}\right) + \frac{\sin \sqrt{2\tr\phi^2}}{\sqrt{2 \tr \phi^2}}[\phi,X] + \phi \frac{\tr \phi X}{\tr \phi^2}\label{eq:lemma:su2}
\end{gather}
\end{lemma}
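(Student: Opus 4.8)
The plan is to prove the identity (\ref{eq:lemma:su2}) by exploiting the representation theory of $su(2)$: the adjoint action $\mathrm{ad}_\phi(X) = [\phi,X]$ is a linear operator on the three-dimensional space $su(2)$, and I will diagonalise it. First I would observe that $\phi$ itself is an eigenvector with eigenvalue $0$, so $\mathrm{ad}_\phi$ annihilates the component of $X$ parallel to $\phi$. Writing $X = X_\parallel + X_\perp$ with $X_\parallel = \phi\,\tr(\phi X)/\tr(\phi^2)$ and $X_\perp = X - X_\parallel$ (using $\tr(\phi^2)\neq 0$ and the normalisation $\tr(\lambda^a\lambda^b) = 2\delta^{ab}$), the series term $\sum_{n>0}\frac{1}{n!}[(i\phi)^n X]$ acts trivially on $X_\parallel$, which accounts for the isolated $+\phi\,\tr(\phi X)/\tr\phi^2$ term on the right.

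The key computational step is then to determine how $\mathrm{ad}_\phi$ acts on the two-dimensional orthogonal complement $\{X_\perp\}$. I would show that $\mathrm{ad}_\phi^2$ restricted to this subspace is $-2\tr(\phi^2)$ times the identity; this follows from the $su(2)$ structure constants, or more concretely by noting that for $\phi = \phi^a \lambda^a/2$ one has $[\phi,[\phi,X_\perp]] = -|\vec\phi|^2 X_\perp$ in the standard normalisation, with $|\vec\phi|^2 = \sum_a(\phi^a)^2 = 2\tr(\phi^2)$. Granting $\mathrm{ad}_\phi^2 = -2\tr(\phi^2)$ on the perpendicular subspace, the full exponential series telescopes: even powers of $\mathrm{ad}_\phi$ give $(-2\tr\phi^2)^k$ and assemble into $\cos\sqrt{2\tr\phi^2}$ acting on $X_\perp$, while odd powers give $(-2\tr\phi^2)^k[\phi,X]$ and assemble into $\frac{\sin\sqrt{2\tr\phi^2}}{\sqrt{2\tr\phi^2}}[\phi,X]$ (note $[\phi,X]=[\phi,X_\perp]$ automatically). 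Since $X_\perp = X - \phi\,\tr(\phi X)/\tr\phi^2$, adding back the $X_\parallel$ contribution reproduces exactly the right-hand side of (\ref{eq:lemma:su2}).

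Concretely the steps in order are: (i) set up the parallel/perpendicular decomposition of $X$ relative to $\phi$ and verify $\mathrm{ad}_\phi X_\parallel = 0$; (ii) establish the central lemma $\mathrm{ad}_\phi^2 X = -2\tr(\phi^2)\,X_\perp$ for all $X$ (equivalently $\mathrm{ad}_\phi^3 = -2\tr(\phi^2)\,\mathrm{ad}_\phi$), which one can check by direct computation on the Pauli-matrix basis or by invoking the known $su(2)$ identity $[\phi,[\phi,X]] = \phi\,\tr(\phi X) - 2\tr(\phi^2)\,X\cdot\tfrac12$ after fixing conventions; (iii) substitute this into the formal power series $\sum_{n\geq0}\frac{1}{n!}(\mathrm{ad}_{i\phi})^n$ and split into even and odd $n$; (iv) recognise the two resulting scalar series as $\cos$ and $\mathrm{sinc}$ of $\sqrt{2\tr\phi^2}$; (v) collect terms.

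The main obstacle is step (ii): getting the normalisation constant exactly right. The factor $2\tr(\phi^2)$ versus $\tr(\phi^2)$ versus $|\vec\phi|^2$ depends sensitively on the convention that the generators are $\lambda^a/2$ with $\tr(\lambda^a\lambda^b)=2\delta^{ab}$, and one must track the factor of $\tfrac12$ consistently — in particular, whether $\phi$ in the statement denotes $\phi^a\lambda^a$ or $\phi^a\lambda^a/2$. Since the lemma as stated uses $\phi$ directly inside $e^{i\phi}$ in the application (equation (\ref{eq:thpar}) writes $\theta = (\cos a + i\sin a\,\phi)e^{id_3\lambda^3}$, implying $\phi^2 = I$ so $\tr\phi^2 = 2$ and $\sqrt{2\tr\phi^2} = 2$, giving $\cos 2a$ as expected from the half-angle structure), I would double-check the formula against this special case $\phi^2 = I$, $\tr(\phi X)$ arbitrary, where it must reduce to $e^{ia\phi}Xe^{-ia\phi}$ evaluated via the elementary $SU(2)$ rotation formula. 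Everything else — the telescoping of the series and the parallel/perpendicular split — is routine linear algebra once the central quadratic relation is pinned down.
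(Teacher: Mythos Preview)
Your approach is correct and essentially the same as the paper's: both reduce to the double-commutator identity $[i\phi,[i\phi,X]] = 2\phi\,\tr(\phi X) - 2\tr(\phi^2)\,X$ (equivalently $\mathrm{ad}_{i\phi}^2 X_\perp = -2\tr(\phi^2)\,X_\perp$), which the paper obtains by specialising equation~(\ref{eq:YH2}) to $su(2)$ while you obtain it from the Pauli commutators, and then sum the even and odd parts of the series into $\cos$ and $\mathrm{sinc}$. One slip to fix: for Hermitian $\phi$ the sign is $\mathrm{ad}_\phi^2 X_\perp = +2\tr(\phi^2)\,X_\perp$, not minus --- the $-2\tr\phi^2$ you want belongs to $\mathrm{ad}_{i\phi}^2$ --- and your own proposed cross-check against the explicit $SU(2)$ rotation would catch it.
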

\begin{proof}
We may always parametrise $\phi = a \chi^\dagger \lambda^3 \chi$ for $\chi \in SU(2)/U(1)$ and real $a$. This means that equation (\ref{eq:YH2}) is applicable, and we have
\begin{gather}
[i\phi,[i\phi,X]] = 2 \phi \tr (\phi X) - 2 \tr \phi^2 X.
\end{gather}
Therefore, for integer $n>0$,
\begin{align}
[(i\phi)^{2n}X] = & (-2\tr \phi^2)^n X + (-2 \tr \phi^2)^{n-1} 2\phi \tr (\phi X)\nonumber\\
[(i\phi)^{2n+1}X]=& i(-2\tr \phi^2)^n[\phi,X].
\end{align}
Therefore
\begin{multline}
X + [i\phi,X] + \sum_{n>0}\left[\frac{1}{(2n)!}[(i\phi)^{2n}X] + \frac{1}{(2n+1)!}[(i\phi)^{2n+1}X] \right] = \\
\cos \sqrt{2 \tr \phi^2} \left(X - \phi \frac{\tr (\phi X)}{{ \tr \phi^2}}\right) + i\frac{\sin\sqrt{2 \tr \phi^2}}{\sqrt{2 \tr \phi^2}} [\phi,X] + \phi \frac{\tr (\phi X)}{{ \tr \phi^2}},
\end{multline}
which is the result we wanted to prove.
\end{proof}

\section{Area Law and the String Tension}\label{app:B}
We expect that the average number of the \mons contained within a loop will be proportional to the area of the loop. Given that different configurations have statistically independent numbers of \monsp, we may model the distribution of the number, $n_j$, of \mons of a type $j$ (i.e. \mons constructed from $\lambda^j$), within the Wilson Loop across the configurations using a Poisson distribution with a mean value $\mu_j$ which is the density of \mons multiplied by the loop area. Each \mon (or \mon-anti-\mon pair) will contribute a certain value $\chi^j \lambda^j$ to the Wilson Loop, and this will be distributed according to some distribution $P(\chi^j)$ which is an even function of $\chi^j$ as positive winding number and negative winding number \mons are equally likely. Neglecting any perimeter contribution from non-\mon effects, the expectation value of the Wilson line will be
\begin{align}
 \tr \left(\langle\theta^\dagger W \theta\right)\rangle = &\tr\prod_j \sum_{n_j} \frac{\mu_j^{n_j}}{n_j!} e^{-\mu_j} \left( \int d\chi^k P(\chi^3,\chi^8,\ldots) e^{i\lambda^k \chi^k}\right)^{n_j}\nonumber\\
=&\tr \prod_j e^{-\mu_j \left(1- \int d\chi^k P(\chi^3,\chi^8,\ldots) e^{i \lambda^k \chi^k}\right)}.
\end{align}
We may expand the exponential $e^{i\lambda^j \chi^j}$ in terms of trigonometric functions of $\chi^j$. For example, in SU(2),
\begin{gather}
e^{i\lambda^3 \chi^3} = \cos \chi^3 + i \sin \chi^3 \lambda^3,
\end{gather}
and in SU(3),
\begin{multline}
e^{i\lambda^3 \chi^3 + i \lambda^8 \chi^8} = \frac{1}{3}\left(2 \cos \frac{\chi^8}{\sqrt{3}} \cos \chi^3 + \cos \frac{2\chi^8}{\sqrt{3}} + 2 i \sin \frac{\chi^8}{\sqrt{3}} \cos\chi^3 -i \sin \frac{2 \chi^8}{\sqrt{3}}\right)\\
+\lambda^3 \left(-\sin\frac{\chi^8}{\sqrt{3}} \sin \chi^3 + i \cos \frac{\chi^8}{\sqrt{3}} \sin \chi^3\right) + \\
\frac{\lambda^8}{\sqrt{3}} \left(\cos \frac{\chi^8}{\sqrt{3}} \cos \chi^3 - \cos \frac{2\chi^8}{\sqrt{3}} +  i \sin \frac{\chi^8}{\sqrt{3}} \cos\chi^3 +i \sin \frac{2 \chi^8}{\sqrt{3}}\right) .
\end{multline}
In each case, the imaginary terms are eliminated by the integral over $\chi$, so as long as the distiribution $P$ is not too strongly peaked at $\chi = 0$, we are left with a result which resembles $e^{-\mu \alpha_0 - \mu \alpha^j \lambda^j}$, where $\mu$ is proportional to the area of the loop (and depends on the number of structures) while $\alpha$, which represents the quantity obtained after the integral over $\chi$, is independent of the area (and depends a single structure); each diagonal term exponentially decreases with the area of the loop and the trace will be dominated by whichever term decreases the slowest. This gives an area law scaling for the Wilson Loop.


\section{SU(3) \Mon algebra}\label{app:su3}
 The following commutation relations are useful in evaluating the product of $\theta$ matrices in SU(3). $\phi_1$, $\phi_2$, $\ldots$ are defined in equation (\ref{eq:39z})
\begin{align}
[\phi_1,\lambda^3] =& 2 i \bar{\phi_1} & [\phi_1,[\phi_1,\lambda^3]] = & 4\lambda^3\nonumber\\
[\phi_1,\sqrt{3} \lambda^8] =& 0  & [\phi_1,[\phi_1,\sqrt{3}\lambda^8]] = &0 \nonumber\\
[\phi_1,\phi_2] =& -i\cos(c_1+c_3-c_2)\bar{\phi}_3 - i \sin(c_1+c_3-c_2)\phi_3  & [\phi_1,[\phi_1,\phi_2]] = &\phi_2 \nonumber\\
[\phi_1,\phi_3] =& -i\cos(c_1+c_3-c_2)\bar{\phi}_2+i\sin(c_1+c_3-c_2)\phi_2  & [\phi_1,[\phi_1,\phi_3]] = & \phi_3\nonumber\\
[\phi_1,\bar{\phi}_1] =& -2i\lambda^3  & [\phi_1,[\phi_1,\bar{\phi}_1]] = &4 \bar{\phi}_1 \nonumber\\
[\phi_1,\bar{\phi}_2] =& i\cos(c_1+c_3-c_2)\phi_3 - i \sin(c_1+c_3-c_2)\bar{\phi}_3   & [\phi_1,[\phi_1,\bar{\phi}_2]] = &\bar{\phi}_2 \nonumber\\
[\phi_1,\bar{\phi}_3] =&i\cos(c_1+c_3-c_2)\phi_2 +i \sin(c_1+c_3-c_2)\bar{\phi}_2   & [\phi_1,[\phi_1,\bar{\phi}_3]] = & \bar{\phi_3}\nonumber\\
[\phi_2,\lambda^3] =& i\bar{\phi}_2&[\phi_2,[\phi_2,\lambda^3]] = &2 \frac{\sqrt{3} \lambda^8 + \lambda^3}{2} \nonumber\\
[\phi_2,\sqrt{3}\lambda^8] =& 3i\bar{\phi}_2  & [\phi_2,[\phi_2,\sqrt{3}\lambda^8]] = & 6 \frac{\sqrt{3} \lambda^8 + \lambda^3}{2}\nonumber\\
[\phi_2,\phi_1] =& i \cos(c_1+c_3-c_2)\bar{\phi}_3 + i \sin(c_1+c_3-c_2){\phi}_3 & [\phi_2,[\phi_2,\phi_1]] = &\phi_1 \nonumber\\
[\phi_2,\phi_3] =& -i\cos(c_3+c_1-c_2)\bar{\phi}_1 - i \sin(c_1+c_3-c_2)\phi_1  & [\phi_2,[\phi_2,\phi_3]] = & \phi_3\nonumber\\
[\phi_2,\bar{\phi}_1] =& i\cos(c_1+c_3-c_2)\phi_3 + i \sin(c_2 - c_1-c_3)\bar{\phi_3}  & [\phi_2,[\phi_2,\bar{\phi}_1]] = & \bar{\phi}_1\nonumber\\
[\phi_2,\bar{\phi}_2] =& -2i \frac{\lambda^3 + \sqrt{3}\lambda^8}{2} & [\phi_2,[\phi_2,\bar{\phi}_2]] = & 4\bar{\phi}_2\nonumber\\
[\phi_2,\bar{\phi}_3] =&  -i\cos(c_1+c_3-c_2)\phi_1 + i \sin(c_1+c_3-c_2)\bar{\phi}_1  & [\phi_2,[\phi_2,\bar{\phi}_3]] = &\bar{\phi}_3 \nonumber\\
[\phi_3,\lambda^3] =& - i \bar{\phi_3}  & [\phi_3,[\phi_3,\lambda^3]] = & 2 \frac{\lambda^3 - \sqrt{3}\lambda^8}{2}\nonumber\\
[\phi_3,\sqrt{3}\lambda^8] =& 3i\bar{\phi_3}  & [\phi_3,[\phi_3,\sqrt{3} \lambda^8]] = & -6 \frac{\lambda^3 - \sqrt{3}\lambda^8}{2}\nonumber\\ 
[\phi_3,\phi_1] =& i \cos(c_1+c_3-c_2)\bar{\phi}_2 - i \sin(c_1+c_3-c_2)\phi_2  & [\phi_3,[\phi_3,\phi_1]] = & \phi_1\nonumber\\
[\phi_3,\phi_2] =& i\cos(c_1+c_3-c_2)\bar{\phi}_1 + i \sin(c_1+c_3-c_2 )\phi_1 & [\phi_3,[\phi_3,\phi_2]] = & \phi_2\nonumber\\
[\phi_3,\bar{\phi}_1] =& -i \cos(c_3+c_1-c_2){\phi}_2 - i \sin(c_3+c_1-c_2)\bar{\phi}_2  & [\phi_3,[\phi_3,\bar{\phi}_1]] = &\bar{\phi}_1 \nonumber\\
[\phi_3,\bar{\phi}_2] =& -i\cos(c_1+c_3-c_2)\phi_1 + i \sin(c_1+c_3-c_2)\bar{\phi}_1  & [\phi_3,[\phi_3,\bar{\phi}_2]] = &\bar{\phi}_2 \nonumber\\
[\phi_3,\bar{\phi}_3] =& 2i \frac{\lambda^3 - \sqrt{3}\lambda^8}{2}  & [\phi_3,[\phi_3,\bar{\phi}_3]] = &4 \bar{\phi}_3\label{eq:31}.
\end{align}
We obtain, using equation (\ref{eq:l1b}), the definition of $\theta$ given in equation (\ref{eq:l1}) and the commutation relations listed in equation (\ref{eq:31}),
\begin{align}
n^3 =& e^{i\phi_3 a_3} e^{i\phi_2 a_2} e^{i \phi_1 a_1} \lambda^3 e^{-i \phi_1 a_1}e^{-i\phi_2 a_2}e^{-i\phi_3 a_3}\nonumber\\
=&e^{i\phi_3 a_3} e^{i\phi_2 a_2}\bigg[\lambda^3 \cos(2a_1) - \bar{\phi}_1\sin(2a _1)\bigg]e^{-i\phi_2 a_2}e^{-i\phi_3 a_3}\nonumber\\
=&e^{i\phi_3 a_3}\bigg[\cos(2a_1)\left(-\frac{\sin(2a_2)}{2} \bar{\phi}_2 + \frac{\cos (2a_2)-1}{4}\sqrt{3}\lambda^8 + \frac{\cos(2a_2) + 3}{4} \lambda^3\right) - \nonumber\\
&\sin (2a_1)(\cos(a_2) \bar{\phi}_1 - \sin a_2 (\phi_3 \cos(c_2 - c_1 - c_3) + \sin(c_2-c_1-c_3) \bar{\phi}_3)) \bigg]e^{-i\phi_3 a_3}\nonumber\\
=&-\cos(2 a _1) \frac{\sin 2a_2}{2}\sin a_3(\phi_1 \cos(c_2-c_3-c_1) - \bar{\phi}_1\sin(c_1+c_3-c_2))- \nonumber\\
&\cos(2a_1) \frac{\sin 2 a_2}{2} \cos a_3 \bar{\phi}_2+\nonumber\\
&\cos(2a_1) \frac{\cos 2a_2-1}{4} \left( - 3\frac{\sin(2a_3)}{2} \bar{\phi}_3 + \sqrt{3} \lambda^8 \frac{1+3\cos(2a_3)}{4} + \frac{3}{4}\lambda^3 (1-\cos 2a_3)\right)+\nonumber\\
&\cos(2a_1) \frac{\cos 2a_2+3}{4} \left(\lambda^3 \frac{\cos(2a_3) +3}{4}+ \sqrt{3}\lambda^8 \frac{1-\cos(2a_3)}{4} + \frac{\sin(2a_3)}{2} \bar{\phi}_3 \right)-\nonumber\\
&\sin (2a_1)\cos a_2 (\bar{\phi}_2 \sin a_3 \sin(c_3+c_1-c_2) + \phi_2 \sin a_3 \cos(c_3+c_1-c_2) + \bar{\phi}_1 \cos a_3)+\nonumber\\
&\sin(2a_1) \sin a_2 \cos(c_2 - c_1-c_3) \phi_3 + \nonumber\\
&\sin (2a_1) \sin a_2 \sin(c_2-c_1-c_3)\left(\cos(2a_3) \bar{\phi}_3 - \frac{\sin (2a_3)}{2}(\lambda^3 - \sqrt{3} \lambda^8)\right).
\end{align}
We also obtain,
\begin{align}
\sqrt{3} n^8 =& e^{ia_3\phi_3}e^{ia_2\phi_2} \sqrt{3}\lambda^8 e^{-ia_2\phi_2}e^{-i \phi_3 a_3} \nonumber\\
=&e^{ia_3 \phi_3}\bigg[\lambda^3 \frac{3 \cos (2a_2) - 3}{4} + \sqrt{3}\lambda^8 \frac{3 \cos(2a_2)+1}{4}- 3\frac{\sin(2a_2)}{2} \bar{\phi}_2 \bigg] e^{-ia_3 \phi_3}\nonumber\\
=&-\frac{3}{2} \sin(2a_2) \sin(a_3)(\phi_1 \cos(c_2-c_3-c_1) - \bar{\phi}_1 \sin(c_1+c_3-c_2)) -\frac{3}{2}\sin(2a_2) \bar{\phi}_2 \cos(a_3)+\nonumber\\
&\frac{3 \cos(2a_2) - 3}{4} (\lambda^3 \frac{\cos(2a_3)+3}{4} + \sqrt{3} \lambda^8 \frac{1-\cos(2a_3)}{4} + \frac{\sin (2a_3)}{2} \bar{\phi}_3) + \nonumber\\
&\frac{3 \cos(2a_2) + 1}{4} (-\frac{3}{2} \sin(2a_3)\bar{\phi}_3 + \sqrt{3} \lambda_8 \frac{1+3\cos (2a_3)}{4} + \frac{3}{4}\lambda_3(1-\cos (2a_3))).
\end{align}
A similar calculation gives
\begin{align}
\theta^\dagger \partial_\mu \theta = & S_{1\mu}+S_{2\mu}+S_{3\mu}+S_{4\mu}, \label{eq:D1}
\end{align}
where
\begin{align}
S_{1\mu} =& e^{-id_3 \lambda^3 -i d_8 \lambda^8} \partial_\mu e^{id_3 \lambda^3 +i d_8 \lambda^8}\nonumber\\
=&i \lambda^3 \partial_\mu d_3 + i \lambda^8 \partial_\mu d_8,
\end{align}
\begin{align}
S_{2\mu} =& e^{-id_3 \lambda^3 -i d_8 \lambda^8}e^{-ia_1\phi_1} \partial_\mu (e^{ia_1\phi_1})e^{id_3 \lambda^3 +i d_8 \lambda^8}\nonumber\\
=&e^{-id_3 \lambda^3 -i d_8 \lambda^8}\bigg[i\phi_1 \partial_\mu a_1 + i \partial_\mu c_1(\cos a_1 \sin a_1 \bar{\phi_1} - \sin^2 a_1 \lambda^3) \bigg] e^{id_3 \lambda^3 +i d_8 \lambda^8},
\end{align}
\begin{align}
S_{3\mu} =& e^{-id_3 \lambda^3 -i d_8 \lambda^8}e^{-ia_1\phi_1} e^{-ia_2\phi_2} \partial_\mu (e^{ia_2\phi_2}) e^{ia_1 \phi_1}e^{id_3 \lambda^3 +i d_8 \lambda^8}\nonumber\\
=&e^{-id_3 \lambda^3 -i d_8 \lambda^8}e^{-i\phi_1 a_1}\nonumber\\
&\phantom{space}\bigg[\partial_\mu a_2 i \phi_2 + i \partial_\mu c_2 (\cos(a_2)\sin(a_2) \bar{\phi}_2 - \frac{\sin^2 a_2}{2} (\lambda^3 + \sqrt{3} \lambda^8))  \bigg]e^{i\phi_1 a_1}e^{id_3 \lambda^3 +i d_8 \lambda^8}\nonumber\\
=&e^{-id_3 \lambda^3 -i d_8 \lambda^8}\bigg[-i \partial_\mu c_2 \frac{\sin^2 a_2}{2}(\sqrt{3}\lambda^8 - \cos(2a_1)\lambda^3 - \sin(2a_1)\bar{\phi}_1)+\nonumber\\
&i \partial_\mu a_2(\cos a_1 \phi_2 - \sin a_1 (\cos(c_2 - c_1-c_3) \bar{\phi}_3 - \sin(c_2 - c_1 - c_3) \phi_3))+\nonumber\\
&i \partial_\mu c_2 \frac{\sin 2a_2}{2} (\cos a_1 \bar{\phi}_2 + \sin a_1(\cos(c_2 - c_1 - c_3)\phi_3 + \sin(c_2 - c_1 - c_3)\bar{\phi}_3))\bigg]e^{id_3 \lambda^3 +i d_8 \lambda^8},
\end{align}
and
\begin{align}
S_{4\mu} =& e^{-id_3 \lambda^3 -i d_8 \lambda^8}e^{-ia_1\phi_1} e^{-ia_2\phi_2}e^{-ia_3\phi_3} \partial_\mu (e^{ia_3\phi_3})e^{ia_2\phi_2} e^{ia_1 \phi_1}e^{id_3 \lambda^3 +i d_8 \lambda^8}\nonumber\\
=&e^{-id_3 \lambda^3 -i d_8 \lambda^8}e^{-ia_1\phi_1} e^{-ia_2\phi_2}\bigg[ \partial_\mu a_3 i \phi_3 + i \partial_\mu c_3 \left(\cos a_3 \sin a_3 \bar{\phi}_3 - \frac{\sin^2a_3}{2} (\lambda^3 - \sqrt{3}\lambda^8)\right)\bigg]\nonumber\\
& e^{ia_2\phi_2} e^{ia_1 \phi_1}e^{id_3 \lambda^3 +i d_8 \lambda^8}\nonumber\\
=&e^{-id_3 \lambda^3 -i d_8 \lambda^8}e^{-ia_1\phi_1}\nonumber\\
&\bigg[i\partial_\mu a_3\left(\phi_3 \cos(a_2) - \sin(a_2)(\bar{\phi}_1 \cos(c_2 - c_3 - c_1) + \phi_1 \sin(c_1+c_3-c_2))\right) + \nonumber\\
&i \partial_\mu c_3\cos a_3 \sin a_3\left( \bar{\phi}_3 \cos a_2 - \sin a_2 (\phi_1 \cos(c_1+c_3-c_2) - \bar{\phi}_1 \sin(c_1 + c_3 - c_2) \right)-\nonumber\\
&i \partial_\mu c_3 \frac{\sin^2 a_3}{2} \left(\lambda^3 - \sqrt{3} \lambda^8 - (\cos(2a_2) - 1) (\sqrt{3} \lambda^8 + \lambda^3) - \sin(2a_2) \bar{\phi}_2\right)
\bigg] \nonumber\\
&e^{ia_1 \phi_1}e^{id_3 \lambda^3 +i d_8 \lambda^8}\nonumber\\
=&e^{-id_3 \lambda^3 -i d_8 \lambda^8}\nonumber\\
&\bigg[i\partial_\mu a_3 \cos a_2 (\cos a_1 \phi_3 + \sin a_1(-\bar{\phi}_2 \cos(c_1 - c_2 + c_3) + {\phi}_2 \sin(c_1 - c_2 + c_3)))-\nonumber\\
&i \partial_\mu a_3 \sin a_2 \cos(c_2 - c_3 - c_1)(\bar{\phi}_1 \cos(2a_1) - \lambda^3 \sin(2a_1))-\nonumber\\
&i \partial_\mu a_3 \sin(a_2) \sin(c_1+c_3-c_2) \phi_1+\nonumber\\
&i\partial_\mu c_3 \frac{\sin 2a_3}{2} \cos a_2 (\cos a_1 \bar{\phi}_3 + \sin a_1(\phi_2 \cos(c_1 - c_2 + c_3) + \bar{\phi}_2 \sin(c_1 - c_2 + c_3)))-\nonumber\\
&i\partial_\mu c_3 \frac{\sin(2a_3)}{2}\sin a_2 \phi_1 \cos(c_2 - c_3 - c_1)+\nonumber\\
&i\partial_\mu c_3 \frac{\sin (2a_3)}{2} \sin a_2 \sin(c_1 + c_3 - c_2)(\cos (2a_1) \bar{\phi}_1 - \sin 2a_1 \lambda^3) - \nonumber\\
&i\partial_\mu c_3 \frac{\sin^2 a_3}{2}\left( (2-\cos(2a_2))(\cos(2a_1)\lambda^3 + \sin 2a_1 \bar{\phi}_1) - \cos(2a_2)\sqrt{3} \lambda^8 \right) -\nonumber\\
 &i \partial_\mu c_3 \frac{\sin^2 a_3}{2} \sin(2a_2) \left(\cos(a_1) \bar{\phi}_2 + \sin a_1 \left(\phi_3\cos(c_1 + c_3 - c_2) - \bar{\phi}_3(c_1 + c_3 - c_2) \right)\right) \bigg]\nonumber\\
& e^{id_3 \lambda^3 +i d_8 \lambda^8}.\label{eq:D2}
\end{align}

\section{Numerical Methods}\label{app:C}

\subsection{Algorithm for fixing $\theta$}
It is simplest to find $\theta$ by solving for the eigenvectors of the Wilson Loop; however we employed a different technique (the method discussed below has the advantage that it can be easily adapted to other rules for choosing $\theta$ discussed in the literature than just the one we use in this paper and to other problems such as gauge fixing and solution of the defining equations. It is also reasonably quick and efficient.).

We wish to find the $\theta$ that solves
\begin{gather}
[\theta^\dagger_x U_{x,\mu} \theta_{x+\hat{\mu}} ,\lambda^j] = 0,
\end{gather}
for each diagonal $\lambda^j$ and gauge link along the Wilson Loops. This is equivalent to solving
\begin{align}
0 = - \tr[(U_{x,\mu} n^j_{x+\hat{\mu}} - n^j_x U_{x,\mu})(U^\dagger_{x,\mu} n^j_{x} - n^j_{x+\hat{\mu}} U^\dagger_{x,\mu})],
\end{align}
which is solved by the $\theta$ which minimises
\begin{gather}
F[\theta] =\sum_j\left[ 4V -  \text{Re}\;\tr(U_{x,\mu} n^j_{x+\hat{\mu}} U^\dagger_{x,\mu} n^j_{x})\right],
\end{gather}
 with
\begin{gather}
n^j_x = \theta \lambda^j \theta^\dagger
\end{gather}
and $\lambda^j = \lambda^3$ or $\lambda^8$ and $\theta \in SU(3)$ (in fact, the solution we want is at $F[\theta] = 0$). The trace is
over both lattice sites and SU(3) indices and $V$ is the lattice volume.

We solve this minimisation problem by combining two different algorithms; one based on molecular dynamics, which is an extension of the steepest descent method, and the second a Newton-Raphson algorithm. The goal is to have the molecular dynamics solve the system to a sufficient accuracy that the Newton-Raphson algorithm will converge, and then Newton-Raphson rapidly finds the solution to a high precision.
 The combination of these two methods, while it may not be the most efficient algorithm, worked well on all the lattice volumes we used. We also employed similar algorithms to gauge fix and solve the defining equations of the Abelian decomposition.

Having found $\theta$, one must subsequently order the columns of $\theta$ and fix the U(1) phases according to the chosen fixing condition. We ordered the columns according to decreasing real part of eigenvalue, and either fixed $\theta$ using $d=0$ or by having the same value for each $\hat{u}$ around the Wilson Loop, depending on the observable.
\subsubsection{Molecular dynamics}
For the molecular dynamics algorithm, we introduce a momentum field
$\pi(x)$ conjugate to $\theta$ and define a fictitious energy
\begin{gather}
E = \frac{1}{2}\tr \pi^2 + F[\theta].
\end{gather}
$\pi$ here represents a traceless Hermitian matrix on each lattice site.

We evolve $\theta$ in a fictitious computer time $\tau$ using
\begin{gather}
\frac{d}{d\tau}\theta_x = i  \theta_x \pi_x,
\end{gather}
or, as $\delta \tau \rightarrow 0$,
\begin{gather}
\theta_x(\tau + \delta \tau) =  \theta_x(\tau) e^{i\delta \tau \pi_x}.
\end{gather}
The second equation of motion is taken from the conservation of energy.
\begin{align}
0 =& \tr \pi \frac{d}{d\tau} \pi + \frac{d}{d\tau} F\nonumber\\
\frac{d}{d\tau}F = & \text{Re} \tr (\pi(x) G(x))\nonumber\\
G'_x =& i \left[ \theta_x^\dagger U_{x,\mu} \theta_{x+\hat{\mu}} \lambda^j \theta^\dagger_{x+\hat{\mu}} U^\dagger_{x,\mu} \theta_x \lambda^j +
\theta_x^\dagger U^\dagger_{x-\hat{\mu},\mu} \theta_{x-\hat{\mu}} \lambda^j \theta^\dagger_{x-\hat{\mu}} U^\dagger_{x-\hat{\mu},\mu} \theta_x \lambda^j\right]-\nonumber\\
&i\left[\lambda^j \theta^\dagger_x U^\dagger_{x-\hat{\mu},\mu} \theta_{x-\hat{\mu}}\lambda^j \theta^\dagger_{x-\hat{\mu}} U_{{x-\hat{\mu}},\mu} \theta_x + \lambda^j \theta^\dagger_x U_{x,\mu} \theta_{x+\hat{\mu}}\lambda^j \theta^\dagger_{x+\hat{\mu}} U^\dagger_{{x},\mu} \theta_x \right].
\end{align}
$G$ is taken to be the hermitian, traceless part of $G'$, and in our particular problem (since the $\lambda^3$ and $\lambda^8$ components of $\theta$ are arbitrary) we also set the diagonal components of $G$ to be zero.
We then evolve $\theta$ and $\pi$ using the Omelyan integrator~\cite{Takaishi:2005tz,Omelyan} for a time step $\delta \tau$
\begin{align}
\pi_x \rightarrow& \pi_x - \alpha \delta \tau G_x\nonumber\\
\theta_x \rightarrow& \theta_x e^{i \frac{\delta \tau}{2} \pi_x}\nonumber\\
\pi_x \rightarrow& \pi_x - (1-2\alpha) \delta \tau G_x\nonumber\\
\theta_x \rightarrow& \theta_x e^{i \frac{\delta \tau}{2} \pi_x}\nonumber\\
\pi_x \rightarrow& \pi_x - \alpha \delta \tau G_x.
\end{align}
This conserves energy up to O($\delta \tau^2$). $\alpha$ should be tuned to have the best conservation of energy for a given $\delta \tau$, while $\delta \tau$ should be the maximum value where the algorithm remains stable (i.e. energy is conserved). We repeat this integrator a number of times to form a trajectory, ending the trajectory when $F[\theta]$ stops decreasing. We select the $\theta$ which had the lowest $F[\theta]$ over the course of the trajectory, and repeat until $F[\theta]$ is below a target threshold.

The trick is to start each trajectory with $\pi = 0$. This means, at least for the first few steps, $\tr \pi^2 $ must increase, and since energy is conserved, $F$ must decrease. While this method cannot be used to find a precise solution for the minimisation of $F[\theta]$, as the force $G$ becomes zero close to the minimum and the method slows down, it rapidly converges towards a solution, and gets close enough to allow the use of more powerful methods. The initial step is equivalent to the steepest descent method; but as the trajectory progresses we incorporate information from the forces at larger distances from the lattice site.

The exponential of the anti-hermitian $3\times 3$ matrix is calculated using
the Caley-Hamilton theorem, following~\cite{Morningstar:2003gk}.
\subsubsection{Newton-Raphson}
Once we have found the minimum to a sufficient accuracy, we employ a Newton
Raphson algorithm to polish it to an arbitrary accuracy.

We define a force according to
\begin{align}
G^a_x =& \frac{\partial F[\theta]}{\partial \theta_x} \theta_x\lambda^a
\nonumber\\
=& - \text{Re}\;\tr i\left[U_{x-\hat{\mu},\mu} \theta_x [\lambda^a,\lambda^j]\theta^\dagger_x U^\dagger_{x-\hat{\mu},\mu} \theta_{x-\hat{\mu}} \lambda^j\theta^\dagger_{x-\hat{\mu}}\right]\nonumber\\
&- \text{Re}\;\tr i\left[U_{x,\mu} \theta_{x+\hat{\mu}} \lambda^j\theta^\dagger_{x+\hat{\mu}} U^\dagger_{x,\mu} \theta_{x}[\lambda^a, \lambda^j]\theta^\dagger_{x}\right].
\end{align}
This force is related to the molecular dynamics force above by $G^a_x = \frac{1}{2} \tr(\lambda^a G_x)$, and is essentially a vector representation of the force rather than a matrix representation.
The force gradient is given by differentiating $F$ twice
\begin{align}
H^{a,b}_{x,y} =& -\delta_{x,y}\text{Re}\;\tr i\left[U_{x-\hat{\mu},\mu} \theta_x [\lambda_b,[\lambda^a,\lambda^j]]\theta^\dagger_x U^\dagger_{x-\hat{\mu},\mu} \theta_{x-\hat{\mu}} \lambda^j\theta^\dagger_{x-\hat{\mu}}\right]\nonumber\\
& -\delta_{x,y}\text{Re}\;\tr i\left[U_{x,\mu} \theta_{x+\hat{\mu}} \lambda^j\theta^\dagger_{x+\hat{\mu}} U^\dagger_{x,\mu} \theta_{x} [\lambda_b,[\lambda^a,\lambda^j]]\theta^\dagger_{x}\right]\nonumber\\
&-\delta_{x-\hat{\mu},y}\text{Re}\;\tr i\left[U_{x-\hat{\mu},\mu} \theta_x [\lambda^a,\lambda^j]\theta^\dagger_x U^\dagger_{x-\hat{\mu},\mu} \theta_{x-\hat{\mu}}[\lambda^b, \lambda^j]\theta^\dagger_{x-\hat{\mu}}\right]\nonumber\\
&-\delta_{x+\hat{\mu},y}\text{Re}\;\tr i\left[U_{x,\mu} \theta_{x+\hat{\mu}} [\lambda^b,\lambda^j]\theta^\dagger_{x+\hat{\mu}} U^\dagger_{x,\mu} \theta_{x}[\lambda^a, \lambda^j]\theta^\dagger_{x}\right].
\end{align}
We then update $\theta$ according to
\begin{gather}
\theta_x \rightarrow \theta_x e^{-i\lambda^a (H^{a,b}_{x,y})^{-1} G^b_y}.
\end{gather}
In this particular problem, the sum over $a$ and $b$ excludes the indices related to the diagonal Gell Mann matrices (where $G$ is zero, and this removes the zero eigenvalues from $H$). $H^{a,b}_{x,y}$ can be easily inverted using a CGNE algorithm with odd/even preconditioning.

This is equivalent to the usual Newton-Raphson minimisation algorithm, and has the usual advantages and disadvantages associated with Newton-Raphson. It has superlinear convergence as long as $H^{a,b}_{x,y}$ contains no small eigenvalues and we start close enough to the desired solution. Restricting the sum over $\lambda^a$ to the off-diagonal elements ensures that the inversion of $H^{a,b}_{x,y}$ is smooth enough, and we did not encounter any difficulties: $F[\theta]$ was reduced to zero within machine precision in two or three iterations.

Once we have found a $\theta$ for which $F=0$, it is straightforward to fix the ordering of the eigenvalues of the Wilson Loop and the $U(N_C-1)$ phase according to the choice of fixing condition. We sorted the eigenvectors according to their real part, and fixed the phase so that $\theta^\dagger U \theta$ is equal for all the links along the Wilson Line (so under a gauge transformation $\theta \rightarrow \Lambda \theta$ without any need for any additional gauge fixing).

\subsection{Numerical solution of the defining equations}
Given $n_j$, we wish to find the solution to the lattice defining equations
\begin{align}
\hat{U}_{x,\mu} n^j_{x+\mu} \hat{U}^\dagger_{x,\mu} - n^j_x = & 0\nonumber\\
i\tr (n^j (\hat{X}_\mu - (\hat{X}_\mu)^\dagger)) = & 0,
\end{align}
for $\hat{U}_{x,\mu} = U_{x,\mu} \hat{X}^\dagger_{x,\mu}$. The solution of these equations is given by the $\hat{X} \in SU(N_C)$ which minimises the functional
\begin{align}
g[\hat{X}_\mu] =& g_1[\hat{X}_\mu] + g_2[\hat{X}_\mu]\nonumber\\
g_2[\hat{X}_\mu]=&\sum_j\left[-(\tr(n^j_x (\hat{X}_{x,\mu} -
\hat{X}_{x,\mu}^\dagger)))(\tr(n^j_x (\hat{X}_{x,\mu} - \hat{X}^\dagger_{x,\mu})))\right]
\nonumber\\
g_1[\hat{X}_\mu] = & \sum_j \tr\left[ (U_{x-\mu,\mu}\hat{X}^\dagger_{x,\mu} n_x^j \hat{X}_{x,\mu}
U^\dagger_{x-\mu,\mu} -
n_{x-\mu}^j)^2\right]
\end{align}
for each direction $\mu$. Each $\hat{X}_\mu$ can be solved independently, although in practice we set up our algorithm to solve them all simultaneously. Clearly, $g[\hat{X}] \ge 0$, with the equality achieved
when the defining equations are satisfied. If there is an exact solution to the defining equations, this procedure will find it. If not, we at least achieve something close to the lattice solution. In practice, by varying the initial guess for $\hat{X}$ we found that there were usually three or four solutions for each $\hat{X}_\mu$. We choose the solution with the largest  $\tr \hat{X}$, a condition which is both gauge invariant and satisfied on those links along the Wilson Loops used to define $\theta$, where $U = \hat{U}$ and $\hat{X} = 1$.

The same minimisation routine is used as in the previous section (solving for $\hat{X}$), combining molecular dynamics with Newton-Raphson minimisation, and while we make no claims that this is the most efficient algorithm, it proved to be efficient enough on our configurations. There were no issues concerning algorithmic stability or convergence.

 The Newton-Raphson force and force Gradient are
\begin{align}
G^a_x = & - 2i (\tr(n^j_x (\hat{X}_{x,\mu} -\hat{X}_{x,\mu}^\dagger)))
(\tr(n^j_x (\hat{X}_{x,\mu}\lambda^a +\lambda^a \hat{X}_{x,\mu}^\dagger)))+\nonumber\\
&2i\tr\left[\left(U_{x-\mu,\mu}\hat{X}^\dagger_{x,\mu} n_x^j \hat{X}_{x,\mu}
U^\dagger_{x-\mu,\mu} -
n_{x-\mu}^j\right)\right.\nonumber\\
&\phantom{space}\left.\left(U_{x-\mu,\mu}\hat{X}^\dagger_{x,\mu} n_x^j \hat{X}_{x,\mu}\lambda^a
U^\dagger_{x-\mu,\mu} - U_{x-\mu,\mu}\lambda^a\hat{X}^\dagger_{x,\mu} n_x^j \hat{X}_{x,\mu}
U^\dagger_{x-\mu,\mu} \right)\right]\nonumber\\
H^{a,b}_{x,x} = & 2 (\tr(n^j_x (\hat{X}_{x,\mu}\lambda^b +\lambda^b\hat{X}_{x,\mu}^\dagger)))
(\tr(n^j_x (\hat{X}_{x,\mu}\lambda^a +\lambda^a \hat{X}_{x,\mu}^\dagger))) + \nonumber\\
 &2 (\tr(n^j_x (\hat{X}_{x,\mu} -\hat{X}_{x,\mu}^\dagger)))
(\tr(n^j_x (\hat{X}_{x,\mu}\lambda^b\lambda^a -\lambda^a\lambda^b \hat{X}_{x,\mu}^\dagger))) - \nonumber\\
&2\tr\left[\left(U_{x-\mu,\mu}\hat{X}^\dagger_{x,\mu} n_x^j \hat{X}_{x,\mu}\lambda^b
U^\dagger_{x-\mu,\mu} - U_{x-\mu,\mu}\lambda^b\hat{X}^\dagger_{x,\mu} n_x^j \hat{X}_{x,\mu}
U^\dagger_{x-\mu,\mu}\right)\right.\nonumber\\
&\phantom{space}\left.\left(U_{x-\mu,\mu}\hat{X}^\dagger_{x,\mu} n_x^j \hat{X}_{x,\mu}\lambda^a
U^\dagger_{x-\mu,\mu} - U_{x-\mu,\mu}\lambda^a\hat{X}^\dagger_{x,\mu} n_x^j \hat{X}_{x,\mu}
U^\dagger_{x-\mu,\mu} \right)\right]
-\nonumber\\
&2\tr\left[\left(U_{x-\mu,\mu}\hat{X}^\dagger_{x,\mu} n_x^j \hat{X}_{x,\mu}
U^\dagger_{x-\mu,\mu} -
n_{x-\mu}^j\right)\right.\nonumber\\
&\phantom{space}\left.\left(-U_{x-\mu,\mu}\lambda^b\hat{X}^\dagger_{x,\mu} n_x^j \hat{X}_{x,\mu}\lambda^a
U^\dagger_{x-\mu,\mu} + U_{x-\mu,\mu}\lambda^a\lambda^b\hat{X}^\dagger_{x,\mu} n_x^j \hat{X}_{x,\mu}
U^\dagger_{x-\mu,\mu} \right)\right] - \nonumber\\
&2\tr\left[\left(U_{x-\mu,\mu}\hat{X}^\dagger_{x,\mu} n_x^j \hat{X}_{x,\mu}
U^\dagger_{x-\mu,\mu} -
n_{x-\mu}^j\right)\right.\nonumber\\
&\phantom{space}\left.\left(U_{x-\mu,\mu}\hat{X}^\dagger_{x,\mu} n_x^j \hat{X}_{x,\mu}\lambda^b\lambda^a
U^\dagger_{x-\mu,\mu} - U_{x-\mu,\mu}\lambda^a\hat{X}^\dagger_{x,\mu} n_x^j \hat{X}_{x,\mu}\lambda^b
U^\dagger_{x-\mu,\mu} \right)\right],
\end{align}
and the molecular dynamics force can be constructed from the Newton-Raphson force using $G_x = \lambda^a G^a_x$.

\bibliographystyle{elsart-num.bst}
\bibliography{weyl}

\end{document}